\newcommand{\Z}{\mathbb{Z}}
\renewcommand{\d}{{\ensuremath{\diamond}}}
\newtheorem{proposition}{Proposition}
\newtheorem{theorem}{Theorem}
\newtheorem{conjecture}{Conjecture}
\title{Unavoidable Sets of Partial Words of\\ Uniform Length}
\author{Joey Becker
\institute{Department of Mathematics\\
University of Nebraska\\
P.O. Box 880130\\
Lincoln, NE 68588--0130, USA}
\and
F. Blanchet-Sadri 
\institute{Department of Computer Science\\
University of North Carolina\\
P.O. Box 26170\\
Greensboro, NC 27402--6170, USA}
\email{blanchet@uncg.edu}
\and
Laure Flapan 
\institute{Department of Mathematics\\
Yale University\\
P.O. Box 203164\\
New Haven, CT 06520, USA}
\and
Stephen Watkins 
\institute{Department of Mathematics\\
Vanderbilt University\\
PMB 352864\\
Nashville, TN 37235, USA}
}
\begin{document}
\maketitle

\begin{abstract}
A set $X$ of partial words over a finite alphabet $A$ is called unavoidable if every two-sided infinite word over $A$ has a factor compatible with an element of $X$. Unlike the case of a set of words without holes, the problem of deciding whether or not a given finite set of $n$ partial words over a $k$-letter alphabet is avoidable is NP-hard, even when we restrict to a set of partial words of uniform length. So classifying such sets, with parameters $k$ and $n$, as avoidable or unavoidable becomes an interesting problem. In this paper, we work towards this classification problem by investigating the maximum number of holes we can fill in unavoidable sets of partial words of uniform length over an alphabet of any fixed size, while maintaining the unavoidability property.

%{\em Keywords}: Combinatorics on words; Partial words; Unavoidable sets.

\end{abstract}

\section{Introduction}

The study of combinatorics on partial words has been developing in recent years (see, e.g., \cite{BSbook}). A partial word is a finite sequence over a finite alphabet $A$, a sequence that may have some undefined positions, called \emph{holes} and denoted by $\d$'s, where the $\d$ symbol is compatible with every letter of $A$. For example, $a{\d}{\d} c{\d} {\d} {\d} {\d} b$ is a partial word with six holes over the alphabet $\{a,b,c\}$. Now let $w$ be a two-sided infinite word and $u$ be a partial word.  Then, $w$ \emph{meets} $u$ if $w$ has a factor compatible with $u$; otherwise, $w$ \emph{avoids} $u$. A set $X$ of partial words over $A$ is {\em unavoidable} if every two-sided infinite word over $A$ meets an element of $X$; otherwise, it is {\em avoidable}. It is important to note that if $X$ is unavoidable, then every infinite unary word has a factor compatible with a member of $X$. Unavoidable sets of partial words were introduced in \cite{BSBrKaPaWe}. In the context of {\em total} words, those without holes, this concept of unavoidable sets has been extensively studied (see, e.g., \cite{Bel,ChHaPe,ChCu,CrLeWe,EvKi,Hig,HiSa,Ros95,Ros98,SaHi}).

There are two major problems that have been identified in the context of unavoidable sets of partial words. The first one is the problem of deciding whether a given finite set of partial words over a $k$-letter alphabet is avoidable, where $k\geq 2$. Unlike for total words, this problem is NP-hard \cite{BSJuPa} (see \cite{ChKa,Lot02} for an algorithm that efficiently decides the avoidability of sets of total words). While several variations of this problem are NP-hard, others are efficiently decidable \cite{BlBSGuRa,BSJiRe}. The second problem is the one of characterizing the unavoidable sets of $n$ partial words over an alphabet of size $k$. As shown in \cite{BSBrKaPaWe}, it is enough to consider the case where $k \leq n$ and when $k\geq 3$, the case where $k<n$. The $n=1$ and $k=1$ cases being trivial, the $n=2, k=2$ case was completely characterized by coloring Cayley graphs \cite{BSBlGuSiWebookchapter}. So the next step is to study the $n=3, k=2$ case. 

A problem, related to the characterization problem, we are concerned with is ``What is the minimum number of holes in an $m$-uniform unavoidable set of partial words (summed over all partial words in the set)?'' By $m$-uniform here, we mean each element in the set has constant length $m$. In \cite{BSChCh}, it was proved that for $m\geq 4$, the minimum number of holes in an $m$-uniform unavoidable set of size three over a binary alphabet is $2m-5$ if $m$ is even, and $2m-6$ if $m$ is odd. An easier way to think of it is the following.

\begin{theorem}\label{lastyear}\cite{BSChCh}
Let $m \geq 4$ and let $X = \{a{\d}^{m-2}a, b{\d}^{m-2}b, a{\d}^{m-2}b\}$ be an unavoidable set over $\{a, b\}$. Then the maximum number of holes we can fill in $X$, while maintaining the unavoidability property, is $m-1$ if $m$ is even, and $m$ if $m$ is odd. 
\end{theorem}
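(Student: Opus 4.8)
The plan is to deduce the statement from the minimum-holes result quoted just above, rephrasing ``maximizing the number of filled holes'' as ``minimizing the number of surviving holes'' and then invoking the known value of that minimum. Throughout I use the standard reformulation via the de Bruijn graph $B$ whose vertices are the length-$(m-1)$ words over $\{a,b\}$ and whose edges are the length-$m$ words, an edge $u_1\cdots u_m$ running from $u_1\cdots u_{m-1}$ to $u_2\cdots u_m$. A two-sided infinite word avoids a set $Y$ exactly when the corresponding bi-infinite path uses only edges incompatible with every member of $Y$; hence $Y$ is unavoidable iff, after deleting every edge compatible with some element of $Y$, the remaining subgraph of $B$ is acyclic. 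I will call the surviving edges \emph{allowed}.

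First I record the easy direction. Any word obtained from $X$ by filling holes still begins and ends with the letters prescribed by $X$, so the three filled words retain the distinct endpoint patterns $(a,a)$, $(b,b)$, $(a,b)$ and in particular form a size-three set of $m$-uniform partial words over $\{a,b\}$. If such a filling is unavoidable, the quoted minimum-holes result applies to it, so it contains at least $2m-5$ holes when $m$ is even and at least $2m-6$ when $m$ is odd. Since $X$ has $3(m-2)=3m-6$ holes in total and filling $h$ of them leaves $3m-6-h$, unavoidability forces
\[
3m-6-h \;\geq\; \begin{cases} 2m-5, & m \text{ even},\\ 2m-6, & m \text{ odd},\end{cases}
\]
that is, $h\leq m-1$ for even $m$ and $h\leq m$ for odd $m$. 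This gives the upper bound.

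For the matching lower bound I must exhibit, for each parity, one filling of $X$ that attains the claimed number of filled holes and remains unavoidable; equivalently, a filling whose surviving-hole count equals the minimum $2m-5$ (resp. $2m-6$) and whose allowed subgraph of $B$ is acyclic. The natural candidates are the extremal sets underlying the minimum-holes theorem: one checks that those sets may be taken to have endpoint patterns $(a,a),(b,b),(a,b)$, so that each is literally a filling of $X$, and then verifies acyclicity directly. The constraints driving the construction come from the short periodic words that any unavoidable filling must block: $a^{\infty}$ and $b^{\infty}$ force the filled interior positions of the $(a,a)$- and $(b,b)$-words to be all $a$'s and all $b$'s respectively (otherwise $a^m$ or $b^m$ survives as a self-loop of $B$). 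The period-two word $(ab)^{\infty}$ is the source of the parity split: its length-$m$ factors change type with the parity of $m$, a factor of endpoint type $(a,b)$ occurring only when $m$ is even, so for even $m$ this word must additionally be blocked by the third word (pinning down part of its interior), whereas for odd $m$ its factors have types $(a,a)$ and $(b,b)$. This is ultimately what makes the odd extremal construction admit exactly one more filled hole than the even one.

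The main obstacle is this last verification: realizing a minimum-hole unavoidable set as a filling of $X$ and confirming acyclicity of the resulting allowed subgraph uniformly in $m$ for each parity. I would organize it by classifying allowed edges according to the endpoint type $(a,a),(a,b),(b,a),(b,b)$ of the underlying length-$m$ word, noting two structural facts: every $(b,a)$-edge is allowed no matter how the holes are filled (its endpoints already clash with all three words), and along any cycle the numbers of $(a,b)$- and $(b,a)$-edges must balance, since passing an edge changes the number of $a$'s in a vertex by $+1$ on a $(b,a)$-edge, by $-1$ on an $(a,b)$-edge, and by $0$ on an $(a,a)$- or $(b,b)$-edge. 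This invariant constrains the possible cycles sharply enough to reduce the acyclicity check to the handful of short periodic words above, so that verifying the extremal filling blocks exactly those completes the argument and recovers the parity-dependent count.
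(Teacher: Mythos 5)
First, a point of reference: the paper does not prove this theorem at all --- it is quoted from \cite{BSChCh} as an ``easier way to think of'' the minimum-holes result ($H^2_{m,3}=2m-5$ for $m$ even, $2m-6$ for $m$ odd), so the only content to supply is exactly the translation you attempt: maximum number of holes filled $=$ total holes $-$ minimum number of holes remaining. Your upper-bound half carries this out correctly and completely: any unavoidable filling of $X$ is an $m$-uniform unavoidable set of size three over $\{a,b\}$ (the three endpoint patterns keep the words distinct), so it retains at least $2m-5$ (resp. $2m-6$) holes, giving $h\le m-1$ (resp. $h\le m$).

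The gap is in your lower-bound half, and it is largely self-imposed. You declare that the ``main obstacle'' is to realize a minimum-hole unavoidable set as a filling of $X$ and then to ``verify acyclicity directly,'' and you only sketch that verification (de Bruijn graph, parity invariants) without carrying it out --- so as written this half is incomplete. But no acyclicity check is needed: the cited result is an equality, so its existence half already hands you an unavoidable $m$-uniform set $Y$ of size three over $\{a,b\}$ with exactly $2m-5$ (resp. $2m-6$) holes, unavoidability included; re-proving it would be redundant. The only statement genuinely left to prove is the one you dismiss with ``one checks'': that $Y$ is, up to renaming, a filling of $X$. That is precisely the Proposition~\ref{prop1} argument: $Y$ must meet $a^{\mathbb{Z}}$, $b^{\mathbb{Z}}$ and $(a^{m-1}b^{m-1})^{\mathbb{Z}}$, and since $|Y|=3$ and endpoints are defined (the paper's standing convention), its three words have endpoint patterns $(a,a)$, $(b,b)$, and $(a,b)$ or $(b,a)$; renaming $a\leftrightarrow b$ if necessary (which preserves unavoidability and hole counts) gives $(a,b)$, and any length-$m$ partial word whose defined endpoints match those of a word of $X$ is automatically a strengthening of that word, since its interior defined positions arise by filling holes of $\d^{m-2}$. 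Then $Y$ is an unavoidable filling of $X$ with $3m-6-(2m-5)=m-1$ (resp. $3m-6-(2m-6)=m$) holes filled, and you are done --- no explicit extremal construction and none of the de Bruijn machinery is required.
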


In this paper, given a $k$-letter alphabet $A_k = \{a_1, \ldots, a_k\}$, we consider subsets of $X_0 = \{a_i{\d}^{m-2}a_j \mid i \leq j\}$. We denote by $H^k_{m,n}$ the minimum number of holes in any unavoidable $m$-uniform set (summed over all partial words in the set) of size $n$ over $A_k$. Thus Theorem~\ref{lastyear} states that for $m\geq 4$, $H^2_{m,3}=2m-5$ if $m$ is even, and $H^2_{m,3}=2m-6$ if $m$ is odd. Without loss of generality, we require that $0, m-1$ are defined positions, i.e., $0, m-1$ are not holes, in each partial word in any unavoidable $m$-uniform set.

The contents of our paper are as follows. In Section~2, we review some background material on unavoidable sets of partial words. We also give the $k+{k\choose2}$ lower bound on the size of an $m$-uniform unavoidable set over $A_k$.  In Section~3, we give results on $m$-uniform unavoidable sets over $A_3$ which are useful to show our main result. In Section~4, we calculate the minimum number of holes in an $m$-uniform unavoidable set $X$ over $A_k$, where $X$ has size exactly $k+{k\choose2}$. In Section~5, we conclude with some remarks. 

\section{Preliminaries on unavoidable sets}

An {\em alphabet} $A$ is a non-empty finite set of {\em letters}. A \emph{finite word} over $A$ is a finite sequence of elements from $A$; in other words, it is a function $w: \{0,\ldots, |w|-1\} \to A$, where $|w|$ denotes the length of $w$. We write $w(i)$ for the letter at position $i$ of $w$ (positions are indexed starting at $0$).  

A \emph{two-sided infinite word} over $A$ is a function $w: \mathbb{Z} \to A$.  It is called $p$-periodic, or has period $p$, if $p$ is a positive integer such that $w(i)=w(i+p)$ for all $i\in \mathbb{Z}$.  For a non-empty finite word $v$, we write $v^\mathbb{Z}$ for the unique two-sided infinite $|v|$-periodic word $w$ such that $w(0)\cdots w(|v|-1)=v$, and we write $v^\mathbb{N}$ for the unique one-sided infinite $|v|$-periodic word $w$ such that $w(0)\cdots w(|v|-1)=v$. A finite word $u$ is a \emph{factor} of a two-sided infinite word $w$ if $w(i)\cdots w(i+|u|-1)=u$ for some $i\in \mathbb{Z}$.

A \emph{(finite) partial word} over $A$ is a function $u: \{0, \ldots, |u|-1\} \to A_\d$, where $A_\d=A\cup \{\d\}$ with $\d\not\in A$.  For $0\le i<|u|$, if $u(i)\in A$ then $i \in D(u)$ or $i$ is defined in $u$; otherwise, $i$ is a hole in $u$.  We write $h(u)$ for the number of holes in $u$. We say $u$ is a {\em total} word when $h(u)=0$. Letting $u$ and $v$ be two partial words of equal length, $u$ is \emph{compatible} with $v$, denoted $u\uparrow v$, if $u(i)=v(i)$ whenever $i\in D(u)\cap D(v)$. 

To {\em strengthen} a partial word is to replace a $\d$ with a letter in $A$, while to {\em weaken} a partial word is to set $u(i)=\d$ for some $i\in D(u)$. For example, $aa{\d}cb$ is a strengthening of $aa{\d}{\d} b$ and $a{\d}{\d}{\d} b$ is a weakening of $aa{\d}{\d} b$.  We say that we have ``filled a hole'' or ``inserted a letter'' in a partial word $u$ to mean that we have strengthened $u$.  We also say that the partial word $v$ is a {\em strengthening} of the partial word $u$, denoted $v\succ u$, if $v$ has a factor strengthening $u$. We similarly define {\em weakening}. 

We extend these notions to sets $X, Y$ of partial words as follows. The set $X$ is a strengthening of $Y$, denoted $X\succ Y$, if for every $x\in X$ there exists $y\in Y$ such that $x\succ y$.  Similarly for $X$ is a weakening of $Y$. It is important to note that if an infinite word $w$ meets a set $X$, then it also meets every weakening of $X$, while if $w$ avoids $X$ then it avoids any strengthening of $X$.  This means that if $X$ is unavoidable, so are all weakenings of $X$, while if $X$ is avoidable, so are all strengthenings of $X$.   

If $X$ is a set of partial words and $Y$ is the set resulting from performing operations on $X$ called factoring (if there exist partial words $x, y \in X$ such that $y$ is a weakening of a factor of $x$, then $Y=X\setminus \{x\}$), prefix-suffix (if there exists a partial word $x=ya\in X$ with $a\in A$ such that for every $b\in A$ there exists a suffix $z$ of $y$ and a partial word $v\in X$ with $v$ a weakening of $zb$, then $Y=(X\setminus \{x\})\cup \{y\}$), hole truncation (if $x{\d}^n \in X$ for some positive integer $n$, then $Y=(X\setminus \{x{\d}^n\})\cup \{x\}$), and expansion ($Y=(X\setminus \{x\})\cup \{x_1, x_2, \ldots, x_n\}$, where $\{x_1, x_2, \ldots, x_n\}$ is a partial expansion on $x\in X$), then $X$ is avoidable if and only if $Y$ is avoidable \cite{BSBrKaPaWe}. If $u = u_1{\diamond}u_2{\diamond} \ldots u_{n-1}{\diamond}u_n$, then $\{u_1 a_1 u_2 a_2 \ldots u_{n-1} a_{n-1}u_n \mid a_1, a_2, \ldots, a_{n-1} \in A\}$ is called a {\em partial expansion} on $u$ (note that $u_1, u_2, \ldots, u_n$ are partial words that may contain holes, and also note that $u$ is a weakening of $v$ for every member $v$ of a partial expansion on $u$). 

Letting $p$ be a prime and $q,m\in \mathbb{N}$, we write $p^q\|m$ if $p^q$ \emph{maximally divides} $m$, i.e., $p^q$ divides $m$, but $p^{q+1}$ does not divide $m$.

We end this section by establishing a lower bound on the size of an $m$-uniform unavoidable set over a $k$-ary alphabet. 

\begin{proposition}
\label{prop1}
There is no non-trivial unavoidable $m$-uniform set of size less than $k+{k\choose2}$ over $A_k$ (we call {\em trivial} any set of partial words containing the empty word or $\diamond^n$ for some positive integer $n$).
\end{proposition}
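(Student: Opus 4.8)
My plan is to prove the statement in the sharp form $|X|\ge k+{k\choose2}$ by exhibiting, for each of the $k+{k\choose2}$ possible ``endpoint types'', a single periodic word that can be met only by an element of that type, and then counting. By the convention recorded just before Section~2 I may assume every $x\in X$ has positions $0$ and $m-1$ defined, so each $x$ carries a well-defined unordered endpoint pair $\{x(0),x(m-1)\}\subseteq A_k$ (this convention together with non-triviality also guarantees that no element is $\d^m$). There are exactly $k+{k\choose2}$ possible endpoint pairs, namely the $k$ ``loops'' $\{a_i,a_i\}$ and the ${k\choose2}$ ``edges'' $\{a_i,a_j\}$ with $i<j$, so it suffices to show that each one is forced to appear among the endpoint pairs of $X$.

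For a loop $\{a_i,a_i\}$ I use the unary word $a_i^{\Z}$: its only length-$m$ factor is $a_i^m$, so any element compatible with a factor of it must have all of its defined positions equal to $a_i$, hence endpoint pair $\{a_i,a_i\}$. For an edge $\{a_i,a_j\}$ with $i<j$ the naive witness $(a_ia_j)^{\Z}$ fails when $m$ is odd, since then positions $0$ and $m-1$ have the same parity and a loop element could also meet it. Instead I take the block word $w_{\{i,j\}}=(a_i^{m-1}a_j^{m-1})^{\Z}$, whose point is the choice of block length exactly $m-1$: for every $t$ the positions $t$ and $t+(m-1)$ lie in consecutive blocks and therefore carry opposite letters, so every length-$m$ factor $f$ of $w_{\{i,j\}}$ satisfies $f(0)\neq f(m-1)$ with both $f(0),f(m-1)\in\{a_i,a_j\}$. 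An element $x$ meeting $w_{\{i,j\}}$ must match such an $f$ at its (defined) endpoints, so its endpoint pair is forced to be exactly $\{a_i,a_j\}$; in particular $x$ is not a loop element and not an edge element of any other pair.

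To conclude, unavoidability of $X$ means $X$ meets each of these $k+{k\choose2}$ witness words, so for every endpoint pair $P$ there is some $x_P\in X$ whose endpoint pair equals $P$. The assignment $P\mapsto x_P$ is injective, because the endpoint pair of a fixed partial word is unique and hence $x_P$ can serve only the pair equal to $\{x_P(0),x_P(m-1)\}$; thus $|X|\ge k+{k\choose2}$, which is the claim. I expect the only genuine obstacle to be the endpoint-forcing argument of the second paragraph: the block construction with block length $m-1$ is precisely what makes the witnesses behave uniformly in the parity of $m$, and it is there that the obvious period-two witness breaks down, so that step needs to be stated and checked carefully.
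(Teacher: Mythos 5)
Your proof is correct and takes essentially the same approach as the paper: the same witness words $a_i^{\mathbb{Z}}$ and $(a_i^{m-1}a_j^{m-1})^{\mathbb{Z}}$, the same appeal to the convention that positions $0$ and $m-1$ are defined, and the same count of distinct elements forced by the $k+{k\choose2}$ endpoint pairs. The only difference is that you spell out explicitly why every length-$m$ factor of the block word has two distinct endpoints (the block-length-$(m-1)$ observation), a detail the paper leaves implicit.
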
 

\begin{proof}
Let $X$ be an $m$-uniform unavoidable set over $A_k$. None of the two-sided infinite words $w_i=a_i^\mathbb{Z}$ and $w_{i,j}=(a_i^{m-1}a_j^{m-1})^{\mathbb{Z}}$, $i < j$, can avoid $X$. Therefore, $X$ must contain an element compatible with a length $m$ factor of $w_i$ (by our convention, that element starts and ends with $a_i$), for each $i$, and an element compatible with a length $m$ factor of $w_{i,j}$ (by our convention, that element starts with $a_i$ and ends with $a_j$ or vice versa), for each $i<j$. Since these elements are distinct, we deduce that $|X| \geq k + {k \choose 2}$.
\end{proof}

\section{Uniform unavoidable sets over the ternary alphabet}

In examining the minimum number of holes in $m$-uniform unavoidable sets over $\{a,b,c\}$, we must consider sets of size at least $3 + {3 \choose 2} = 6$.  As mentioned earlier, we restrict our attention to sets of size exactly six. By the proof of Proposition~\ref{prop1}, $a {\d}^{m-2} a, b {\d}^{m-2} b, c {\d}^{m-2} c$ must be in the set, as well as one of $a{\d}^{m-2}b$ or $b{\d}^{m-2}a$, one of $a{\d}^{m-2}c$ or $c{\d}^{m-2}a$, and one of $b{\d}^{m-2}c$ or $c{\d}^{m-2}b$. There result eight possible sets:
\begin{center}
$\{a {\d}^{m-2} a, b {\d}^{m-2} b, c {\d}^{m-2} c, a {\d}^{m-2} b, a {\d}^{m-2} c, b {\d}^{m-2} c\}$,\\
$\{a {\d}^{m-2} a, b {\d}^{m-2} b, c {\d}^{m-2} c, a {\d}^{m-2} b, a {\d}^{m-2} c, c {\d}^{m-2} b\}$,\\
$\{a {\d}^{m-2} a, b {\d}^{m-2} b, c {\d}^{m-2} c, a {\d}^{m-2} b, c {\d}^{m-2} a, c {\d}^{m-2} b\}$,\\
$\{a {\d}^{m-2} a, b {\d}^{m-2} b, c {\d}^{m-2} c, b {\d}^{m-2} a, a {\d}^{m-2} c, b {\d}^{m-2} c\}$,\\
$\{a {\d}^{m-2} a, b {\d}^{m-2} b, c {\d}^{m-2} c, b {\d}^{m-2} a, c {\d}^{m-2} a, b {\d}^{m-2} c\}$,\\
$\{a {\d}^{m-2} a, b {\d}^{m-2} b, c {\d}^{m-2} c, b {\d}^{m-2} a, c {\d}^{m-2} a, c {\d}^{m-2} b\}$,\\
$\{a {\d}^{m-2} a, b {\d}^{m-2} b, c {\d}^{m-2} c, a {\d}^{m-2} b, c {\d}^{m-2} a, b {\d}^{m-2} c\}$,\\
$\{a {\d}^{m-2} a, b {\d}^{m-2} b, c {\d}^{m-2} c, b {\d}^{m-2} a, a {\d}^{m-2} c, c {\d}^{m-2} b\}$.\\
\end{center}
The last two are avoidable by $(a^{m-1}c^{m-1}b^{m-1})^{\mathbb{Z}}$ and $(b^{m-1}c^{m-1}a^{m-1})^{\mathbb{Z}}$ respectively, while the six others are equivalent up to renamings of letters (in fact, there is an unavoidable $m$-uniform set of minimal size for any total order on the alphabet). So we define the basic $m$-uniform unavoidable set of minimal size over $\{a,b,c\}$ as
$X_0 ={T_0}\cup{T'_0}$, where \begin{center} $T_0 = \{a {\d}^{m-2} a, b {\d}^{m-2} b, c {\d}^{m-2} c\}$ and $T'_0 = \{a {\d}^{m-2} b, a {\d}^{m-2} c, b {\d}^{m-2} c\}$.
\end{center}
The set $T_0$ contains only the words whose endpoints are the same, while $T'_0$ contains only those whose endpoints are different.  We begin by filling in the holes in $X_0$ one at a time to classify which strengthenings preserve unavoidability.  In the rest of the paper, the notation $X_i$ refers to a set created by filling in $i$ holes in $X_0$. 

\subsection{Filling in holes in $T_0$}

When we attempt to strengthen one of the $T_0$ words, we need to consider the following two cases: we can either insert $a$ into $a {\d}^{m-2} a$, say, to obtain $a{\d}^{x_1}a{\d}^{x_2}a$, or we can insert $b$, say, into $a {\d}^{m-2} a$ to obtain $a{\d}^{x_1}b{\d}^{x_2}a$.  However, filling in one hole of  $a {\d}^{m-2} a$ with an $a$ is equivalent to filling in any number of holes in $a {\d}^{m-2} a$ with $a$ if that is the only word we strengthen.

\begin{proposition}
\label{topsamedistinct} 
\begin{enumerate}
\item
For all $q\in \mathbb{N}$, the $m$-uniform set $X_q=(X_0\setminus\{a{\d}^{m-2}a\})\cup \{a{\d}^{x_1}a{\d}^{x_2}a\cdots a{\d}^{x_q}a\}$ is unavoidable.
\item
The $m$-uniform set $X_1=(X_0\setminus\{a {\d}^{m-2}a\})\cup \{a{\d}^{x_1}b{\d}^{x_2}a\}$ is avoidable. 
\end{enumerate}
\end{proposition}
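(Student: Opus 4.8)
The plan is to handle the two parts by opposite strategies: for (1), a proof by contradiction that pins down the only word which could conceivably avoid $X_q$, and for (2), the exhibition of an explicit avoider. The unifying observation I would extract first is that avoiding the five words $b\d^{m-2}b, c\d^{m-2}c, a\d^{m-2}b, a\d^{m-2}c, b\d^{m-2}c$ (i.e.\ $X_0\setminus\{a\d^{m-2}a\}$) forces a two-sided infinite word to be $a^\mathbb{Z}$; everything else follows from whether the word replacing $a\d^{m-2}a$ catches $a^\mathbb{Z}$ or not.

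For (1), suppose toward a contradiction that some two-sided infinite word $w$ over $\{a,b,c\}$ avoids $X_q$; in particular it avoids each of the five words above. I would first argue that $c$ never occurs in $w$: if $w(j)=c$ for some $j\in\mathbb{Z}$, then the length-$m$ factor $w(j-(m-1))\cdots w(j)$ ends in $c$ and begins with $w(j-(m-1))\in\{a,b,c\}$, so it is compatible with one of $a\d^{m-2}c$, $b\d^{m-2}c$, $c\d^{m-2}c$, contradicting avoidance. This step uses that $w$ is two-sided, so the predecessor position $j-(m-1)$ always exists. With $c$ ruled out, the same argument applied to $a\d^{m-2}b$ and $b\d^{m-2}b$ shows $w$ contains no $b$ either, whence $w=a^\mathbb{Z}$. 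But then every length-$m$ factor of $w$ equals $a^m$, which is compatible with $a\d^{x_1}a\d^{x_2}a\cdots a\d^{x_q}a$ (all of whose defined positions carry $a$); so $w$ meets $X_q$, the desired contradiction.

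For (2), the point is that strengthening $a\d^{m-2}a$ with a $b$ destroys precisely the element of $X_0$ responsible for catching the all-$a$ word. I claim $a^\mathbb{Z}$ avoids $X_1$. Every element of $X_1$ possesses a defined position carrying a letter other than $a$: the five unchanged words each contain a defined $b$ or $c$, and the new word $a\d^{x_1}b\d^{x_2}a$ carries a $b$ at position $x_1+1$. Since every length-$m$ factor of $a^\mathbb{Z}$ equals $a^m$, no such factor can be compatible with a word having a defined non-$a$ position. Hence $a^\mathbb{Z}$ avoids every element of $X_1$, and $X_1$ is avoidable.

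The entire content sits in the elimination argument of part (1); parts (1) and (2) then differ only in whether the word replacing $a\d^{m-2}a$ is compatible with $a^m$. The only points needing care are getting the order of elimination right (dispatch $c$ before $b$, since only then are the predecessors confined to $\{a,b\}$) and invoking two-sidedness to supply the predecessor positions. I would finally note that this same reasoning justifies the equivalence remarked before the statement: the conclusion in (1) depends only on $a^m$ being compatible with the strengthened word, so filling any number of holes of $a\d^{m-2}a$ with $a$ — in any arrangement — yields the same unavoidable outcome.
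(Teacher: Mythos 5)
Your proof is correct, and it matches the approach the paper intends: the paper omits an explicit proof of this proposition, but your part~(2) avoider $a^{\mathbb{Z}}$ is precisely the one the paper cites when invoking Proposition~\ref{topsamedistinct} in the proof of Theorem~\ref{mainresult}, and your part~(1) elimination argument (rule out $c$ first via the three words ending in $c$, then $b$, forcing $w=a^{\mathbb{Z}}$, which still meets any strengthening of $a{\d}^{m-2}a$ whose defined positions all carry $a$) is the same reasoning underlying the paper's proof of Proposition~\ref{prop1}. Your closing remark on elimination order is exactly the subtlety that matters, since $c{\d}^{m-2}b\notin X_0$, so $b$ cannot be eliminated before $c$.
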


Now, let us fill in two holes in two words of $T_0$. Recall that strengthenings of avoidable sets are avoidable.  We just noticed that if we want to preserve unavoidability, we cannot fill any of the holes in $a {\d}^{m-2}a$ with $b$.  Furthermore, filling in any number of the holes in $a {\d}^{m-2} a$ with $a$'s preserves unavoidability. The only remaining way to fill in two of the holes in $T_0$ is when one hole from $a {\d}^{m-2}a$ is filled with an $a$ and one hole from $b {\d}^{m-2}b$ is filled with a $b$.

\begin{proposition} 
\label{prop3}
Let $X_2=(X_0\setminus \{a{\d}^{m-2}a, b{\d}^{m-2}b\})  \cup \{a{\d}^{x_1}a{\d}^{x_2}a, b{\d}^{y_1}b{\d}^{y_2}b\}$ be an $m$-uniform set.  Furthermore, let $2^r\|x_1+1$, $2^s\|y_1+1$, and $2^t\|m-1$.  Then, $X_2$ is avoidable if and only if $m$ is odd and $r=s<t$.
\end{proposition}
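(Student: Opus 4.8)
The plan is to collapse the avoidability of $X_2$ to a cyclic two‑colouring problem and then settle that problem by a $2$‑adic computation. Write $d=m-1$, $g_1=x_1+1$, $h_1=y_1+1$, and let $v_2$ denote the $2$‑adic valuation, so $v_2(g_1)=r$, $v_2(h_1)=s$, $v_2(d)=t$. A length‑$m$ factor compatible with $a{\d}^{x_1}a{\d}^{x_2}a$ forces $a$'s at relative offsets $0,g_1,d$; the $b$‑word forces $b$'s at $0,h_1,d$; and the four words of $T'_0\cup\{c{\d}^{m-2}c\}$ forbid exactly the ordered pairs $(w(i),w(i+d))\in\{(a,b),(a,c),(b,c),(c,c)\}$, i.e.\ (ordering $a<b<c$) they demand $w(i+d)\le w(i)$ with $c$ never repeated at distance $d$. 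Since avoiding a finite set of partial words is a shift of finite type (replace each partial word by its finitely many total strengthenings), $X_2$ is avoidable iff some periodic two‑sided word avoids it. For a $p$‑periodic avoider $w$, along every progression $\{i_0+kd\}$ the values are non‑increasing and periodic in $k$, hence constant, and the excluded pair $(c,c)$ rules out the constant value $c$; thus $w$ omits $c$ and has $d$ as a period. Restricting to period‑$d$ words over $\{a,b\}$, the identity $w(i+d)=w(i)$ turns each triple constraint into a pair constraint, and I reach the core problem: $X_2$ is avoidable iff $\Z_d$ admits an $\{a,b\}$‑colouring with no two $a$'s at cyclic distance $g_1$ and no two $b$'s at cyclic distance $h_1$.

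Next I would phrase this colouring problem through the subgroup $\langle g_1,h_1\rangle\le\Z_d$ and the lattice $\Lambda=\{(\gamma,\delta)\in\Z^2:\gamma g_1+\delta h_1\equiv 0\ (\mathrm{mod}\ d)\}$, with $E=\{(\gamma,\delta):\gamma+\delta\text{ even}\}$. For sufficiency, when $\Lambda\subseteq E$ the rule $\chi(\alpha g_1+\beta h_1)=a$ if $\alpha+\beta$ is even and $b$ otherwise is a well‑defined function on $\langle g_1,h_1\rangle$ (well‑definedness is exactly $\Lambda\subseteq E$); since a $g_1$‑step increments $\alpha$ and an $h_1$‑step increments $\beta$, each such edge is bichromatic, so both ``no $aa$ at distance $g_1$'' and ``no $bb$ at distance $h_1$'' hold, and $\chi$ extends over the remaining cosets by translation. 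For necessity I would read the constraints as a $2$‑SAT instance on the variables $w(i)$: the forcing arrows are $a\xrightarrow{\pm g_1}b$ and $b\xrightarrow{\pm h_1}a$, every arrow flips the colour, and $w(i){=}a$ reaches $w(i){=}b$ in the implication digraph precisely when there is an odd‑length alternating $g_1/h_1$ closed walk at $i$, equivalently when some $(\gamma,\delta)\in\Lambda$ has $\gamma+\delta$ odd. Hence the instance is satisfiable iff $\Lambda\subseteq E$, and so $X_2$ is avoidable iff $\Lambda\subseteq E$.

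It then remains to decide $\Lambda\subseteq E$ arithmetically. A Hermite‑normal‑form computation gives $\Lambda=\langle(n_g,0),(\gamma_1,\delta_1)\rangle$, where $n_g=d/\gcd(g_1,d)$, $\delta_1=\gcd(g_1,d)/\gcd(g_1,h_1,d)$, and $\gamma_1$ solves $\gamma_1 g_1\equiv-\delta_1 h_1\ (\mathrm{mod}\ d)$; because $n_g$ is even whenever $r<t$, the residue of $\gamma_1$ modulo $n_g$ does not change the parity of $\gamma_1+\delta_1$ in the case that matters. Thus $\Lambda\subseteq E$ iff $n_g$ is even and $\gamma_1+\delta_1$ is even. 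Using $v_2(\gcd(g_1,d))=\min(r,t)$ and the like, one computes $v_2(n_g)=(t-r)^+$, $v_2(\delta_1)=(r-s)^+$ and $v_2(\gamma_1)=(s-r)^+$, the last after comparing $v_2(\delta_1 h_1)=\max(r,s)$ with $t$. Hence $n_g$ is even iff $r<t$, and $\gamma_1+\delta_1$ is even iff $r=s$, so $\Lambda\subseteq E$ holds exactly when $r=s<t$ (which forces $t\ge 1$, i.e.\ $m$ odd). For the avoidable direction I would also exhibit the witness directly: colour $i\in\Z$ by the parity of $\lfloor i/2^r\rfloor$; since adding $g_1$ or $h_1$ flips bit $r$ (as $v_2(g_1)=v_2(h_1)=r$) and $2^{r+1}\mid d$, this word is $d$‑periodic, omits $c$, and carries unequal letters at distances $g_1$ and $h_1$, so it avoids every member of $X_2$.

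The main obstacle is the necessity half of the cyclic‑colouring criterion together with the $2$‑adic bookkeeping. Proving that $c$ is genuinely useless, i.e.\ that every periodic avoider already has period dividing $d$, is the step that makes the reduction honest; and establishing that the implication digraph collapses exactly to the lattice condition $\Lambda\subseteq E$ requires care with the realizability of odd alternating closed walks, which I would handle by padding a given $(\gamma,\delta)$ with cancelling excursions of the correct parity. Finally, the valuation of the diagonal generator $(\gamma_1,\delta_1)$ must be tracked in every regime of $r,s,t$: the split between $\max(r,s)<t$ and $\max(r,s)\ge t$ is what rules out avoidability whenever $r\ne s$ or $\min(r,s)\ge t$, leaving only the balanced case $r=s<t$.
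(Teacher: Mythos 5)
Your proposal is correct, and it takes a genuinely different route from anything in the paper: the paper states Proposition~\ref{prop3} without proof, and its nearest relative, the proof of Proposition~\ref{iff}, works by first applying the avoidability-preserving operations of \cite{BSBrKaPaWe} (factoring, prefix--suffix, hole truncation, expansion) and then deriving a contradiction from an lcm/parity count. Your argument replaces that machinery with three self-contained steps: (i) avoidability of a finite set is witnessed by a periodic word (shift-of-finite-type argument), and the distance-$(m-1)$ constraints force any periodic avoider to be $(m-1)$-periodic and $c$-free, collapsing the problem to a $2$-colouring of $\Z_{m-1}$ with no two $a$'s at distance $x_1+1$ and no two $b$'s at distance $y_1+1$; (ii) such a colouring exists iff the lattice $\Lambda=\{(\gamma,\delta):\gamma(x_1+1)+\delta(y_1+1)\equiv 0 \pmod{m-1}\}$ lies in the even-sum sublattice $E$; (iii) a $2$-adic computation shows $\Lambda\subseteq E$ iff $r=s<t$. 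The necessity half of step (ii) is the same germ as the paper's lcm/parity contradiction in Proposition~\ref{iff} (an identity $\alpha(x_1+1)+\beta(y_1+1)\equiv 0$ with $\alpha+\beta$ odd), but your formulation upgrades it to a clean iff criterion and yields, as a bonus, an explicit avoiding word ($i\mapsto$ parity of $\lfloor i/2^r\rfloor$) in the avoidable case. One caveat on step (iii): the valuation formulas $v_2(\delta_1)=(r-s)^+$ and $v_2(\gamma_1)=(s-r)^+$ as literally stated hold only when $r\le t$, respectively $\max(r,s)<t$; in the regime $r<t\le s$ one should instead argue that $\delta_1$ is odd while every admissible $\gamma_1$ is even (since $2^t$ divides $\delta_1(y_1+1)$, forcing $2^{t-r}$ to divide $\gamma_1$), so the diagonal generator still has odd coordinate sum. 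You flag this split, and the final criterion is right in every regime, so this is a matter of tightening the write-up rather than a gap.
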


Next, we address what happens when we fill in one hole in each of the three words in $T_0$. From \cite[Lemma~3]{BSBrKaPaWe}, there exists a two-sided infinite word $w_i$ over $\{a,b,c\}$ with period $m-1$ that avoids $\{a{\d}^ia, b{\d}^ib, c{\d}^ic\}$ for every $i\le \lfloor{\frac{m-3}{2}}\rfloor$. However, since $w_i$ has period $m-1$, $w_i$ avoids the set $$Z={T'_0} \cup \{a {\d}^{i} a{\d}^{m-i-3}a,b {\d}^{i} b{\d}^{m-i-3}b, c {\d}^{i} c{\d}^{m-i-3}c\}.$$

\begin{proposition}\label{three} Any $m$-uniform set of the form $X_3=T_3 \cup T'_0$, where $T_3=\{a {\d}^{x_1} a{\d}^{x_2}a, b {\d}^{y_1} b{\d}^{y_2}b, c {\d}^{z_1} c{\d}^{z_2}c\}$,  is avoidable.
\end{proposition}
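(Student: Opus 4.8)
The plan is to prove avoidability by exhibiting a single two-sided infinite word of period $m-1$ that misses every element of $X_3$, thereby reducing the whole statement to a short combinatorial coloring problem on $\mathbb{Z}_{m-1}$.

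First I would restrict attention to words of the form $w=v^{\mathbb{Z}}$ with $|v|=m-1$. The point of this choice is that $T'_0$ then comes for free: in any length-$m$ factor of $w$ the letters at positions $0$ and $m-1$ coincide (their indices are congruent modulo $m-1$), whereas each of $a{\d}^{m-2}b,\,a{\d}^{m-2}c,\,b{\d}^{m-2}c$ demands distinct endpoints. So every period-$(m-1)$ word avoids $T'_0$, and it remains only to avoid the three words of $T_3$. Since this argument is insensitive to which letter is which, I am moreover free to permute $a,b,c$ when building $v$.

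Next I would reduce each $T_3$ word to a much shorter pattern. The length-$(x_1+2)$ prefix of $a{\d}^{x_1}a{\d}^{x_2}a$ is $a{\d}^{x_1}a$ and its length-$(x_2+2)$ suffix is $a{\d}^{x_2}a$; since any occurrence of the full word begins with a factor compatible with $a{\d}^{x_1}a$ and ends with one compatible with $a{\d}^{x_2}a$, avoiding either short pattern rules out the full word. It thus suffices to avoid the shorter one, $a{\d}^{i_a}a$ with $i_a=\min(x_1,x_2)\le\lfloor\frac{m-3}{2}\rfloor$, and similarly $b{\d}^{i_b}b$ and $c{\d}^{i_c}c$. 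For $w=v^{\mathbb{Z}}$, avoiding $a{\d}^{i_a}a$ is exactly the condition that the cyclic word $v$ contain no two $a$'s at distance $i_a+1$ modulo $m-1$; because $i_a+1\le\lfloor\frac{m-1}{2}\rfloor$ there is no ambiguity between a difference and its complement. Hence the proposition reduces to the following: color $\mathbb{Z}_{m-1}$ with $\{a,b,c\}$ so that no two positions of color $a$ differ by $i_a+1$, none of color $b$ by $i_b+1$, and none of color $c$ by $i_c+1$. This is precisely the regime of the period-$(m-1)$ words $w_i$ of \cite[Lemma~3]{BSBrKaPaWe}, and in the symmetric case $i_a=i_b=i_c$ such a $w_i$ already does the job (it is the word avoiding the set $Z$ noted above).

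The remaining, and main, work is to produce the coloring for arbitrary, possibly distinct, gaps $i_a+1,i_b+1,i_c+1$. I would construct $v$ by hand according to the relative sizes of these gaps: a letter whose forbidden distance is comparatively large can be placed in a single contiguous block, so that all distances between its occurrences are smaller than the forbidden one, while a letter whose forbidden distance is small must instead be spread out (or alternated) so that its occurrences never land exactly that distance apart. The difficulty, and where I expect the real effort to go, is that these three circulant constraints must be satisfied simultaneously around one cycle: a naive left-to-right greedy assignment can get stuck exactly when the positions $p-(i_a+1),\,p-(i_b+1),\,p-(i_c+1)$ already carry all three colors, and one must also close the pattern consistently across the seam of $\mathbb{Z}_{m-1}$. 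I anticipate handling this by a short case analysis on the orderings of the gaps (and on the relevant parities and the values of $\gcd(m-1,\,i_\bullet+1)$), building $v$ explicitly in each case and verifying the three avoidance conditions by inspection; once such a $v$ exists, $w=v^{\mathbb{Z}}$ avoids all of $X_3$ and the proposition follows.
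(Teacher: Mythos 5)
Your reduction is correct, and it is exactly the paper's own setup: restricting to words $v^{\Z}$ of period $m-1$ disposes of $T'_0$ automatically, and since any occurrence of $a{\d}^{x_1}a{\d}^{x_2}a$ contains occurrences of both $a{\d}^{x_1}a$ and $a{\d}^{x_2}a$, avoidability of $X_3$ follows from the existence of a $3$-coloring of $\mathbb{Z}_{m-1}$ in which no two positions of color $a$ differ by $i_a+1=\min(x_1,x_2)+1$, and similarly for $b$ and $c$. This is precisely the role of \cite[Lemma~3]{BSBrKaPaWe} in the discussion preceding the proposition, which supplies such a coloring when the three forbidden differences coincide (the set $Z$).

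However, the proposition allows arbitrary $x_1,y_1,z_1$, so the three forbidden differences are in general pairwise distinct, and the entire mathematical content of the statement is the existence of the coloring in that general case. Your proposal stops exactly there: ``construct $v$ by hand,'' ``a short case analysis on the orderings of the gaps,'' ``verifying \dots by inspection'' --- no construction is actually exhibited, and no argument shows that the anticipated cases are exhaustive. This cannot be waved away, for three reasons. First, the word $w_i$ of the cited lemma genuinely depends on $i$: a single period-$(m-1)$ word avoiding $\{a{\d}^ia,b{\d}^ib,c{\d}^ic\}$ simultaneously for every $i\le\lfloor\frac{m-3}{2}\rfloor$ cannot exist once $m\ge 5$, since each color class of $\mathbb{Z}_{m-1}$ would then need all pairwise circular distances to exceed $\lfloor\frac{m-1}{2}\rfloor$. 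Second, the paper's strengthening/weakening machinery does not transport avoidability from $Z$ (gaps $i,i,i$) to $X_3$ (gaps $x_1,y_1,z_1$), because $a{\d}^{x_1}a{\d}^{x_2}a$ is a strengthening of $a{\d}^{i}a{\d}^{m-i-3}a$ only when $x_1=i$; so the special case does not imply the general one. Third, your own remarks show the difficulty is real: greedy assignment can deadlock, the seam of $\mathbb{Z}_{m-1}$ must be closed, and the natural block construction (one interval of length at most $i_\bullet+1$ per letter) only covers triples with $(i_a+1)+(i_b+1)+(i_c+1)\ge m-1$, while spreading constructions require parity and $\gcd$ conditions you never verify. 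In short, what you have is a faithful restatement of the problem as a simultaneous circulant-coloring problem, together with the equal-gaps case already present in the paper; the general construction --- which is the actual proof --- is missing.
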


\subsection{Filling in holes in $T'_0$}

First, there are two cases to consider when filling a hole in $a {\d}^{m-2} b$: the added letter is distinct from both $a$ and $b$, or it is  one of $a$ or $b$. For the former case, the $m$-uniform set $(X_0 \setminus \left\{a {\d}^{m-2} b\right\}) \cup
\left\{a {\d}^{x_1} c {\d}^{x_2} b\right\}$ is avoidable by the infinite word $(a^{m-1}b^{m-1})^\mathbb{Z}$. For the latter case, the following proposition holds.

\begin{proposition}\label{algo1prop}
\begin{enumerate}
\item If $x_2+1 \not\equiv 0 \pmod{x_1+1}$, the $m$-uniform set $X_1 = (X_0 \setminus \left\{a {\d}^{m-2} c\right\}) \cup
\left\{a {\d}^{x_1} c {\d}^{x_2} c\right\}$ is avoidable. Otherwise, $x_2 \geq x_1$ and $w = (a^{x_1+1}b^{x_2+1}c^{x_1+1}a^{x_2+1}b^{x_1+1}c^{x_2+1})^\mathbb{Z}$ avoids $X_1$.
\item If $x_1+1 \not\equiv 0 \pmod{x_2+1}$, then the $m$-uniform set $X_1 = (X_0 \setminus \left\{a {\d}^{m-2} c\right\}) \cup
\left\{a {\d}^{x_1} a {\d}^{x_2} c\right\}$ is avoidable. Otherwise, $x_1 \geq x_2$ and $w = (a^{x_1+1}b^{x_2+1}c^{x_1+1})^\mathbb{Z}$ avoids $X_1$.
\end{enumerate}
Consequently, if $(X_0\setminus \{a{\d}^{m-2}c\})\cup\{x\}$, where $x\uparrow a{\d}^{m-2}c$, is unavoidable, then $x$ has no interior defined positions.  
\end{proposition}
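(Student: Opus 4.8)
The plan is to argue the contrapositive: assuming $x$ has at least one interior defined position, I will show that $(X_0\setminus\{a{\d}^{m-2}c\})\cup\{x\}$ is avoidable. The engine is the principle recalled in Section~2 that every strengthening of an avoidable set is again avoidable. This lets me replace the arbitrarily-filled word $x$ by a much simpler word carrying a single interior letter, and then invoke parts~1 and~2 of the present proposition.

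Concretely, suppose position $i$ with $1\le i\le m-2$ is defined in $x$, and set $\ell=x(i)\in\{a,b,c\}$. Let $x'=a{\d}^{\,i-1}\ell{\d}^{\,m-i-2}c$ be the weakening of $x$ obtained by turning every interior position other than $i$ back into a hole. Then $x\succ x'$, so with $Y=(X_0\setminus\{a{\d}^{m-2}c\})\cup\{x'\}$ the set $(X_0\setminus\{a{\d}^{m-2}c\})\cup\{x\}$ is a strengthening of $Y$ (each shared word strengthens itself and $x$ strengthens $x'$). Hence it suffices to prove that $Y$ is avoidable, which reduces the whole statement to the single-interior-letter case, to be handled by the value of $\ell$.

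I would then split into the three values of $\ell$. Writing $x_1=i-1$ and $x_2=m-i-2$, the case $\ell=a$ gives $x'=a{\d}^{x_1}a{\d}^{x_2}c$, so $Y$ is avoidable by part~2 (for every $x_1,x_2$ the divisibility dichotomy there only selects which periodic word witnesses avoidability), and the case $\ell=c$ gives $x'=a{\d}^{x_1}c{\d}^{x_2}c$, covered identically by part~1. The remaining case $\ell=b$ is the ``distinct interior letter'' situation: here $x'=a{\d}^{x_1}b{\d}^{x_2}c$ and I claim $(a^{m-1}c^{m-1})^{\mathbb{Z}}$ avoids $Y$. This word contains no $b$, so it automatically avoids $b{\d}^{m-2}b$, $a{\d}^{m-2}b$, $b{\d}^{m-2}c$ and $x'$; and since it has period $2(m-1)$ with $a$-runs and $c$-runs each of length exactly $m-1$, shifting a position by $m-1$ always swaps an $a$-run entry for a $c$-run entry, so no length-$m$ factor can be compatible with $a{\d}^{m-2}a$ or $c{\d}^{m-2}c$.

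Combining the three cases, $Y$ is avoidable for every interior position $i$ and every letter $\ell$, and therefore so is its strengthening $(X_0\setminus\{a{\d}^{m-2}c\})\cup\{x\}$; the contrapositive is exactly the claimed statement. I expect the main obstacle to be conceptual rather than computational: one must check that the reduction runs in the correct direction (that filling extra holes in $x'$ really produces a \emph{strengthening}, so that avoidability is inherited) and that the three subcases are genuinely exhaustive and each \emph{unconditionally} avoidable --- in particular that parts~1 and~2 assert avoidability in both branches of their divisibility condition, so that no single interior fill with $a$ or $c$ can ever rescue unavoidability. The only verification not already supplied by the proposition is the $\ell=b$ avoiding word, which is the analogue for $a{\d}^{m-2}c$ of the $(a^{m-1}b^{m-1})^{\mathbb{Z}}$ witness used at the start of this subsection for $a{\d}^{m-2}b$.
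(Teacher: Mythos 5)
Your handling of the final ``Consequently'' clause is correct, and it is the natural argument: the reduction runs in the right direction (the full set is a strengthening of $Y=(X_0\setminus\{a{\d}^{m-2}c\})\cup\{x'\}$, and strengthenings of avoidable sets are avoidable), the case split on $\ell\in\{a,b,c\}$ is exhaustive, and your witness $(a^{m-1}c^{m-1})^{\mathbb{Z}}$ for $\ell=b$ checks out: it contains no $b$, and since its runs have length exactly $m-1$, any position and the position $m-1$ to its right lie in runs of opposite letters, so $a{\d}^{m-2}a$ and $c{\d}^{m-2}c$ are avoided as well. (The paper states this proposition without printing a proof, so there is no line-by-line comparison to make; its remark about $(a^{m-1}b^{m-1})^{\mathbb{Z}}$ just before the proposition indeed suggests exactly this treatment of the distinct-letter case.)

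The genuine gap is that you prove only the last sentence of the proposition while invoking parts~1 and~2 --- which are themselves assertions of the statement under proof --- as if they were available facts (``avoidable by part~2 \ldots covered identically by part~1''). Nothing in your proposal establishes: (i) avoidability of $X_1$ when the divisibility condition fails, which is the substantive content; for instance, in part~2 this includes every case with $x_1<x_2$, where the displayed word $(a^{x_1+1}b^{x_2+1}c^{x_1+1})^{\mathbb{Z}}$ no longer works (its $b$-run is longer than its $a$-run, so shifting some $b$ by $m-1$ lands on a $b$ and $b{\d}^{m-2}b$ is met), so an avoiding word must actually be constructed there; (ii) that divisibility forces $x_2\ge x_1$ (resp.\ $x_1\ge x_2$); and (iii) that the two displayed periodic words really avoid their sets in the divisible branch, a finite but nontrivial run-by-run verification. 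Your argument is not circular --- the consequence is logically downstream of parts~1 and~2 --- but it proves only the easiest component of the statement; the two numbered items, which carry essentially all the difficulty, are left entirely unproved.
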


Filling holes in the words $a {\d}^{m-2} b$ and $b {\d}^{m-2} c$ is not as simple as filling holes in $a {\d}^{m-2} c$ while maintaining unavoidability.  We know that  inserting a letter different from the endpoints of the word into which it was inserted makes the resulting set avoidable.  Thus we only consider the case when we insert a letter that is the same as one of the endpoints. If we
insert one letter into $a {\d}^{m-2} b$, then no $c$ can appear in an avoiding word $w$.  This is because $w$ must avoid $a {\d}^{m-2} c$, $b {\d}^{m-2} c$, and $c {\d}^{m-2} c$.  Thus if $w$ were to contain a $c$, there would be no possible letter for the position $m-1$ spaces before the $c$.   Likewise, if we insert one letter into $b {\d}^{m-2} c$, any word $w$ which
contains an $a$ must meet one of $a {\d}^{m-2} a$, $a {\d}^{m-2} b$, or $a {\d}^{m-2} c$.  So in both of these cases, we are
reduced to the use of a binary alphabet. 

\begin{proposition}\label{harder} 
If any of the following conditions 1--4 hold, then the $m$-uniform set $X_1$ is unavoidable if and only if $r\le s$:
\begin{enumerate}
\item
$X_1=(X_0\setminus \{a{\d}^{m-2}b\}) \cup \{a{\d}^{x_1}b{\d}^{x_2}b\}$, and $2^s\|m-1$ and $2^r\|x_1+1$.  

\item
$X_1=(X_0\setminus \{a{\d}^{m-2}b\}) \cup \{a{\d}^{x_1}a{\d}^{x_2}b\}$, and $2^s\|m-1$ and $2^r\|x_2+1$.  

\item 
$X_1=(X_0\setminus \{b{\d}^{m-2}c\}) \cup \{b{\d}^{x_1}b{\d}^{x_2}c\}$, and $2^s\|m-1$ and $2^r\|x_2+1$.  

\item
$X_1=(X_0\setminus \{b{\d}^{m-2}c\}) \cup \{b{\d}^{x_1}c{\d}^{x_2}c\}$, and $2^s\|m-1$ and $2^r\|x_1+1$.  
\end{enumerate}

\end{proposition}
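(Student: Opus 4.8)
My plan is to reduce all four cases to a single question about binary words and then settle that question by a short computation in $\Z/2(m-1)\Z$. The discussion just before the statement already shows that any two-sided infinite word avoiding $X_1$ must omit one of the three letters: in cases 1 and 2 it cannot contain $c$, and in cases 3 and 4 it cannot contain $a$. Every word that does contain the forbidden letter meets $X_1$, and the words of $X_1$ mentioning that letter can never be met by a word that omits it; hence $X_1$ is unavoidable over $A_3$ if and only if the three-word set obtained by deleting those words is unavoidable over the surviving two-letter alphabet. In cases 1,2 this set is $\{a{\d}^{m-2}a,\ b{\d}^{m-2}b,\ u\}$ over $\{a,b\}$, where $u$ is the single strengthened word, and in cases 3,4 it is the analogous set over $\{b,c\}$. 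So it suffices to decide unavoidability of a three-word binary set.

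The next step is to describe the words $w$ over a two-letter alphabet avoiding both same-endpoint words, say $a{\d}^{m-2}a$ and $b{\d}^{m-2}b$. Over a binary alphabet this means $w(i)\neq w(i+m-1)$ for every $i$, i.e. $w(i+m-1)=\overline{w(i)}$ (the complement in $\{a,b\}$); thus $w$ is anti-periodic of anti-period $m-1$ and in particular $2(m-1)$-periodic. The crucial simplification is that, under this anti-periodicity, the outer constraint of the third word is automatic. In case 1 the word is $a{\d}^{x_1}b{\d}^{x_2}b$: if $w(p)=a$ then $w(p+m-1)=b$ for free, so $w$ meets it exactly when $w(p)=a$ and $w(p+x_1+1)=b$ for some $p$; avoiding it is the closure condition $w(p)=a\Rightarrow w(p+x_1+1)=a$, which for $S=\{p:w(p)=a\}$ reads $S+(x_1+1)=S$. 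For the left-endpoint insertions (cases 2,3) the same reasoning first produces a constraint $S+(x_1+1)\subseteq S^c$, which I will convert, using $S^c=S+(m-1)$ and $x_2+1=(m-1)-(x_1+1)$, into the closure $S+(x_2+1)=S$. This is exactly why the governing parameter is $x_1+1$ in cases 1,4 and $x_2+1$ in cases 2,3.

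With the constraints packaged this way, the existence of an avoiding word becomes: is there $S\subseteq\Z/N\Z$, $N=2(m-1)$, with $S+d=S$ and $S+(m-1)=S^c$, where $d\in\{x_1+1,x_2+1\}$ is the parameter of the case? Put $g=\gcd(d,N)$. The condition $S+d=S$ says $S$ is a union of cosets of the subgroup $\langle g\rangle$, so $S$ descends to a subset $J$ of the quotient $\Z/g\Z$, and $S+(m-1)=S^c$ descends to $J+((m-1)\bmod g)=J^c$. Such a $J$ exists if and only if the permutation $j\mapsto j+(m-1)$ of $\Z/g\Z$ has all cycles of even length, equivalently $g\nmid(m-1)$: if $g\mid(m-1)$ then $m-1\equiv 0$ and $J+0=J\neq J^c$, while if $g\nmid(m-1)$ the common cycle length $g/\gcd(g,m-1)$ is even and one two-colours each cycle.

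The last step, and the only genuinely arithmetic one, translates $g\mid(m-1)$ into the stated parity of $2$-adic valuations. Writing $2^r\|d$ and $2^s\|m-1$, the odd part of $g=\gcd(d,2(m-1))$ divides the odd part of $m-1$, so divisibility is controlled entirely by powers of $2$; since the $2$-adic valuation of $g$ equals $\min(r,s+1)$, we get $g\mid(m-1)\iff\min(r,s+1)\le s\iff r\le s$. Hence an avoiding word exists exactly when $r>s$, that is, $X_1$ is unavoidable if and only if $r\le s$, as claimed. I expect the main obstacle to be the bookkeeping of the second paragraph, namely correctly converting each of the four insertions into the single closure condition $S+d=S$ with the right $d$ and confirming in each case that anti-periodicity kills the outer constraint, since the group-theoretic existence criterion and the $2$-adic computation are then uniform across all four cases.
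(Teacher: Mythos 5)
Your proof is correct, and it is worth noting that the paper itself does not actually prove this proposition: it establishes only the first step (the reduction to a binary alphabet, in the discussion preceding the statement) and implicitly defers the resulting three-word binary characterization to the cited results of Blanchet-Sadri, Chen and Chakarov, exactly as it does explicitly for the two-insertion case in Theorem~\ref{2bottom}. Your reduction paragraph coincides with the paper's argument, but from there you take a genuinely different, self-contained route: you observe that any binary word avoiding both same-endpoint words satisfies $w(i+m-1)=\overline{w(i)}$, hence is $2(m-1)$-periodic, encode the avoiding word as a subset $S\subseteq\Z/2(m-1)\Z$ with $S+(m-1)=S^c$, convert each of the four insertions into a single translation-invariance condition $S+d=S$ (correctly getting $d=x_1+1$ in cases 1,4 and $d=x_2+1$ in cases 2,3 via $S^c=S+(m-1)$), and then settle existence by passing to the quotient $\Z/g\Z$, $g=\gcd(d,2(m-1))$. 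The one point where your argument genuinely depends on the specific arithmetic of the problem is the claim that $g\nmid(m-1)$ forces the cycle length $g/\gcd(g,m-1)$ to be even; this is false for arbitrary $g$, but you supply the needed fact (the odd part of $g$ divides the odd part of $m-1$, so the cycle length is a power of $2$) in your final paragraph, which closes the gap and makes the whole chain $g\mid(m-1)\iff r\le s$ valid. What your approach buys is uniformity and transparency: all four cases reduce to one coset-colouring criterion, and the appearance of the $2$-adic condition $r\le s$ is explained structurally rather than emerging from case-by-case constructions of avoiding words; what the paper's (implicit) route buys is brevity and consistency with the machinery it reuses elsewhere.
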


Second, let us fill in two holes in $T'_0$. We have seen that inserting any letter into $a {\d}^{m-2} c$ causes $X_0$ to become avoidable, but that inserting a letter into only one of
$a {\d}^{m-2} b$ or $b {\d}^{m-2} c$ only sometimes causes $X_0$ to become avoidable.  Inserting a letter in $a {\d}^{m-2} b$ or $b {\d}^{m-2} c$ that is different from both endpoints makes $X_0$ avoidable. Thus, in examining what happens when we fill in two holes in $T'_0$ we have two cases to consider. The first is when we fill two of the holes in either $a{\d}^{m-2}b$ or $b{\d}^{m-2}c$ with letters that match the endpoints.  The second case to consider is we fill one hole from $a{\d}^{m-2}b$ and one hole from $b{\d}^{m-2}c$ with letters matching one of the endpoints of their respective partial words.  We consider the first case first, for which results from \cite{BSChCh} prove useful.  

 Let $X_2$ be the set created from $X_0$ by filling in two of the holes in the same word in $T'_0$.   If we have filled in two holes in $a{\d}^{m-2}b$, then, as before, any word avoiding $X_2$ must be a word over $\{a,b\}$.  Similarly, if we have filled in two holes in $b{\d}^{m-2}c$, any word avoiding $X_2$ must be a word over $\{b,c\}$.  Let $Y$ be the set created by removing all of the elements of $X_2$ that contain the letter that cannot be contained in $X_2$'s avoiding word.  In either case, $X_2$ has the same avoidability as $Y$, since any word avoiding $Y$ automatically avoids $X_2$ and vice versa. The avoidability of $Y$ is completely characterized in \cite{BSChCh}.

\begin{theorem}\label{2bottom}\cite{BSChCh} Let $Y=\{a{\d}^{m-2}a, b{\d}^{m-2}b, a{\d}^{x_1}b{\d}^{x_2}b{\d}^{x_3}b\}$ be an $m$-uniform set over $\{a,b\}$.  Let $2^s\|m-1$,  $2^t\|x_1+1$, $2^r\|x_1+x_2+2$.  Then $Y$ is unavoidable if and only if $s\ge t,r$ holds in addition to one of $(i)$ $x_1=x_2$, $(ii)$ $x_1=x_3$, or $(iii)$ $m=7(x_1+1)+1$ and $x_2+1\in \{2(x_1+1), 4(x_1+1)\}$.
\end{theorem}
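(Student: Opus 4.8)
The plan is to reduce the whole question to a pattern–avoidance problem for binary sequences carrying a single symmetry, and then analyse that problem orbit by orbit. Write $N=m-1$ and recall $x_1+x_2+x_3=m-4$, so the three $b$'s of $v:=a\d^{x_1}b\d^{x_2}b\d^{x_3}b$ occur at offsets $d_1=x_1+1$, $d_2=x_1+x_2+2$ and $N$ from its initial $a$. The first observation is that a two-sided word $w$ over $\{a,b\}$ avoids $\{a\d^{m-2}a,\,b\d^{m-2}b\}$ if and only if $w(i)\ne w(i+N)$ for every $i$; equivalently $w=(u\overline{u})^{\Z}$ for a length-$N$ block $u$, where $\overline{u}$ swaps $a$ and $b$. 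Encoding $b$ as $1$ and $a$ as $0$ turns such a $w$ into a function $f:\Z/2N\Z\to\{0,1\}$ with $f(i+N)=\overline{f(i)}$ (anti-periodic with anti-period $N$). Since $w(i)=a$ already forces $w(i+N)=b$, such a $w$ meets $v$ exactly when $f$ contains the pattern $f(i)=0,\ f(i+d_1)=1,\ f(i+d_2)=1$. Hence $Y$ is unavoidable if and only if every anti-periodic $f$ contains this pattern, i.e.\ iff no ``avoider'' exists.

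First I would settle the necessity of $s\ge t$ and $s\ge r$. If $2^t\|d_1$ with $t>s$, then the orbit of $0$ under $+d_1$ in $\Z/2N\Z$ does not contain $N$: a short $2$-adic computation shows $N\in\langle d_1\rangle$ iff $\gcd(d_1,2N)\mid N$ iff $t\le s$. One may then colour the $+d_1$-orbits in antipodal pairs so that $f(i)=0\Rightarrow f(i+d_1)=0$, which kills the pattern outright; the same argument with $d_2$ forces $s\ge r$. So from now on I assume $s\ge t,r$.

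Next come the ``aligned'' cases. When $x_1=x_2$ one has $d_2=2d_1$, and when $x_1=x_3$ one has $d_2=N-d_1$. In both situations the three sampled positions lie on a single $+d_1$-orbit (using $t\le s$ so that $N\in\langle d_1\rangle$, and rewriting $f(i+d_2)$ via anti-periodicity in the second case): the cyclic sequence read along that orbit is anti-periodic with anti-period equal to half its length $L=2N/\gcd(d_1,2N)$, and the pattern becomes the consecutive block $011$ when $d_2=2d_1$ and $001$ when $d_2=N-d_1$. A run analysis shows that under anti-periodicity avoiding $011$ (resp.\ $001$) is equivalent to avoiding $11$ (resp.\ $00$), which is possible only through the alternating sequence; and the alternating sequence is anti-periodic iff $L/2$ is odd, i.e.\ iff $v_2(L)=s+1-t=1$, i.e.\ iff $t=s$. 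But $t=s$ is incompatible with $s\ge t,r$ in these shapes (it forces $r=t+1>s$ when $d_2=2d_1$, and $r>s$ when $d_2=N-d_1$), so no avoider exists and $Y$ is unavoidable, as claimed for (i) and (ii).

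The remaining, and hardest, step is the general two-offset analysis together with the isolation of the sporadic family (iii). After dividing out $\delta=\gcd(d_1,d_2,2N)$, which divides $N$ because $\min(t,r)\le s$, I may assume $\gcd(d_1,d_2,2N)=1$, so at least one offset is odd and the problem lives on $\Z/2M\Z$ with anti-period $M=N/\delta$. Here I would show that for every reduced configuration other than the two shapes $(M;d_1,d_2)=(7;1,3)$ and $(7;1,5)$ an explicit avoider can be built (extending the alternating idea, or stitching together constant blocks along the two offset directions), whereas for those two shapes no avoider exists. The latter is a finite verification on the $14$-cycle: the fourteen forbidden triples collapse into seven constraints on $(f_0,\dots,f_6)$ that are invariant under global complementation, and a short branching on $f_0,f_1$ shows the system is unsatisfiable. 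Pulling the reduction back, $(7;1,3)$ and $(7;1,5)$ correspond exactly to $N=7(x_1+1)$ with $x_2+1\in\{2(x_1+1),4(x_1+1)\}$, which is condition (iii). I expect the main obstacle to be precisely this step: producing a uniform avoider construction valid in all non-special cases that is nonetheless sharp enough to exclude only the sporadic pair, since the emergence of the modulus $7$ and the offset set $\{2d_1,4d_1\}$ has no evident structural cause and must really be pinned down by a careful case analysis.
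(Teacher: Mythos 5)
This theorem is quoted by the paper from \cite{BSChCh} without proof, so there is no in-paper argument to compare your proposal against; I am judging it on its own. Your reduction is sound: words avoiding $\{a{\d}^{m-2}a, b{\d}^{m-2}b\}$ are exactly those with $w(i)\neq w(i+N)$ for $N=m-1$, hence $2N$-periodic, and unavoidability of $Y$ becomes the claim that every anti-periodic $f:\Z/2N\Z\to\{0,1\}$ contains the pattern $f(i)=0$, $f(i+d_1)=1$, $f(i+d_2)=1$, where $d_1=x_1+1$, $d_2=x_1+x_2+2$. Your necessity argument for $s\ge t$ and $s\ge r$ (when $t>s$ the $+d_1$-orbits pair up antipodally, so an avoider constant on orbits exists) is correct, and so is your treatment of (i) and (ii): collapsing the pattern to a consecutive block $011$ or $001$ along a single $+d_1$-orbit, using the balanced letter count forced by anti-periodicity to conclude that the only block-avoider is the alternating sequence, and checking the $2$-adic incompatibility of $t=s$ with $s\ge t,r$ in those two shapes.

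The genuine gap sits exactly where the content of the theorem lies: the implication ``if $s\ge t,r$ but none of (i), (ii), (iii) holds, then $Y$ is avoidable.'' This requires an avoider construction for \emph{every} such configuration $(N;d_1,d_2)$, and your proposal only gestures at one (``extending the alternating idea, or stitching together constant blocks along the two offset directions'') without defining it or proving it works; you flag this yourself as the main obstacle. Without that step you cannot conclude that (i)--(iii) exhaust the unavoidable shapes; in particular, you have not \emph{derived} that the sporadic reduced configurations are exactly $(7;1,3)$ and $(7;1,5)$ --- that fact must emerge from the failure analysis of the general construction, and no argument is offered for why no other sporadic cases survive. (By contrast, the companion claim that those two configurations are unavoidable really is a finite verification over the $2^7$ anti-periodic words on $\Z/14\Z$, so that part, though not executed, is routine.) As it stands you have proved necessity of $s\ge t,r$ and sufficiency of (i) and (ii); the remaining half of the equivalence is missing.
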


\begin{theorem}\label{lastyearswitcharound}\cite{BSChCh}
Let $i_1<\cdots<i_s<j_1<\cdots<j_r$ be elements of the set $\{1,\ldots,m-2\}$.  Let $x$ be defined as follows: $x(i)=a$ if $i\in\{0,i_1,\ldots,i_s\}$, $x(i)=b$ if $i\in\{j_1,\ldots,j_r,m-1\}$, and $x(i)={\d}$ otherwise. Then $Y= \{a{\d}^{m-2}a,b{\d}^{m-2}b, x\}$ has the same avoidability as some set $Z=\{a{\d}^{m-2}a,b{\d}^{m-2}b, z\}$, where $z$ is created by filling in $r+s$ of the holes in $a{\d}^{m-2}b$ with $b$'s.
\end{theorem}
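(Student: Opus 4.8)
The plan is to prove Theorem~\ref{lastyearswitcharound} by showing that the word $x$ — which has $s$ defined $a$-positions and $r$ defined $b$-positions in its interior, with all $a$'s occurring before all $b$'s — has exactly the same avoidability as the word $z$ obtained by filling $r+s$ interior holes of $a{\d}^{m-2}b$ with $b$'s. Since both candidate sets share the two endpoint words $a{\d}^{m-2}a$ and $b{\d}^{m-2}b$, and since any word avoiding either set must (as established in the discussion preceding Proposition~\ref{harder}) be a word over the binary alphabet $\{a,b\}$, the whole problem reduces to analyzing binary avoiding words. So I would first fix an arbitrary two-sided infinite word $w$ over $\{a,b\}$ that avoids $\{a{\d}^{m-2}a, b{\d}^{m-2}b\}$, and study precisely what constraint the third word (either $x$ or $z$) places on $w$.

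\textbf{The structural observation driving the proof.} Because $w$ avoids $a{\d}^{m-2}a$ and $b{\d}^{m-2}b$, the positions $i$ and $i+(m-1)$ of $w$ can never carry the same letter; equivalently $w(i)\neq w(i+(m-1))$ for every $i\in\Z$, so $w$ is forced to alternate in a rigid way along every arithmetic progression of common difference $m-1$. The key point is that $w$ meets a third word $y$ with $y(0)=a$, $y(m-1)=b$ exactly when there is some shift placing an $a$ at position $0$ and a $b$ at position $m-1$ while matching all the interior defined letters of $y$. I would make precise the claim that, under the alternation constraint forced by the endpoint words, matching an interior $a$ at position $p$ or an interior $b$ at position $q$ in $y$ imposes only a condition on the parity/compatibility class of $w$ relative to the endpoints, and that for words of this ``all $a$'s before all $b$'s'' shape the only information that survives is the total count $r+s$ of filled interior positions, not their exact locations or labels. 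This is the heart of the equivalence: interior $a$'s and interior $b$'s become interchangeable as mere ``filled holes'' once the binary alternation structure is imposed.

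\textbf{The two directions.} Having isolated the surviving invariant, I would prove the equivalence of avoidability by a two-way implication. For one direction, given a word $w$ avoiding $Y$, I would show it also avoids $Z$ by arguing that any occurrence of $z$ in $w$ could be converted into an occurrence of $x$ (pulling back through the binary constraint, each filled $b$-position of $z$ forces the correctly-labeled defined position of $x$ to match); for the converse I would run the same argument in reverse. A clean way to organize this is to exhibit, for each direction, either an explicit avoiding word for one set built from an avoiding word for the other, or to show directly that the sets of avoiding binary words coincide. Given the rigid structure, I expect the cleanest route is to show both $Y$ and $Z$ reduce to the \emph{same} canonical binary constraint, so that neither set is ``more avoidable'' than the other.

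\textbf{Main obstacle.} The step I expect to be hardest is justifying rigorously that the \emph{labels} of the interior defined positions (which are $a$ versus which are $b$) genuinely carry no avoidability information beyond their count, given the endpoint-forced alternation. The subtlety is that an interior $a$ in $x$ and an interior $b$ in $z$ constrain $w$ at different residues relative to the $a$-endpoint at position $0$ and the $b$-endpoint at position $m-1$; I must verify that the alternation relation $w(i)\neq w(i+(m-1))$ lets me translate a constraint ``$w$ has $a$ at position $p$'' into an equivalent constraint ``$w$ has $b$ at the mirror position $p+(m-1)$,'' so that an $a$-constraint at an interior position and a $b$-constraint at a (possibly different) interior position become genuinely the same family of restrictions after accounting for the ordering hypothesis $i_1<\cdots<i_s<j_1<\cdots<j_r$. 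Pinning down exactly how the ordering hypothesis guarantees that the $r+s$ filled positions of $z$ can be placed to realize the same constraint as the scattered $a$'s and $b$'s of $x$ is where the real bookkeeping lives.
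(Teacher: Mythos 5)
Your proposal correctly isolates the one tool that drives this result: over $\{a,b\}$, any word avoiding $a{\d}^{m-2}a$ and $b{\d}^{m-2}b$ satisfies $w(i)\neq w(i+(m-1))$ for all $i$, so a constraint ``$a$ at position $p$'' can be traded for ``$b$ at position $p+(m-1)$.'' But the claim you build the proof around --- that after this reduction ``the only information that survives is the total count $r+s$ of filled interior positions, not their exact locations or labels'' --- is false, and it is contradicted by this very paper. Proposition~\ref{harder} (one filled position) and Theorem~\ref{2bottom} (two filled positions) show that unavoidability of such sets depends on arithmetic ($2$-adic) conditions involving the \emph{locations} of the filled positions relative to $m-1$; two sets with the same number of filled holes in different places can differ in avoidability. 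Accordingly, the theorem does not assert that $Y$ is equivalent to \emph{every} $Z$ with $r+s$ filled holes, only to \emph{some} such $Z$, whose filled positions are a specific function of $i_1,\ldots,i_s,j_1,\ldots,j_r$; a proof must exhibit that $z$. This is exactly the step you postpone as ``where the real bookkeeping lives,'' so the proposal is missing its core, and the invariant you propose to replace it with would prove a false statement.

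Here is the missing construction, and where the ordering hypothesis enters. For binary $w$ with the alternation property, meeting $x$ at offset $\delta$ is equivalent to requiring $b$ at every position $\delta+p$ with $p\in S$, where $S=\{j_1,\ldots,j_r,\,m-1,\,m-1+i_1,\ldots,m-1+i_s\}$: the anchor $a$ at $\delta$ is equivalent to $b$ at $\delta+m-1$, and each interior $a$ at $\delta+i_t$ is equivalent to $b$ at $\delta+i_t+(m-1)$. Because $i_s<j_1$, the diameter of $S$ is $(m-1+i_s)-j_1<m-1$, so the translate $S-i_s$ lies in $\{1,\ldots,m-1\}$ with maximum exactly $m-1$. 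Define $z$ by $z(0)=a$, $z(p)=b$ for $p\in S-i_s$, and holes elsewhere; this is a length-$m$ word of the required shape with exactly $|S|-1=r+s$ interior $b$'s, and meeting $z$ is equivalent to having $b$'s on a translate of $S$, i.e., to meeting $x$. Hence $Y$ and $Z$ have identical sets of avoiding words, which is the claimed equivalence. Note that an interior $a$ at $i_t$ does \emph{not} become a filled hole at $i_t$: it becomes a $b$ at the shifted position $i_t+(m-1)-i_s$, and it is only the hypothesis that all $a$'s precede all $b$'s that allows the entire constraint set to be re-anchored inside a single window of length $m$. Without this explicit $z$, your two-direction argument has nothing to run on. (For reference, the paper itself does not prove this statement; it quotes it from \cite{BSChCh}, so the comparison here is against what any correct proof must contain.)
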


We now focus on the set created by filling in one hole in $a{\d}^{m-2}b$ and one hole in $b{\d}^{m-2}c$.  We define such a set, with  $x_1 + x_2 = y_1 + y_2 = m - 3$, as \begin{equation} 
X_2 = T_0 \cup \{a {\d}^{x_1} b {\d}^{x_2} b, b {\d}^{y_1} b {\d}^{y_2} c, a {\d}^{m-2} c\}.\label{conjform}
\end{equation}
Such set has the same avoidability as $$Y_2= T_0 \cup \{a{\d}^{y_2}b{\d}^{y_1}b, b{\d}^{x_2}b{\d}^{x_1}c, a {\d}^{m-2} c\}.$$ 

\begin{proposition}\label{x1y2}  The $m$-uniform sets $$Y_2=(X_0\setminus \{a{\d}^{m-2}b,b{\d}^{m-2}c\}) \cup \{a{\d}^{y_2}b{\d}^{y_1}b,b{\d}^{x_2}b{\d}^{x_1}c\}$$ and $X_2$, defined by Eq.~(\ref{conjform}), have the same avoidability.
\end{proposition}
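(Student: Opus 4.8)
The plan is to exhibit an avoidability-preserving bijection of the length-$m$ partial words that carries $X_2$ onto $Y_2$; since such a map sends avoiding infinite words to avoiding infinite words, the two sets must then share the same avoidability. The operation I would use combines \emph{reversal} with a \emph{relabeling} of the alphabet. For a partial word $u$ of length $m$, let $u^R$ denote its reversal, $u^R(i)=u(m-1-i)$; in gap notation reversal simply reverses the order of the two gaps, so $\alpha{\d}^{g_1}\beta{\d}^{g_2}\gamma$ becomes $\gamma{\d}^{g_2}\beta{\d}^{g_1}\alpha$. For the transposition $\sigma=(a\;c)$ swapping $a$ and $c$ while fixing $b$ and ${\d}$, let $\sigma(u)$ apply $\sigma$ letterwise. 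Define $\phi=\sigma\circ R$.

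First I would record the two standard facts that $\phi$ is built from. If $w$ is a two-sided infinite word and $\tilde w(i)=w(-i)$ is its reversal, then $w$ meets $u$ if and only if $\tilde w$ meets $u^R$: reading a factor of $w$ at position $i$ backwards yields a factor of $\tilde w$ compatible with $u^R$, and conversely. Hence $X$ is avoidable if and only if $X^R=\{u^R\mid u\in X\}$ is. Likewise $w$ meets $u$ if and only if $\sigma(w)$ meets $\sigma(u)$, since $\sigma$ is a bijection on $A$ fixing ${\d}$; so $X$ is avoidable if and only if $\sigma(X)$ is. Composing, $X$ is avoidable if and only if $\phi(X)$ is.

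The core of the proof is the elementwise verification that $\phi(X_2)=Y_2$. Both sets contain the common block $T_0\cup\{a{\d}^{m-2}c\}$, and I would first check that $\phi$ fixes this block setwise: each word of $T_0$ is a palindrome, so $R$ fixes it, while $\sigma$ merely swaps $a{\d}^{m-2}a$ with $c{\d}^{m-2}c$ and fixes $b{\d}^{m-2}b$; and $a{\d}^{m-2}c\xrightarrow{R}c{\d}^{m-2}a\xrightarrow{\sigma}a{\d}^{m-2}c$. For the two remaining words the gaps and endpoints recombine exactly as wanted: $a{\d}^{x_1}b{\d}^{x_2}b\xrightarrow{R}b{\d}^{x_2}b{\d}^{x_1}a\xrightarrow{\sigma}b{\d}^{x_2}b{\d}^{x_1}c$, and $b{\d}^{y_1}b{\d}^{y_2}c\xrightarrow{R}c{\d}^{y_2}b{\d}^{y_1}b\xrightarrow{\sigma}a{\d}^{y_2}b{\d}^{y_1}b$. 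These are precisely the two new words of $Y_2$, so $\phi(X_2)=Y_2$, and the conclusion follows from the previous paragraph.

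I do not expect a genuine obstacle; the only thing to watch is the gap bookkeeping under reversal, namely that the two gap lengths swap order so that $a{\d}^{x_1}b{\d}^{x_2}b$ lands on $b{\d}^{x_2}b{\d}^{x_1}c$ after the $a\leftrightarrow c$ swap, rather than on a mislabeled variant. The argument is a pure symmetry observation; the hypothesis $x_1+x_2=y_1+y_2=m-3$ simply ensures every word involved has length $m$, so all these computations stay within the length-$m$ partial words on which $\phi$ acts bijectively.
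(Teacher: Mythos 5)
Your proof is correct: the map $\phi=\sigma\circ R$ (reversal followed by the $a\leftrightarrow c$ relabeling) is avoidability-preserving, and your computation that $\phi$ fixes $T_0\cup\{a{\d}^{m-2}c\}$ setwise while sending $a{\d}^{x_1}b{\d}^{x_2}b\mapsto b{\d}^{x_2}b{\d}^{x_1}c$ and $b{\d}^{y_1}b{\d}^{y_2}c\mapsto a{\d}^{y_2}b{\d}^{y_1}b$ checks out, so $\phi(X_2)=Y_2$. The paper states Proposition~\ref{x1y2} without giving a proof, and this reversal-plus-relabeling symmetry is precisely the natural argument it is implicitly relying on (as the remark following the proposition, about swapping $x_1$ and $y_2$, confirms).
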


From Proposition~\ref{x1y2}, when considering $X_2$, defined by Eq.~(\ref{conjform}), we can assume without loss of generality that $x_1\le y_2$. If $y_2 < x_1$, $X_2$ has the same avoidability as the set $Y_2$ obtained by switching $x_1$ and $y_2$.

\begin{proposition}\label{avoidingword}  If $x_1\le x_2$, there exist integers $p,q>0$, with $p+q=m-1$, such that the infinite word $w=(a^pc^qb^p)^{\mathbb{Z}}$ avoids $X_2$, defined by Eq.~(\ref{conjform}).
\end{proposition}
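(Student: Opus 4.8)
The plan is to exhibit the avoiding word explicitly and reduce ``$w$ avoids $X_2$'' to six pointwise conditions on a periodic coloring. Write $w=(a^pc^qb^p)^{\mathbb{Z}}$, let $P=2p+q$ be its period, and let $\chi(i)$ be its letter at position $i$, so that $\chi(i)=a$ for $i\bmod P\in[0,p)$, $\chi(i)=c$ for $i\bmod P\in[p,p+q)$, and $\chi(i)=b$ for $i\bmod P\in[p+q,P)$. I impose $p+q=m-1$ from the outset; the payoff is the congruence $m-1\equiv -p\pmod P$, which makes the distance-$(m-1)$ behaviour of $w$ transparent: advancing a position by $m-1$ is the same as retreating it by $p$. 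As a first candidate I would take $q=x_1+1$ and $p=x_2+1$, so that $p+q=x_1+x_2+2=m-1$ and, since $x_1\le x_2$, also $q\le p$; this last inequality is used repeatedly below.

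First I would dispose of the four words whose two defined letters sit at distance $m-1$. From $m-1\equiv -p$ one reads off that every $a$-position maps to a $b$-position, every $c$-position maps (here $q\le p$ is used) to an $a$-position, and every $b$-position maps to an $a$- or $c$-position. Hence $w$ has no two equal letters at distance $m-1$ and no $a$ followed at distance $m-1$ by a $c$, so $w$ avoids $a\d^{m-2}a$, $b\d^{m-2}b$, $c\d^{m-2}c$ and $a\d^{m-2}c$ at once. The same coloring settles the outer defined letters of the two remaining words: in $a\d^{x_1}b\d^{x_2}b$ the initial $a$ and the final $b$ are at distance $m-1$, and reading forward from that $a$ the final letter is forced to be $b$; in $b\d^{y_1}b\d^{y_2}c$ the initial $b$ and the final $c$ are at distance $m-1$, and reading backward from that $c$ the position $m-1$ earlier is forced to be $b$ (again using $q\le p$). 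In each case the outer pair is automatically consistent, so only a single interior condition survives.

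The two interior conditions are the crux. For $a\d^{x_1}b\d^{x_2}b$ I must show that no $a$-position $i$ has $\chi(i+x_1+1)=b$; since $x_1+1=q$, the window $\{\,i+q : i\in[0,p)\,\}$ equals $[q,p+q)$, which lies in the $a$- and $c$-blocks and misses the $b$-block $[p+q,P)$, so this word is avoided. For $b\d^{y_1}b\d^{y_2}c$ it is cleanest to read the factor from its $c$: the far $b$ (at distance $m-1$ before the $c$) is already forced to be a $b$ by the coloring, so avoidance reduces to showing that the near $b$, which sits $y_2+1$ positions before each $c$-position, never lands on the $b$-block. Translating the $c$-block $[p,p+q)$ back by $y_2+1$ yields $[p-y_2-1,\,p+q-y_2-1)$, and this misses $[p+q,P)$ exactly when $p\ge y_2+1$.

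I expect this last inequality to be the main obstacle, since $p\ge y_2+1$ is the only place where the data $y_1,y_2$ of the second word enter, and it is not visible from $x_1\le x_2$ alone. Here I would bring in the normalization $x_1\le y_2$ from Proposition~\ref{x1y2} together with $x_1+x_2=y_1+y_2=m-3$ to control $y_2$ against $p=x_2+1$. The genuine freedom, if $q=x_1+1$ does not already give $p\ge y_2+1$, is to slide $q$ within the range $x_1+1\le q\le y_1+1$ (the right-hand bound being equivalent to $p\ge y_2+1$) while keeping $q\le p$. Exhibiting a single $q$ that simultaneously satisfies $q\ge x_1+1$, $p\ge y_2+1$ and $q\le p$ is the arithmetic heart of the proposition, and verifying that the hypotheses force this range to be nonempty is where I would concentrate the most care.
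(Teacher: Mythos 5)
You have done the analysis correctly---in fact more correctly than the paper---but your proof is incomplete exactly where you flag it, and the gap cannot be closed. Your reduction is right: $w=(a^pc^qb^p)^{\mathbb{Z}}$ with $p+q=m-1$ avoids $X_2$ if and only if $q\le p$, $q\ge x_1+1$, and your ``crux'' condition $p\ge y_2+1$ (equivalently $q\le y_1+1$). So a valid $q$ exists if and only if the range $x_1+1\le q\le\min(p,\,y_1+1)$ is nonempty, which forces $x_1\le y_1$ on top of $x_1\le x_2$. Neither the hypothesis $x_1\le x_2$ nor the standing normalization $x_1\le y_2$ from Proposition~\ref{x1y2} yields $x_1\le y_1$, and the statement is false without it. Take $m=6$, $x_1=1$, $x_2=2$, $y_1=0$, $y_2=3$ (a legitimate instance of Eq.~(\ref{conjform}) with $x_1\le x_2$ and $x_1\le y_2$): your range demands $2\le q\le 1$. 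Checking all four candidates confirms the failure: $(p,q)=(3,2)$ gives $(a^3c^2b^3)^{\mathbb{Z}}$, which contains the factor $bbaaac$ and so meets $b{\d}^{y_1}b{\d}^{y_2}c=bb{\d}{\d}{\d}c$; $(p,q)=(4,1)$ meets $a{\d}b{\d}{\d}b$; and $(p,q)=(2,3),(1,4)$ meet $c{\d}^{4}c$. (If you want all parameters positive, $m=8$, $x_1=2$, $x_2=3$, $y_1=1$, $y_2=4$ fails the same way.) This $X_2$ is avoidable---$(ab)^{\mathbb{Z}}$ avoids it---but not by any word of the form the proposition prescribes.

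The reason you could not finish is that the paper's own proof errs at precisely this step, and your blind attempt exposes the error. The paper asserts that avoiding $b{\d}^{y_1}b{\d}^{y_2}c$ requires $q\le y_2+1$, which, with $q=x_1+1$, would indeed follow from the normalization $x_1\le y_2$; your backward translation of the $c$-block shows the correct condition is $p\ge y_2+1$, i.e., $q\le y_1+1$, and the example above separates the two conditions: there $q=2\le y_2+1=4$, yet the word is met. So your instinct to concentrate the most care on the arithmetic of the range was exactly right: under the stated hypotheses that range can be empty, the proposition as stated is false, and no proof exists. Your argument does prove the proposition under the additional hypothesis $x_1\le y_1$ (take any $q$ with $x_1+1\le q\le\min(p,\,y_1+1)$), and it shows that the paper's subsequent claim---that $X_2$ is avoidable whenever $x_1\le x_2$, leaving only $y_1\le x_2\le x_1\le y_2$ to Conjecture~1---needs to be re-examined in the cases $y_1<x_1\le x_2$, where a different family of avoiding words (such as $(ab)^{\mathbb{Z}}$ or $(abc)^{\mathbb{Z}}$ in the examples above) is required.
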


\begin{proof}
Set $v=a^pc^qb^p$. If $w(i)=a$, we know $w(i+m-1)=b$.  Thus in order for $w$ to avoid $X_2$, we need $p\le x_2+1$, to ensure that $w$ avoids $a{\d}^{x_1}b{\d}^{x_2}b$.  Since $x_1+x_2+2=p+q=m-1$, $p\le x_2+1$ implies $q\ge x_1+1$.  Additionally, if $w(i)=c$, we need $q\le p$ in order to ensure that $w(i+m-1)$ is an $a$ and not a $c$. Finally, if $w(i)=b$, then $w(i+m-1) \in \{a, c\}$.  In fact, $m-1$ spaces after the first $p-q$ $b$'s in $v$ is an $a$ and $m-1$ spaces after the last $q$ $b$'s in $v$ is a $c$.  Thus to ensure that $w$ avoids $b {\d}^{y_1} b {\d}^{y_2} c$, we need $q\le y_2+1$.  Consequently, $w$ avoids $X_2$ if $x_1\le y_2$, which we have already assumed, and if we can find $p,q$ such that $x_1+1\le q\le p\le x_2+1$.  This occurs when $x_1\le x_2$.
\end{proof}

Thus the set $X_2$, defined by Eq.~(\ref{conjform}), is always avoidable except possibly when $y_1\le x_2 \le x_1\le y_2$. Extensive computations yield the following conjecture.

\begin{conjecture}
Set $X_2$, defined by Eq.~(\ref{conjform}), is avoidable when $y_1\le x_2 \le x_1\le y_2$.
\end{conjecture}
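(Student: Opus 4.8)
The plan is to produce, for every admissible tuple $(x_1,x_2,y_1,y_2)$ with $y_1\le x_2\le x_1\le y_2$ and $x_1+x_2=y_1+y_2=m-3$, a two-sided infinite word avoiding $X_2$. First I would record what avoidance means position-by-position. A word $w$ avoids $T_0$ exactly when $w(i)\neq w(i+m-1)$ for all $i$; combining this with $a{\d}^{m-2}c$ shows $w(i)=a$ forces $w(i+m-1)=b$ (condition (B)). Given (B), avoiding $a{\d}^{x_1}b{\d}^{x_2}b$ becomes the condition (C) that $w(i)=a$ implies $w(i+x_1+1)\neq b$, and avoiding $b{\d}^{y_1}b{\d}^{y_2}c$ becomes the condition (D) that whenever $w(i)=b$ and $w(i+m-1)=c$ one has $w(i+y_1+1)\neq b$. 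Since Proposition~\ref{avoidingword} already gives avoidance when $x_1\le x_2$, and Proposition~\ref{x1y2} lets us assume $x_1\le y_2$, it remains to treat the core regime $y_1\le x_2<x_1\le y_2$.

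Next I would dispatch the sub-cases that collapse to a binary alphabet. Any word over $\{a,b\}$ avoiding $\{a{\d}^{m-2}a,\,b{\d}^{m-2}b,\,a{\d}^{x_1}b{\d}^{x_2}b\}$ automatically avoids $X_2$, because it contains no $c$ and hence meets none of $c{\d}^{m-2}c$, $b{\d}^{y_1}b{\d}^{y_2}c$, $a{\d}^{m-2}c$; by Proposition~\ref{harder}(1) such a word exists precisely when $2^r\|x_1+1$ and $2^s\|m-1$ with $r>s$. Symmetrically, a word over $\{b,c\}$ supplied by Proposition~\ref{harder}(3) avoids $X_2$ whenever $2^{r'}\|y_2+1$ and $2^s\|m-1$ with $r'>s$. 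This reduces the problem to the genuinely ternary regime in which both $x_1+1$ and $y_2+1$ have $2$-adic valuation at most that of $m-1$.

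For this core regime I would construct an explicit periodic ternary word. Since $w(i)=a$ forces $w(i+m-1)=b$, it is natural to take the period $N$ to be a multiple of $m-1$, so that the shift $i\mapsto i+(m-1)$ partitions $\Z/N\Z$ into orbits along which the letters must follow the forced pattern. Small cases point to a clean family: when $m$ is odd one can take $N=\tfrac{3(m-1)}{2}$, for which every orbit has size three, and let the restriction of $w$ to each residue class be a cyclic rotation of a fixed three-letter seed such as $(a,c,b)$; then $w(i)\neq w(i+m-1)$ and condition (B) hold automatically, and only the three rotation offsets need to be chosen so as to enforce (C) and (D). For instance this yields $w=(acbcbabac)^{\Z}$ for $m=7$, $x_1=4$, $x_2=y_1=0$, $y_2=4$, which one checks directly avoids $X_2$. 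The goal would be to show that for every admissible tuple the offsets (and, when $m$ is even, a suitably enlarged period) can be selected so that no $a$ is followed by a $b$ exactly $x_1+1$ positions later and no $b$ with a $c$-partner is followed by a $b$ exactly $y_1+1$ positions later.

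The hard part will be producing a single construction valid across the whole range $y_1\le x_2<x_1\le y_2$ simultaneously. For periodic $w$, conditions (C) and (D) translate into divisibility constraints relating $x_1+1$ and $y_1+1$ to the period and to $m-1$—exactly the $2$-adic phenomena already visible in Proposition~\ref{prop3}, Proposition~\ref{harder}, and Theorem~\ref{2bottom}—and the offsets that kill the forbidden $b$ at distance $x_1+1$ in one sub-range may reintroduce the forbidden $b$ at distance $y_1+1$ in another; moreover the parity of $m$ changes the orbit sizes and hence the entire construction. I therefore expect the decisive step to be a case analysis on the greatest common divisors and $2$-adic valuations of $x_1+1$, $y_1+1$, and $m-1$ that pins down the period and the offsets uniformly. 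Should no single explicit word cover all cases, the fallback is an indirect existence argument: avoidability is equivalent to the presence of a bi-infinite path in the Rauzy-type graph whose vertices are the length-$(m-1)$ windows compatible with (A)--(D), and one would argue that this graph always contains a cycle throughout the stated regime.
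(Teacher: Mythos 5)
You should first be aware that the paper does not prove this statement at all: it is stated as Conjecture~1, supported only by partial results (Proposition~\ref{avoidingword}, Proposition~\ref{tab1-3}, Proposition~\ref{eveneven}, and the pattern families of Tables~\ref{tab1}--\ref{tab23}) together with machine verification for $m\le 100$. So there is no proof in the paper to compare yours against; the only question is whether your proposal itself closes the conjecture, and it does not. To your credit, the preliminary reductions are sound and parallel the paper's partial results: your conditions (B), (C), (D) are the correct position-by-position reformulation of avoidance of $X_2$; Proposition~\ref{avoidingword} together with Proposition~\ref{x1y2} does reduce the problem to $y_1\le x_2<x_1\le y_2$; the dispatch to binary avoiding words via Proposition~\ref{harder}, parts 1 and 3, correctly eliminates the subcases where the $2$-adic valuation of $x_1+1$ or of $y_2+1$ exceeds that of $m-1$; and your sample word $(acbcbabac)^{\Z}$ for $m=7$, $x_1=y_2=4$, $x_2=y_1=0$ does avoid the corresponding $X_2$.

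The gap is the entire core of the argument. For the remaining, genuinely ternary regime you offer only a template---period $N=3(m-1)/2$ for odd $m$, orbits of size three under the shift $i\mapsto i+(m-1)$, with rotation offsets ``to be chosen'' so that (C) and (D) hold---and you never show that such a choice exists; for even $m$ you say only ``a suitably enlarged period.'' Conditions (C) and (D) couple offsets of \emph{different} orbits (the positions $i$, $i+x_1+1$, and $i+y_1+1$ generally lie in distinct orbits), so what remains is a nontrivial constraint-satisfaction problem, and nothing in your proposal rules out infeasibility for some tuples $(m,x_1,y_1)$. The paper's own data is evidence against the hope of one clean family: Table~\ref{tab1} already requires a dozen distinct pattern families of widely varying periods, $4$ of the $41{,}650$ admissible sets with $m\le 100$ (and $77$ of roughly $42$ million with $m\le 1000$) match none of them, and the authors explicitly doubt that a small number of such patterns covers the remaining cases. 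Finally, your fallback is circular: the existence of a bi-infinite path, equivalently a cycle, in the graph of admissible length-$(m-1)$ windows is not an alternative argument but a standard restatement of avoidability itself, i.e., exactly what must be proven. As it stands, your text is a correct reduction plus a research plan; the conjecture remains open.
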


We now discuss some results towards a proof of this conjecture. Table~\ref{tab1} gives specific examples of words that avoid sets defined by Eq.~(\ref{conjform}) under conditions on $m,x_1,$ and $y_1$. We prove only the third item in Table~\ref{tab1}, i.e, Proposition~\ref{tab1-3}, as the proofs of the other items are analogous. 

\begin{proposition}
\label{tab1-3}
The infinite word $w=((ab)^pa(bc)^q)^\Z$, where $p \geq 0, q>0$, avoids $X_2$, defined by Eq.~(\ref{conjform}), if and only if the following conditions hold:
\begin{enumerate}
\item $m \equiv 2 \pmod{2(p+q)+1}$;
\item $x_1 \equiv 2j-1 \pmod{2(p+q)+1}$ for some $j \in [0..q]$;
\item $y_1 \equiv 2k-1 \pmod{2(p+q)+1}$ for some $k \in [q..p+q+1]$.
\end{enumerate}
\end{proposition}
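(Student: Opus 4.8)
The plan is to fix the period $L = 2(p+q)+1$ of the base word $v = (ab)^pa(bc)^q$ and to record the letter of $w$ at each residue modulo $L$: writing positions mod $L$, the letter $a$ sits at the even residues $0,2,\ldots,2p$, the letter $c$ at the even residues $2p+2,\ldots,L-1$, and the letter $b$ at every odd residue. Since $w$ is $L$-periodic, for every partial word $u\in X_2$ the existence of a factor of $w$ compatible with $u$ depends only on the residues of the (at most three) defined positions of $u$ and on $e := (m-1)\bmod L$, because $w(i+m-1)=w(i+e)$. I would translate ``$w$ avoids $X_2$'' into six conditions, one per element of $X_2$: no position $i$ with $w(i)=w(i+e)$ equal to $a$, to $b$, or to $c$ (the three words of $T_0$); no $i$ with $w(i)=a,\ w(i+e)=c$ (the word $a{\d}^{m-2}c$); no $i$ with $w(i)=a,\ w(i+x_1+1)=b,\ w(i+e)=b$ (the word $a{\d}^{x_1}b{\d}^{x_2}b$); and no $i$ with $w(i)=b,\ w(i+y_1+1)=b,\ w(i+e)=c$ (the word $b{\d}^{y_1}b{\d}^{y_2}c$).

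The heart of the argument is to show that the four ``endpoint'' conditions coming from $T_0\cup\{a{\d}^{m-2}c\}$ hold if and only if $e=1$, which is exactly condition~1. For sufficiency, one checks directly that under the shift by $1$ every $a$-residue moves to a $b$-residue, every $b$-residue to an $a$- or $c$-residue, and every $c$-residue to an $a$- or $b$-residue, so none of the four forbidden coincidences occurs. For necessity I would run a parity-versus-wraparound case analysis: because $L$ is odd, reducing a sum modulo $L$ flips its parity exactly when it wraps around. If $e$ is even then the $a$-residue $0$ is sent to the even residue $e$, which is an $a$ or a $c$, violating the first or the fourth condition; hence $e$ is odd. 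If $e$ is odd, forcing all $p+1$ images $\{e,e+2,\ldots,e+2p\}$ of the $a$-residues to be odd (i.e.\ to be $b$) rules out any wraparound and gives $e\le 2q$; but then the largest $b$-residue $L-2$ is sent to $L-2+e$, which for $e\ge 3$ wraps to the odd residue $e-2$, a $b$, violating the second condition. Thus $e=1$ is the only possibility, establishing condition~1.

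Assuming condition~1, so that $w(i+m-1)=w(i+1)$, I would finish by decoupling the two remaining conditions. For $a{\d}^{x_1}b{\d}^{x_2}b$, note that whenever $w(i)=a$ we automatically have $w(i+1)=b$, so the trailing constraint is free and the condition reduces to: no $a$-residue $i$ has $w(i+x_1+1)=b$. Setting $d=x_1+1$ and repeating the parity/wraparound bookkeeping on the residues $\{0,2,\ldots,2p\}$ shows this holds precisely when $d\bmod L\in\{0,2,\ldots,2q\}$, i.e.\ $x_1\equiv 2j-1\pmod L$ for some $j\in[0..q]$, which is condition~2. For $b{\d}^{y_1}b{\d}^{y_2}c$, the positions $i$ with $w(i)=b$ and $w(i+1)=c$ are exactly the $b$-residues inside the block $(bc)^q$, namely $\{2p+1,2p+3,\ldots,2p+2q-1\}$; requiring $w(i+y_1+1)\ne b$ for all of them, again via the wraparound analysis, is equivalent to $(y_1+1)\bmod L\in\{2q,2q+2,\ldots,L-1\}\cup\{1\}$, i.e.\ $y_1\equiv 2k-1\pmod L$ for some $k\in[q..p+q+1]$, which is condition~3. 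Combining the three equivalences yields the biconditional.

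I expect the main obstacle to be the necessity half of the $e=1$ claim: one must be careful that the parity of a residue is not an invariant of the integer sum but flips on each wraparound modulo the odd number $L$, and the whole classification hinges on counting these flips correctly for the families $\{e+2i\}$ and for the extreme $b$-residue $L-2$. The analogous wraparound bookkeeping in the reductions of the last two conditions is routine once this principle is in hand, the only mild subtlety being the ``$+1$'' value of $(y_1+1)\bmod L$ produced by a single wraparound, which accounts for the endpoint $k=p+q+1$ in condition~3.
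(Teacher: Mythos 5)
Your proposal is correct and follows essentially the same approach as the paper's proof: both exploit the $L$-periodic structure ($L=2(p+q)+1$ odd) and parity of residues, noting that Condition~1 forces $m-1\equiv 1 \pmod{L}$ so the endpoint constraints become automatic, and then reduce the two three-letter words to parity/wraparound conditions equivalent to Conditions~2 and~3. The only differences are organizational — you package the argument as three independent biconditionals and handle $b{\d}^{y_1}b{\d}^{y_2}c$ via $y_1$ directly, whereas the paper proves sufficiency and necessity separately (necessity by contradiction at chosen positions, and using $y_2+1$ for the last word) — but the underlying reasoning is the same.
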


\begin{proof}
For the remainder of the proof, assume all congruences are modulo $2(p+q) +1$. Suppose $X_2$ satisfies the above conditions. Since $m \equiv 2 $ by Condition~1 and, thus, $m-1 \equiv 1 $,  $w(i) = a$ implies $w(i+m-1) = b$ since any letter after an $a$ is a $b$.  Similarly, $w(i) = b$ implies $w(i+m-1) \in\{a,c\}$, and $w(i) = c$ implies $w(i+m-1) \in \{a, b\}$. Thus $w$ avoids $a{\d}^{m-2}a$, $b{\d}^{m-2}b$, $c{\d}^{m-2}c$, and $a{\d}^{m-2}c$. 

Suppose $w(i)=a$. Consider $w(i+x_1+1)$. By Condition~2, $x_1 \equiv 2j-1$ for some $j \in [0..q]$, which implies $w(i+x_1+1)=w(i+2j)$. Since any letter an even distance at most $2q$ spaces ahead of an $a$ is in $\{a,c\}$, $w(i+x_1+1) \neq b$. Therefore, $w$ avoids $a\d^{x_1}b$, which implies that $w$ avoids $a\d^{x_1}b\d^{x_2}b$. Next, suppose $w(i)=b$. By Condition~3, $y_1 \equiv 2k-1$ for some $k \in [q..p+q+1]$. Equivalently, $2k-1+y_2+3 \equiv y_1+y_2+3 = m \equiv 2$. Thus, $y_2+1 \equiv 2r$ for some $r \in [0..p+1]$. Since any letter an even distance at most $2p+2$ spaces ahead of  a $b$ is in $\{a,b\}$, $w(i+y_2+1)=w(i+2r)\neq c$. Therefore, $w$ avoids $b\d^{y_2}c$ and, thus, $b\d^{y_1}b\d^{y_2}c$. Therefore, if Conditions~1--3 are satisfied, then $w$ avoids $X_2$. 
 
Now, suppose $w$ avoids $X_2$.  We show that $X_2$ satisfies Conditions~1--3. Suppose for a contradiction that Condition~1 does not hold. Then either $m \equiv 2r+1$ for some $r \in [0..p+q]$ or $m \equiv 2r $ for some $r \in [2..p+q]$. Suppose $m \equiv 2r+1 $ for some $r \in [0..p+q]$. Without loss of generality, suppose $w(i)=a$ begins a period of $w$ and, thus, $w(i-2)=b$. Consider $w(i-2+m-1)=w(i+2r-2)$. Since any letter an even distance after the first letter in the period is in $\{a,c\}$, $w$ meets either $a{\d}^{m-2}a$ or $a{\d}^{m-2}c$, a contradiction. Similarly, suppose $m \equiv 2r$ for some $r \in [2..p+q]$ and $w(i)=a$ once again begins a period of $w$. Consider $w(i+m-1)=w(i+2r)$. Since any letter an even distance after the first letter in the period is in $\{a,c\}$, $w$ meets either $a{\d}^{m-2}a$ or $a{\d}^{m-2}c$, again a contradiction. Thus, Condition~1 holds. 

Next, suppose for a contradiction that Condition~2 does not hold.  There are two cases to consider. The first is that $x_1 \equiv 2r$ for some $r \in [0..p+q]$. Suppose $w(i)=a$ begins a period of $w$.  Since any letter an odd distance after the first letter in the period is a $b$, $w(i+x_1+1)=w(i+2r+1)=b$. Furthermore, since $w$ avoids $X_2$, if $w(i)=a$, then $w(i+m-1)=b$.  Thus, $w$ meets $a\d^{x_1}b\d^{x_2}b$, which is a contradiction. The second case is that $x_1+1 \equiv 2r$ for some $r \in (q..p+q)$. Once again, let $w(i)=a$ begin a period of $w$, so $w(i-2q-1)=a$. Then $w(i-2q-1+x_1+1)=w(i+2r-2q-1)$. Since any letter an odd number of spaces after the first $a$ in the period is a $b$, this means $w(i-2q-1+x_1+1)=b$.  Since $w$ avoids $X_2$, $w(i-2q-1+m-1)=b$. Thus $w$ meets $a{\d}^{x_1}b{\d}^{x_2}b$, which is a contradiction. Thus, Condition~2 holds as well.  

Finally, suppose for a contradiction that Condition~3 does not hold. This means $y_1 \not\equiv 2k-1$ for any $k \in [q..p+q+1]$. By Condition~1, $m \equiv 2$ and since $y_1+y_2+1=m-2 \equiv 0$, Condition~3 not holding is equivalent to $y_2+1 \equiv 2r+1$ for some $r\in [0..p+q)$ or $y_2+1 \equiv 2r$ for some $r\in [p+2..p+q]$. Suppose $y_2+1 \equiv 2r+1$ for some $r \in [0..p+q)$.  Now, let $w(i)=c$ be the last letter in a period of $w$. Thus $w(i-(y_2+1))=b$ and $w(i-(m-1))=b$, since any letter an odd number of spaces before the last $c$ in the period is a $b$ and since $w$ avoids $a\d^{m-2}c$ and $c\d^{m-2}c$. This contradicts the assumption that $w$ avoids $b\d^{y_1}b\d^{y_2}c$. Similarly, suppose $y_2+1 \equiv 2r$ for some $r\in [p+2..p+q]$. Let $w(i)=a$ begin a period of $w$ and $w(i-2)=b$. Consider $w(i-2+y_2+1)=w(i+2r-2)$. Since any letter an odd distance at least $2p+2$ spaces after the first letter in the period is a $c$, $w(i-2+y_2+1)=c$. Since $w$ avoids $a{\d}^{m-2}c$ and $c{\d}^{m-2}c$, we have that $w(i-2+y_2+1-(m-1))=b$. This means $w$ meets $b{\d}^{y_1}b{\d}^{y_2}c$, which is a contradiction. Thus, Condition~3 holds. 
\end{proof}

\begin{table}\scriptsize
\caption{Necessary and sufficient conditions for $w$ to avoid sets defined by Eq.~(\ref{conjform}) when $y_1 \le x_2 \le x_1 \le y_2$}\label{tab1}
\begin{center}
\begin{tabular}[t]{l|l}
Avoiding word $w$ & Necessary and sufficient conditions\\ 
\hline
\hline
{$(a^pb^p)^\Z$} &  $m \equiv p+1 \pmod{2p}$\\
& $x_1 \equiv -1 \pmod{2p}$\\ \hline

{$(b^pc^p)^\Z$} & $m \equiv p+1 \pmod{2p}$\\
& $y_1 \equiv p-1 \pmod{2p}$\\ \hline
   
{$((ab)^pa(bc)^q)^\Z$} &  $m \equiv 2 \pmod{2(p+q)+1}$\\
$p \geq 0, q>0$ &  $x_1 \equiv 2j-1 \pmod{2(p+q)+1}$, $j \in [0..q]$\\
 & $y_1 \equiv 2k-1 \pmod{2(p+q)+1}$, $k \in [q..q+p+1]$\\ \hline

{$(ab((ab)^pa(bc)^q)^r)^\Z$} &  $m \equiv 2 \pmod{r(2p+2q+1)+2}$\\
$p \geq 0, q>0$ &  $x_1 \equiv (2p+2q+1)j+2k-1 \pmod{r(2p+2q+1)+2}$\\ 
 & $\hspace{24pt} j \in [0..r],\,k \in [1..r]$\\
 & $y_1 \equiv (2q+2r+1)s+2t+1 \pmod{r(2p+2q+1)+2}$\\ 
 & $\hspace{24pt} s \in [0..r),\,t \in [q..p+q)\cup\{0\}$\\ \hline

{$(((ab)^pa(cb)^q)^r(ab)^pa(cb)^{q-1})^\Z$} & $m \equiv 0 \pmod{(r+1)(2p+2q+1)-2}$\\
$p \geq 0, q>0$ & $x_1 \equiv (2p+2q+1)j+2k \pmod{(r+1)(2p+2q+1)-2}$\\
 $r\geq 0$& $\hspace{24pt} j \in [0..r], \, k \in [p..p+q)$\\
  &  $y_1 \equiv (2p+2q+1)s+2t \pmod{(r+1)(2p+2q+1)-2}$ \\
 & $\hspace{24pt} s \in [0..r], \, t \in [-1..p]$\\ \hline

{$(a^pb^qc^r)^\Z$} & $m \equiv q+1 \pmod{p+q+r}$\\
$1\le p \le q$,  $1 \le r \le q$,& $x_1 \equiv \{p+q-1, \ldots, p+q+r-1\} \pmod{p+q+r}$\\
$q \le p+r$& $y_1 \equiv \{q-1, \ldots, p+q-1\} \pmod{p+q+r}$\\ \hline

{$(a^pc^rb^q)^\Z$} & $m \equiv p+r+1 \pmod{p+q+r}$\\
 $1\le p \le q$,  $1 \le r \le q$, &  $x_1 \equiv \{-1, \ldots, r-1\} \pmod{p+q+r}$\\
 $q \le p+r$& $y_1 \equiv \{r-1, \ldots, p+r-1\} \pmod{p+q+r}$\\ \hline			

{$(a^{p+1}b^{q-1}c^{r}a^{p}b^{q}c^{r-1})^\Z$} & $m \equiv p+r+2q \pmod{2p+2q+2r-1}$\\
$0\le p<q$,  $1 \le r \le q$, & $x_1 \equiv \{p+q-1, \ldots, p+q+r-2, 2p+2q+r-1,$\\
 $q \le p+r$ & $\hspace{24pt} \ldots, 2p+2q+2r-2 \} \pmod{2p+2q+2r-1}$\\
& $y_1 \equiv \{q-1, \ldots, p+q-1, p+2q +r-2,$ \\ 
 & $\hspace{24pt} \ldots, 2p+2q+r-2 \pmod{2p+2q+2r-1}$ \\ \hline

{$(a^{p-1}c^{r+1}b^{q-1}a^{p}c^{r}b^{q})^\Z$} & $m \equiv p+r+1 \pmod{2p+2q+2r-1}$\\
$1\leq p \leq q$,  $0 \le r < q$, & $x_1 \equiv \{1, \ldots,r-1, p+r+q-1,\ldots, p+2r+q-1 \}$\\
 $q \le p+r$ & $\hspace{18pt}  \pmod{2p+2q+2r-1}$\\
& $y_1 \equiv \{r-1, \ldots, p+r-1, 2p+r,\ldots, 2p+2r+q-2 \}$ \\ 
 & $\hspace{18pt}\pmod{2p+2q+2r-1}$ \\ \hline

{$(a^r(b^qc^q)^p)^{\Z}$} &  $m \equiv q+1 \pmod{2pq+r}$\\
$1 \leq r \leq q,\, p>0$ & $x_1 \equiv \{-1,2qj+k\} \pmod{2pq+r}$, \\
 & $\hspace{24pt} j\in [0..p), k \in [q+r-1..2q)$ \\
& $y_1 \equiv q-1 \pmod{2pq+r}$ \\ \hline

{$(a^r(c^qb^q)^p)^{\Z}$} &  $m \equiv -q+1 \pmod{2pq+r}$\\
$1 \leq r \leq q,\, p>0$ & $x_1 \equiv \{-1,2qj+k\} \pmod{2pq+r}$, \\
 & $\hspace{24pt} j\in [0..p), k \in [r-1..r+q-2]$ \\
& $y_1 \equiv \{-q-1, q-1\} \pmod{2pq+r}$ \\ \hline

{$(a^qb^qc^q(c^qb^{2q}c^q)^p)^{\Z}$} &  $m \equiv -2q+1 \pmod{(4p+3)q}$\\
$p \geq 0, q>0$ & $x_1 \equiv \{-1,4qj+k-1\} \pmod{(4p+3)q}$, \\
 & $\hspace{24pt} j\in [0..p], k \in [2q..3q]$ \\
& $y_1 \equiv \{-2q-1, 2q-1\} \pmod{(4p+3)q}$ \\ \hline

{$((a^pc^rb^q)^tb)^\Z$} &  $m \equiv q+2 \pmod{t(p+q+r)+1}$\\
$1 \leq p \leq q, $ &  $x_1 \equiv (2q+1)j+k-1 \pmod{t(p+q+r)+1}$\\ 
$ p+r=q+1$ & $\hspace{24pt} j \in [0..t],\,k \in [1..r]$\\
 & $y_1 \equiv (2q+1)h+i \pmod{t(p+q+r)+1}$\\ 
 & $\hspace{24pt} h \in [0..t],\,i \in [r..q]$\\ 

\end{tabular}
\end{center}
\end{table}
 
The following proposition also provides conditions for $X_2$ to be avoidable. 

\begin{proposition}\label{eveneven}
Let $X_2$ be as defined by Eq.~(\ref{conjform}). Then $X_2$ is avoided by an infinite word of period at most $m$ if one of the following conditions hold:
\begin{enumerate}
\item
$x_1, y_1$ are even and $y_1 \leq x_2 \leq x_1$;
\item
$y_1=0$ and $x_2 \leq x_1$.
\end{enumerate}  
\end{proposition}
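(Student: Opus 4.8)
The plan is to reduce the statement ``$w$ avoids $X_2$'' to a handful of local constraints under the shift by $m-1$, and then, for each of the two conditions, to write down an explicit periodic word of period at most $m$ satisfying those constraints. First I would record the following reduction, valid for any word $w$ over $\{a,b,c\}$. Since $w$ must avoid $a\d^{m-2}a$, $b\d^{m-2}b$, $c\d^{m-2}c$ and $a\d^{m-2}c$, one checks directly that $w(i)=a$ forces $w(i+m-1)=b$, that $w(i)=b$ forces $w(i+m-1)\in\{a,c\}$, and that $w(i)=c$ forces $w(i+m-1)\in\{a,b\}$; in particular $w(i+m-1)\neq w(i)$ always, so the period of $w$ cannot divide $m-1$. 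Using that $w(i)=a$ already forces $w(i+m-1)=b$, avoidance of $a\d^{x_1}b\d^{x_2}b$ is \emph{equivalent} to the single requirement $w(i)=a\Rightarrow w(i+x_1+1)\neq b$, and avoidance of $b\d^{y_1}b\d^{y_2}c$ is equivalent to: there is no $i$ with $w(i)=w(i+y_1+1)=b$ and $w(i+m-1)=c$. So it suffices to build a periodic word, of period at most $m$, meeting these conditions.

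For Condition~1 I would take a word built from alternating blocks, of the shape $w=(a\,(ba)^{x_1/2}\,(cb)^{\ell})^{\Z}$ (with a short correction block inserted when the parity of $x_2$, i.e.\ of $m$, forces $\ell$ to be a half-integer), choosing $\ell$ so that the total length is at most $m$. The point of the hypotheses is transparent in this word: because $x_1$ is even the block $(ba)^{x_1/2}$ is well defined, and the offset $x_1+1$ is odd, so counting $x_1+1$ steps forward from an $a$ lands on a $c$ of the alternating tail, giving $w(i+x_1+1)\neq b$; because $y_1$ is even the offset $y_1+1$ is also odd, and the condition $y_1\le x_2\le x_1$ forces any two $b$'s that are $y_1+1$ apart to have the first one preceded by an $a$ rather than a $c$, which is exactly what prevents completing a $b\d^{y_1}b\d^{y_2}c$. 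The successor constraints $a\mapsto b$, $b\mapsto\{a,c\}$, $c\mapsto\{a,b\}$ hold by inspection of the block structure.

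For Condition~2 the vanishing of $y_1$ simplifies matters: $b\d^{y_1}b\d^{y_2}c$ becomes $bb\d^{y_2}c$, so any word with no factor $bb$ (in particular any word omitting one letter) avoids it automatically, and the only surviving constraint beyond the $T_0$-successor rules is $w(i)=a\Rightarrow w(i+x_1+1)\neq b$. I would split on the $2$-adic behaviour of $x_1+1$ relative to $m-1$. When a suitable odd quotient is available I use the binary word $(a^{d}b^{d})^{\Z}$ with $2d\mid x_1+1$ and $d\mid m-1$, $(m-1)/d$ odd: it has period $2d\le m$, contains no $c$ (so all $c$-words are avoided), satisfies $w(i+m-1)=\overline{w(i)}$ (so $T_0$ is avoided), and satisfies $w(i+x_1+1)=w(i)$ since $2d\mid x_1+1$. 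In the remaining cases I use a short ternary word such as $(a(cb)^{(m-1)/2})^{\Z}$ or $(abc)^{\Z}$, where $x_2\le x_1$ is what guarantees that a residue choice placing the odd offset $x_1+1$ on a $c$ rather than a $b$ exists within period at most $m$.

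The hard part will be the verification of the $b\d^{y_1}b\d^{y_2}c$ condition in Condition~1: unlike the purely $a$-driven constraint, it requires tracking what sits $m-1$ positions after a repeated pair of $b$'s, i.e.\ following the $(ba)$/$(cb)$ junction under the shift, and it is here that both the evenness of $y_1$ and the inequality $y_1\le x_2$ are genuinely used. The secondary nuisance is purely arithmetic bookkeeping: for each residue/parity combination of $x_1$, $y_1$ and $m$ one must confirm that the block lengths can be chosen to keep the period at most $m$, which forces the modest case split indicated above.
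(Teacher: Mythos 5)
The paper states Proposition~\ref{eveneven} without proof, so your proposal must stand on its own; it contains two genuine gaps. The first is in Condition~1: the claim that the successor constraints ``hold by inspection of the block structure'' is false. For a word $(a(ba)^{x_1/2}(cb)^{\ell})^{\Z}$ of period $N=x_1+1+2\ell$, those constraints concern $w(i+m-1)$, and they hold \emph{only} when $m\equiv 0\pmod{N}$: for any odd shift some $a$ is sent to an $a$ or a $c$ (the $2p+1$ images of the $a$'s cannot all fit among the odd $b$-positions), and for any even shift other than $N-1$ some $b$ is sent to a $b$. Since $x_2\le x_1$ forces $N\ge x_1+3>(m-1)/2+2>m/2$, the requirements $N\mid m$, $N\le m$ pin down $N=m$, i.e.\ $2\ell=x_2+2$; choosing the length merely ``at most $m$'' is not enough. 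With that forced choice your word does avoid $X_2$ when $m$ is odd (and the deferred $b\d^{y_1}b\d^{y_2}c$ check goes through precisely because $y_1\le x_1$), but when $m$ is even no word of your shape exists, the ``short correction block'' is never defined, and natural corrections fail (appending a $c$ makes the initial $a$ map onto a $c$ under the shift by $m-1$, meeting $a\d^{m-2}c$). That case needs a different word altogether, though an easy one: $(bc)^{\Z}$ avoids $X_2$ whenever $m$ is even and $y_1$ is even.

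The second gap is fatal as written: your Condition~2 construction fails on concrete instances. Take $m=13$, $x_1=9$, $x_2=1$, $y_1=0$, $y_2=10$. Here $2^{1}\|x_1+1$ and $2^{2}\|m-1$, so no $d$ with $2d\mid x_1+1$, $d\mid m-1$, $(m-1)/d$ odd exists; in fact \emph{no} word over $\{a,b\}$ can avoid $X_2$ at all, since it would avoid $\{a\d^{m-2}a,\,b\d^{m-2}b,\,a\d^{x_1}b\d^{x_2}b\}$ and hence the whole set of Proposition~\ref{harder}(1), which is unavoidable because $r\le s$. Your fallbacks also fail: $(abc)^{\Z}$ and $(acb)^{\Z}$ meet $T_0$ because $3\mid m-1$, and $(a(cb)^{(m-1)/2})^{\Z}=(a(cb)^{6})^{\Z}$ meets $a\d^{9}b{\d}b$, since $w(0)=a$, $w(10)=b$, $w(12)=b$. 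Note also that in this uncovered case $x_1$ is odd, so the offset $x_1+1$ is \emph{even}, the opposite of what you assert. The repair is to reverse the interior blocks and shorten the period: $(a(bc)^{(m-3)/2})^{\Z}$, of period $m-2$, has $b$'s on odd and $c$'s on even positions, so the shift by $m-1\equiv 1\pmod{m-2}$ realizes the successor rules, the even offset $x_1+1$ lands on a $c$ (or on the initial $a$ when $x_2=0$), and $b\d^{y_1}b\d^{y_2}c$ is unreachable because no $b$ is followed by a $b$. Condition~2 then splits cleanly: $(bc)^{\Z}$ for $m$ even, $(a(bc)^{(m-3)/2})^{\Z}$ for $m$ and $x_1$ odd, and the (corrected) Condition~1 word for $m$ odd, $x_1$ even, which is where $x_2\le x_1$ is actually used.
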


Tables~\ref{tab21}, \ref{tab22}, and \ref{tab23} summarize some sufficient conditions for patterns to avoid $X_2$, defined by Eq.~(\ref{conjform}), with respect to residues modulo $2$, $3$, and $4$. 

\begin{table}
\caption{Sufficient conditions on residues modulo $2$ for $w$ to avoid sets defined by Eq.~(\ref{conjform})}\label{tab21}
\begin{center}
\begin{tabular}[t]{l|l|l|l}
Avoiding word $w$ & $m$ & $x_1$ & $y_1$ \\ 
\hline 
\hline
$(ab)^{\Z}$ & $0$ & $1$ & $0,1$ \\ 
\hline
$(bc)^{\Z}$ & $0$ & $0,1$ & $0$ \\ 
\hline
$((ab)^pa(cb)^q)^{\Z}$ & $1$ & $0$ & $0$   
\end{tabular}
\end{center}
\end{table}

\begin{table}
\caption{Sufficient conditions on residues modulo $3$ for $w$ to avoid sets defined by Eq.~(\ref{conjform})}\label{tab22}
\begin{center}
\begin{tabular}[t]{l|l|l|l}
Avoiding word $w$ & $m$ & $x_1$ & $y_1$ \\ 
\hline 
\hline
$(abc)^{\Z}$ & $2$ & $1,2$ & $0,1$ \\ \hline
$(acb)^{\Z}$ & $0$ & $0,2$ & $0,1$ \\ \hline
$(ab(abc)^p)^{\Z}$ & $1$ & $1$ & $0,=1$  \\ \hline
$((acb)^pb)^{\Z}$ & $1$ & $0$ & $=0,1$ 
\end{tabular}
\end{center}
\end{table}

\begin{table}
\caption{Sufficient conditions on residues modulo $4$ for $w$ to avoid sets defined by Eq.~(\ref{conjform})}\label{tab23}
\begin{center}
\begin{tabular}[t]{l|l|l|l}
Avoiding word $w$ & $m$ & $x_1$ & $y_1$ \\ 
\hline 
\hline
$(a^2b^2)^{\Z}$ & $3$ & $3$ & $0,1,2,3$ \\ \hline
$(b^2c^2)^{\Z}$ & $3$ & $0,1,2,3$ & $1$ 
\end{tabular}
\end{center}
\end{table}

If $X_2$, defined by Eq.~(\ref{conjform}), is avoidable, then all sets that contain strengthenings of two of the $T'_0$ words are avoidable. 

\begin{proposition}\label{switcharound}
Let $X_2$ be defined by Eq.~(\ref{conjform}) and let $X_2'=(X_2\setminus \{a{\d}^{x_1}b{\d}^{x_2}b\}) \cup \{a{\d}^{x_2}a{\d}^{x_1}b\}$. Also let $Y_2'=(X_2'\setminus \{a{\d}^{x_2}a{\d}^{x_1}b, b{\d}^{y_1}b{\d}^{y_2}c\}) \cup\{a{\d}^{y_2}b{\d}^{y_1}b,b{\d}^{x_1}c{\d}^{x_2}c\}$ and $Y_2=(Y_2'\setminus\{a{\d}^{y_2}a{\d}^{y_1}b\}) \cup \{a{\d}^{y_1}b{\d}^{y_2}b\}$ be $m$-uniform sets.
\begin{enumerate}
\item
If $X_2$ is avoidable, then $X_2'$ is avoidable.
\item
The sets $X_2'$ and $Y_2'$ have the same avoidability.
\item
If $Y_2'$ is avoidable, then $Y_2$ is avoidable. 
\end{enumerate}
\end{proposition}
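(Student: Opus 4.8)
These three statements are, in order, a switcharound implication, an alphabet symmetry, and an interior-position move; I would dispatch the first two quickly and concentrate on item~3, where all the difficulty lies. The unifying tool for items~1 and~2 is the transition constraint forced by the ``total'' words: since every set in sight contains $a{\d}^{m-2}a$ and $a{\d}^{m-2}c$, any word $w$ avoiding such a set satisfies $w(i)=a \Rightarrow w(i+m-1)=b$ (the position $m-1$ after an $a$ can be neither $a$ nor $c$, hence is $b$).

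For item~1 the plan is to show that the \emph{same} word $w$ avoiding $X_2$ already avoids $X_2'$. Every word of $X_2'$ other than $a{\d}^{x_2}a{\d}^{x_1}b$ also lies in $X_2$, so it suffices to check that $w$ avoids $a{\d}^{x_2}a{\d}^{x_1}b$. Suppose not: say $w(i)=a$, $w(i+x_2+1)=a$, $w(i+m-1)=b$. Put $p=i+x_2+1$. Then $w(p)=a$, and since $(x_2+1)+(x_1+1)=m-1$ we get $w(p+x_1+1)=w(i+m-1)=b$; applying the transition constraint to the second $a$ gives $w(p+m-1)=b$. Hence $w$ meets $a{\d}^{x_1}b{\d}^{x_2}b\in X_2$, a contradiction. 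The argument works precisely because the source word carries \emph{two} $a$'s, each forcing a $b$ at distance $m-1$, and those two forced $b$'s reassemble into the word of $X_2$.

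For item~2 I would use a pure symmetry: reversing a two-sided infinite word and then applying the letter swap $a\leftrightarrow c$ (fixing $b$) preserves avoidability, and I claim this operation maps the set $X_2'$ exactly onto $Y_2'$. Indeed $a{\d}^{m-2}a\mapsto c{\d}^{m-2}c$ and $c{\d}^{m-2}c\mapsto a{\d}^{m-2}a$, while $b{\d}^{m-2}b$ and $a{\d}^{m-2}c$ are fixed; the interior words give $a{\d}^{x_2}a{\d}^{x_1}b\mapsto b{\d}^{x_1}c{\d}^{x_2}c$ and $b{\d}^{y_1}b{\d}^{y_2}c\mapsto a{\d}^{y_2}b{\d}^{y_1}b$. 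Since the image is exactly $Y_2'$ and the operation is avoidability-preserving, $X_2'$ and $Y_2'$ have the same avoidability.

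Item~3 is the crux. Here $Y_2'$ and $Y_2$ differ only by relocating the single interior $b$ of the $a$--$b$ word, from $a{\d}^{y_2}b{\d}^{y_1}b$ to $a{\d}^{y_1}b{\d}^{y_2}b$, with all other words (in particular the $c$-bearing words $c{\d}^{m-2}c$, $a{\d}^{m-2}c$, $b{\d}^{x_1}c{\d}^{x_2}c$) shared; one checks that no relabeling-plus-reversal carries $Y_2'$ to $Y_2$, so this is genuinely new content and not a symmetry. The item~1 trick is unavailable: the source word $a{\d}^{y_1}b{\d}^{y_2}b$ has only one letter with a uniquely forced neighbor (its leading $a$), so meeting it does not force meeting $a{\d}^{y_2}b{\d}^{y_1}b$, and no single-word implication exists. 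I would therefore pass to an explicit construction: starting from a periodic word $w$ avoiding $Y_2'$ (avoidability is always witnessed periodically), I would adapt the avoiding-word transformation behind Theorem~\ref{lastyearswitcharound}, which relocates defined positions of the $a$--$b$ word while preserving avoidance of $a{\d}^{m-2}a$ and $b{\d}^{m-2}b$. The main obstacle, and the step I expect to absorb most of the work, is verifying that this reflection leaves the positions carrying $c$ untouched, so that avoidance of each shared $c$-word is retained while the relocated interior $b$ is avoided. If this proves delicate, I would fall back on the explicit avoiding-word families of Proposition~\ref{avoidingword}, Proposition~\ref{eveneven}, and Table~\ref{tab1}, arguing that the parameter conditions that make $Y_2'$ avoidable already force $Y_2$ to be avoided by one of the same periodic patterns.
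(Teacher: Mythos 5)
First, a point of reference: the paper states Proposition~\ref{switcharound} without proof, so your attempt has to stand on its own; it cannot be checked against an author argument. Your item~1 is correct: since every avoiding word must avoid $a{\d}^{m-2}a$ and $a{\d}^{m-2}c$, each $a$ forces a $b$ at distance $m-1$, and the two $a$'s of $a{\d}^{x_2}a{\d}^{x_1}b$ reassemble a factor compatible with $a{\d}^{x_1}b{\d}^{x_2}b$; this is exactly the style of transfer the paper itself uses in Proposition~\ref{iff}. Item~2 is where you should have paused: the reversal-plus-$(a\leftrightarrow c)$ map does carry $X_2'$ onto $Y_2'$ \emph{as literally printed}, i.e., onto the set containing $a{\d}^{y_2}b{\d}^{y_1}b$. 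But the statement is internally inconsistent --- with that literal $Y_2'$, the word $a{\d}^{y_2}a{\d}^{y_1}b$ removed in the definition of $Y_2$ does not belong to $Y_2'$, so $Y_2$ would have seven elements. The proof of Theorem~\ref{mainresult} shows the intended $Y_2'$ is $(X_0\setminus\{a{\d}^{m-2}b,b{\d}^{m-2}c\})\cup\{a{\d}^{y_2}a{\d}^{y_1}b,b{\d}^{x_1}c{\d}^{x_2}c\}$, with interior letter $a$; under that reading your symmetry produces $a{\d}^{y_2}b{\d}^{y_1}b$ where $Y_2'$ has $a{\d}^{y_2}a{\d}^{y_1}b$, and your item~2 does not close --- the missing step is precisely the interior-letter relocation you defer to item~3.

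The decisive gap is item~3: what you wrote is a plan, not a proof. Adapting Theorem~\ref{lastyearswitcharound} is only named, never executed, and that theorem is a purely binary-alphabet statement; the step you yourself identify as the hard one (controlling the positions carrying $c$) is exactly what is missing. The fallback is not an argument either: the families in Proposition~\ref{avoidingword}, Proposition~\ref{eveneven}, and Table~\ref{tab1} avoid sets of the form of Eq.~(\ref{conjform}) under explicit arithmetic conditions on $m,x_1,y_1$, and you give no reason why avoidability of $Y_2'$ would force the parameters into any of those conditions. The difficulty is genuine: if $w$ avoids $Y_2'$ and meets $a{\d}^{y_1}b{\d}^{y_2}b$ at position $i$, the letter at position $i-y_2-1$ (distance $m-1$ before the interior $b$) is either $a$, which does yield a contradiction with $a{\d}^{y_2}a{\d}^{y_1}b$, or $c$, in which case neither forcing rule ($a\Rightarrow b$ forward, $c\Rightarrow b$ backward) produces a contradiction, so the same-word argument fails. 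What can be salvaged --- and is all that Theorem~\ref{mainresult} actually needs, since there the four sets are quantified over all $x_1,x_2,y_1,y_2>0$ --- is a family-level argument: $Y_2$ is the reversal-plus-swap image of $(X_0\setminus\{a{\d}^{m-2}b,b{\d}^{m-2}c\})\cup\{a{\d}^{x_2}a{\d}^{x_1}b,b{\d}^{y_2}b{\d}^{y_1}c\}$, i.e., of $X_2'$ with $y_1$ and $y_2$ interchanged, so avoidability of all $X_2$-type sets gives, via your item~1 and your symmetry, avoidability of all $Y_2$-type sets. You should either prove that quantified version explicitly or state clearly that the per-instance implication in item~3 remains open in your write-up.
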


Third, Theorems~\ref{2bottom} and \ref{lastyearswitcharound} state that filling in two holes in the same word in $T'_0$ only sometimes makes $X_0$ avoidable. We now prove that once we have filled in three holes in the same word in $T'_0$, $X_0$ becomes avoidable.  

\begin{proposition}\label{3bottom}
If the $m$-uniform set $X_3=(X_0\setminus\{a{\d}^{m-2}b\}) \cup \{x\}$ is unavoidable, where $x\uparrow a{\d}^{m-2}b$, then $x$ has at most two interior defined positions.
\end{proposition}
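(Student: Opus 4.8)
The plan is to establish the contrapositive: if $x\uparrow a\d^{m-2}b$ has at least three interior defined positions, then $X_3$ is avoidable.

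I would begin with two reductions already implicit in the preceding discussion. If any interior position of $x$ holds a letter distinct from $a$ and $b$, then $X_3$ is avoidable by $(a^{m-1}b^{m-1})^{\Z}$, so assume every defined interior position is an $a$ or a $b$. Moreover, any word avoiding $X_3$ can contain no $c$, since the position $m-1$ places before a $c$ would have to avoid each of $a\d^{m-2}c$, $b\d^{m-2}c$, and $c\d^{m-2}c$; hence $X_3$ is avoidable exactly when the binary set $Y=\{a\d^{m-2}a,b\d^{m-2}b,x\}$ is avoidable over $\{a,b\}$, and this is what I would prove. It is worth recording that any $w$ avoiding $\{a\d^{m-2}a,b\d^{m-2}b\}$ satisfies $w(i+m-1)=\overline{w(i)}$ for every $i$, so the only candidate avoiders are the anti-periodic words $(u\overline{u})^{\Z}$ with $u\in\{a,b\}^{m-1}$; for such $w$ a length-$m$ factor is compatible with $x$ only if it begins with $a$, in which case both endpoints of $x$ are matched automatically and only the interior defined positions can witness avoidance.

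The core of the argument is a strengthening reduction. Because strengthenings of avoidable sets are avoidable and $x$ is a strengthening of any word obtained by reverting some interior letters to holes, it suffices to keep exactly two of the interior defined positions of $x$, producing a weakening $x'$, and to show that $\{a\d^{m-2}a,b\d^{m-2}b,x'\}$ can be made avoidable for some such choice; then $Y$ is a strengthening of an avoidable set. Since $x$ has at least three interior letters over a binary alphabet, some two of them are equal, so a monotone length-two interior pattern (one of $aa$, $ab$, $bb$) is always available, and for the interleaved three-letter patterns ($aba$ and, via the reversal-and-renaming symmetry of the set, $bab$) at least two distinct monotone choices exist. For a monotone $x'$, Theorem~\ref{lastyearswitcharound} (applied directly when the interior $a$'s precede the interior $b$'s, and to the reversal-complement otherwise) produces an equi-avoidable set $\{a\d^{m-2}a,b\d^{m-2}b,z\}$ with $z$ carrying exactly two interior $b$'s, whose avoidability is settled by Theorem~\ref{2bottom}.

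The remaining and hardest step is to guarantee that at least one admissible monotone deletion lands in the avoidable regime of Theorem~\ref{2bottom}. Writing the resulting word as $a\d^{x_1}b\d^{x_2}b\d^{x_3}b$ with $2^s\|m-1$, $2^t\|x_1+1$, and $2^r\|x_1+x_2+2$, I must find a deletion for which either $s<t$ or $s<r$, or else none of $x_1=x_2$, $x_1=x_3$, and condition~(iii) of Theorem~\ref{2bottom} holds. This is a finite case analysis organized by the interior pattern of $x$ and by the $2$-adic valuations of the relevant gap sums: distinct single-$b$ deletions merge different pairs of adjacent gaps, and the task is to show that the special ``equal-gap'' and exceptional configurations making Theorem~\ref{2bottom} unavoidable cannot survive across every admissible deletion at once. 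I expect the interleaved patterns $aba$ and $bab$ to be the main obstacle, as they offer the fewest monotone deletions, and because tracking the gap parameters through the normalization of Theorem~\ref{lastyearswitcharound} and into the valuation conditions of Theorem~\ref{2bottom} is where the bookkeeping is most delicate.
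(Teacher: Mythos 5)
Your reduction to the binary alphabet is exactly the paper's: no word avoiding $X_3$ can contain a $c$ (no letter can occupy the position $m-1$ places before a $c$), interior letters other than $a,b$ are killed by $(a^{m-1}b^{m-1})^{\Z}$, so everything comes down to the avoidability of $Y=\{a{\d}^{m-2}a,b{\d}^{m-2}b,x\}$ over $\{a,b\}$. The divergence is in what happens next. The paper closes the binary case in one line by invoking \cite[Corollary~4]{BSChCh}, which asserts outright that $Y$ is avoidable whenever $x$ has more than two interior defined positions. You instead try to re-derive that fact from Theorems~\ref{2bottom} and~\ref{lastyearswitcharound} by a weakening argument: erase interior letters of $x$ down to exactly two and exhibit one erasure landing in the avoidable regime of Theorem~\ref{2bottom}. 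The weakening logic itself is sound ($Y$ strengthens each erased set, and strengthenings of avoidable sets are avoidable), but the step carrying all the mathematical content --- that the unavoidability conditions of Theorem~\ref{2bottom} cannot hold for \emph{every} admissible erasure simultaneously --- is only announced (``finite case analysis,'' ``I must find,'' ``I expect''), never executed. And it is not a routine check. Take interior $b$'s at positions $P$, $3P$, $6P$ with $m-1=7P$: the pair $\{P,3P\}$ satisfies condition~(iii) of Theorem~\ref{2bottom}, the pair $\{P,6P\}$ satisfies condition~(ii), and the pair $\{3P,6P\}$ satisfies condition~(i), so all three ``shape'' conditions hold at once, and avoidability of some erasure is rescued only by the dyadic constraint: writing $2^\tau\|P$, one gets $s=\tau$ from $2^s\|7P$, while any pair containing $6P$ forces $r=\tau+1>s$. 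Your case analysis would have to prove that such valuation coincidences always leave at least one avoidable erasure; nothing in the proposal does so, even for all-$b$ interiors, let alone mixed ones.

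The mixed-interior ``bookkeeping'' you defer is, moreover, genuinely treacherous rather than clerical. Theorem~\ref{lastyearswitcharound} as stated only guarantees equi-avoidability with \emph{some} set $Z$, and the natural guess for $z$ --- reflect each interior $a$ at position $p$ to a $b$ at position $m-1-p$ while keeping interior $b$'s fixed --- is wrong when interior $a$'s and $b$'s coexist. For instance, with $m=13$ and $x'$ having an interior $a$ at position $2$ and an interior $b$ at position $5$, the naive reflection yields interior $b$'s at $\{5,10\}$, which Theorem~\ref{2bottom} declares unavoidable (here $x_1=x_2=4$, $s=2\geq t,r$), whereas the correct equi-avoidable word (obtained by re-basing the window at the last interior $a$) has interior $b$'s at $\{3,10\}$, which is avoidable. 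So a correct execution of your plan must first pin down the exact normalization, which the theorem as quoted in this paper does not supply. In short: your opening reduction matches the paper, but the binary core of the proposition --- which the paper settles with a single citation --- is missing from your proof, and the machinery you propose to replace that citation is both unverified and easy to misapply.
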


\begin{proof}
If more than two positions in $x$ have been filled, we know that they have to be filled with $a$'s or $b$'s otherwise $X_3$ would be avoidable.  However from \cite[Corollary~4]{BSChCh}, $\{a{\d}^{m-2}a$, $b{\d}^{m-2}b, x\}$ can be avoided by an infinite word $w$ over $\{a,b\}$.  This means that $w$ avoids $c{\d}^{m-2}c$, $a{\d}^{m-2}c$, and $b{\d}^{m-2}c$ as well.  Thus $w$ avoids all of $X_3$ and thus $X_3$ is avoidable.
\end{proof}

\subsection{Filling in holes in $T_0$ and $T'_0$}

Since filling in any of the holes in $a{\d}^{m-2}c$ results in an avoidable set, the only strengthenings of $T'_0$ we need to consider are strengthenings of $a{\d}^{m-2}b$ and $b{\d}^{m-2}c$.  Furthermore, in order to preserve unavoidability, we must fill in a word in $T_0$ with the same letter as its two endpoints.  Thus when filling in one hole in $T_0$ and one hole in $T'_0$, there are two possible cases to consider: the endpoints of the $T_0$ word are the same as one of the endpoints of the $T'_0$ word or the endpoints of the $T_0$ word are different from the two endpoints of the $T'_0$ word.  We now focus on the $m$-uniform set 
\begin{equation}
X_2=(X_0\setminus\{a{\d}^{m-2}a,b{\d}^{m-2}c\}) \cup \{a{\d}^{x_1}a{\d}^{x_2}a,b{\d}^{y_1}c{\d}^{y_2}c\}.\label{x2form}
\end{equation}
When considering it, we can assume without loss of generality that $x_1\le x_2$. Indeed, it is easy to show that the $m$-uniform set $X_2$, defined by Eq.~(\ref{x2form}), is avoidable if and only if the $m$-uniform set $X_2'=(X_0\setminus\{a{\d}^{m-2}a, b{\d}^{m-2}c\}) \cup\{a{\d}^{x_2}a{\d}^{x_1}a, b{\d}^{y_1}c{\d}^{y_2}c\}$ is avoidable.
It is also easy to show that if the $m$-uniform set $X_2$, defined by Eq.~(\ref{x2form}), is unavoidable, then so is $Y_2=(X_0\setminus\{a{\d}^{m-2}a,b{\d}^{m-2}c\}) \cup \{a{\d}^{x_1}a{\d}^{x_2}a,b{\d}^{y_2}b{\d}^{y_1}c\}$.

Table~\ref{tab3} gives some of the recurring patterns of words avoiding sets defined by Eq.~(\ref{x2form}). For instance, the last item in Table~\ref{tab3} translates as Proposition~\ref{prop20}.

\begin{table}
\caption{Conditions for $w$ to avoid sets defined by Eq.~(\ref{x2form}); here, $u_q$ denotes the $q$-length prefix of $u$,  $\overline{u_q}$ denotes the complement of $u_q$ where $\overline{b}=c$ and $\overline{c}=b$, and $p,q>0$ are integers such that $p+q=m-1$}\label{tab3}
\begin{center}
\begin{tabular}{l|l}
Conditions  &  Avoiding word $w$ \\ 
\hline
\hline
$m$ even, $y_1$ even & $ (a(bc)^{\frac{m-2}{2}}a(cb)^{\frac{m-2}{2}})^{\mathbb{Z}}$\\ \hline
$m$ odd, $x_1$ even, $y_1$ even & $((ab)^{\frac{m-1}{2}}(ac)^{\frac{m-1}{2}})^{\mathbb{Z}}$\\ \hline
$y_1\le x_1\le x_2\le y_2$ & $(a^pb^qa^pc^q)^{\mathbb{Z}}$ \\ \hline
$y_2\le x_1\le x_2\le y_1$ &  $(a^{p}u_{q}a^{p}\overline{u_{q}})^{\mathbb{Z}}$, where $u=(b^{y_2+1}c^{y_2+1})^{\mathbb{N}}$ \\
\end{tabular}
\end{center}
\end{table}

\begin{proposition}
\label{prop20}
Let $u=(b^{y_2+1}c^{y_2+1})^{\mathbb{N}}$. If $y_2\le x_1\le x_2\le y_1$, there exist integers $p,q>0$, $p+q=m-1$ such that the infinite word $w=v^{\mathbb{Z}}$ avoids $X_2$, defined by Eq.~(\ref{x2form}), where $v=a^{p}u_{q}a^{p}\overline{u_{q}}$ (here, $u_q$ denotes the $q$-length prefix of $u$ and $\overline{u_q}$ denotes the complement of $u_q$, where $\overline{b}=c$ and $\overline{c}=b$).
\end{proposition}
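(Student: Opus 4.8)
The plan is to exhibit the explicit choice $p = x_1 + 1$ and $q = x_2 + 1$, so that $p + q = (x_1 + x_2) + 2 = m - 1$ and $p, q > 0$; the hypotheses $y_2 \le x_1 \le x_2$ then give $y_2 + 1 \le p \le q$. With $v = a^p u_q a^p \overline{u_q}$ the word $w = v^{\mathbb{Z}}$ has period $2(m-1)$, and its two half-periods $a^p u_q$ and $a^p \overline{u_q}$, each of length $m-1$, agree on their $a$-positions and are complementary (swapping $b$ and $c$) on their $b/c$-positions. The first step is to record the resulting shift-by-$(m-1)$ rule: for every $i$, $w(i)=a$ forces $w(i+m-1)=a$, while $w(i)=b$ forces $w(i+m-1)=c$ and $w(i)=c$ forces $w(i+m-1)=b$. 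I would also note that the $a$-positions of $w$ are periodic with period $m-1$, namely $w(i)=a$ exactly when $i \bmod (m-1) \in [0,p)$.

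Using the shift rule, the four words of $X_2$ with two constrained endpoints fall immediately: an occurrence of $a{\d}^{m-2}b$ or $a{\d}^{m-2}c$ would need $w(i)=a$ and $w(i+m-1)\in\{b,c\}$, while an occurrence of $b{\d}^{m-2}b$ or $c{\d}^{m-2}c$ would need $w(i+m-1)=w(i)\in\{b,c\}$; both contradict the rule. For the two three-point words the rule also makes the final endpoint automatic: an occurrence of $a{\d}^{x_1}a{\d}^{x_2}a$ needs only $w(i)=w(i+x_1+1)=a$, and an occurrence of $b{\d}^{y_1}c{\d}^{y_2}c$ needs only $w(i)=b$ and $w(i+y_1+1)=c$. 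Thus the whole problem reduces to two conditions on $w$: (A) no two $a$'s at forward distance $x_1+1$, and (B) no $b$ followed at forward distance $y_1+1$ by a $c$.

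Condition (A) I would verify directly from the period-$(m-1)$ description of the $a$-set: if $w(i)=a$ then $i \bmod (m-1) = r \in [0,p)$, and since $p \le q$ we have $r + p < 2p \le p + q = m-1$, so $(i+x_1+1)\bmod(m-1) = r + p \in [p, m-1)$, which is not an $a$-position. For condition (B) the key move is to rewrite it via $y_1+1 = (m-1)-(y_2+1)$: if $w(i)=b$ then $j := i + m - 1$ satisfies $w(j)=c$ and $i+y_1+1 = j-(y_2+1)$, and the shift rule makes $i \mapsto i+m-1$ a bijection from the $b$-positions onto the $c$-positions. Hence (B) is \emph{equivalent} to the cleaner statement that $w$ has no two $c$'s at forward distance $y_2+1$.

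The main obstacle is this last statement, which I would settle using the block structure of $u = (b^{y_2+1}c^{y_2+1})^{\mathbb{N}}$ together with $y_2+1 \le p$. Two $c$'s of $w$ lying in the same $u_q$-gap inherit the $c$-positions of $u$, where $c$-blocks have length $y_2+1$ and successive $c$-blocks are separated by a $b$-block of the same length, so the forward distance between two $c$'s is at most $y_2$ (within a block) or at least $y_2+2$ (between blocks), never $y_2+1$; the same holds inside a $\overline{u_q}$-gap, whose $c$-positions mirror the $b$-positions of $u$. Two $c$'s lying in consecutive gaps are separated by a full $a$-block of length $p$, hence are at distance at least $p+1 \ge y_2+2 > y_2+1$. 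This exhausts the cases and proves (B); combined with (A) and the four immediate avoidances, $w$ avoids $X_2$. A small instance (e.g. $m=7$, $x_1=x_2=2$, $y_2=1$, $y_1=3$, giving $v = aaabbcaaaccb$) can be used as a sanity check on the bookkeeping.
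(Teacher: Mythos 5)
Your proof is correct, and it is worth noting that the paper itself states Proposition~\ref{prop20} \emph{without} proof (it appears only as the translation of the last row of Table~\ref{tab3}), so your write-up actually supplies an argument the paper omits. I checked the key steps: the choice $p=x_1+1$, $q=x_2+1$ gives $p+q=m-1$ and $y_2+1\le p\le q$; the shift-by-$(m-1)$ rule for $v=a^pu_qa^p\overline{u_q}$ (namely $a\mapsto a$, $b\mapsto c$, $c\mapsto b$) is exactly right and immediately disposes of $b{\d}^{m-2}b$, $c{\d}^{m-2}c$, $a{\d}^{m-2}b$, $a{\d}^{m-2}c$, while reducing $a{\d}^{x_1}a{\d}^{x_2}a$ and $b{\d}^{y_1}c{\d}^{y_2}c$ to your conditions (A) and (B). Condition (A) follows from the period-$(m-1)$ description of the $a$-positions together with $p\le q$, and your reformulation of (B) as ``no two $c$'s at forward distance $y_2+1$'' (via the bijection from $b$-positions onto $c$-positions) plus the block analysis of $u_q$, $\overline{u_q}$, and the separating $a^p$-blocks (using $p\ge y_2+1$) is sound; your sanity-check word $aaabbcaaaccb$ for $m=7$ does avoid the corresponding set. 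One small observation: the hypothesis $x_2\le y_1$ is automatically equivalent to $y_2\le x_1$ since $x_1+x_2=y_1+y_2=m-3$, so using only $y_2\le x_1\le x_2$ loses nothing. Methodologically your argument is in the same spirit as the proofs the paper does print for the neighboring results (Propositions~\ref{avoidingword} and \ref{bottomfour}): a periodic word whose shift by $m-1$ permutes the letters so that the two-endpoint words are avoided for free, with inequalities on block lengths handling the words with an interior defined position.
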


\begin{proposition}\label{iff} Let $X_2$, defined by Eq.~(\ref{x2form}), and $$Y_2'=(X_0\setminus\{a{\d}^{m-2}a, b{\d}^{m-2}c\}) \cup\{a{\d}^{x_1}a{\d}^{x_2}a, b{\d}^{y_2}b{\d}^{y_1}c\}$$ be $m$-uniform sets. Furthermore, let $2^s\|x_1+1$ and $2^t\|y_1+1$.  If $y_1=y_2$ and $s\not=t$, then $X_2$ and $Y_2'$ are unavoidable.
\end{proposition}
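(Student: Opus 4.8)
The plan is to prove the stronger statement that $X_2$ (as in Eq.~(\ref{x2form})) is unavoidable; the unavoidability of $Y_2'$ then follows from the observation recorded just after Eq.~(\ref{x2form}), that unavoidability of $X_2$ forces that of $Y_2'=(X_0\setminus\{a\d^{m-2}a,b\d^{m-2}c\})\cup\{a\d^{x_1}a\d^{x_2}a,b\d^{y_2}b\d^{y_1}c\}$. Since $y_1=y_2$, writing $e=y_1+1$ gives $m-1=2e$, so $m$ is odd and $2^t\|e$. By the remark allowing $x_1\le x_2$ we get $x_1+1\le e$, and since $s\ne t$ forces $x_1\ne y_1$ we in fact have $x_1+1<e$. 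Now suppose toward a contradiction that a two-sided infinite word $w$ avoids $X_2$. Testing the pair $(w(i),w(i+m-1))$ against the four ``endpoint'' words $a\d^{m-2}b,a\d^{m-2}c,b\d^{m-2}b,c\d^{m-2}c$ of $X_2$ shows the only admissible transitions along any step-$(m-1)$ progression are $a\to a$, $b\to\{a,c\}$, $c\to\{a,b\}$. Because $x_1+1+x_2+1=m-1$, the third $a$ of $a\d^{x_1}a\d^{x_2}a$ is forced whenever $w(i)=w(i+x_1+1)=a$; hence $w$ has no two $a$'s at distance $x_1+1$.

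Next I would pass to the step-$e$ progressions $u_r(\ell)=w(r+e\ell)$ for $0\le r<e$. Avoidance of $b\d^{y_1}c\d^{y_2}c$ is precisely the condition that no $u_r$ contains the factor $bcc$, while the step-$2e=(m-1)$ transition rule governs the even- and odd-indexed subsequences of each $u_r$. If some $u_r$ had no $a$, both subsequences would strictly alternate $b,c$, so $u_r$ would be one of the period-$4$ words $(bbcc)^{\infty},(bccb)^{\infty},(ccbb)^{\infty},(cbbc)^{\infty}$, each of which contains $bcc$ — a contradiction. Thus every $u_r$ contains an $a$, and since $a$'s propagate forward along each step-$2e$ subclass (``phase''), every class modulo $e$ is eventually all $a$ in at least one of its two phases. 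No class can be eventually all $a$ in both phases: otherwise, since the class $(r+x_1+1)\bmod e$ is eventually all $a$ in some phase, I could align a far-forward $a$ of the two-phase class with an $a$ of its neighbour at distance exactly $x_1+1$, contradicting the previous paragraph. Hence there is a well-defined map $\sigma:\mathbb{Z}/e\to\mathbb{Z}/2$ recording each class's unique all-$a$ phase, and beyond some position the $a$'s of $w$ are exactly the phase-$\sigma(r)$ positions of each class $r$.

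Finally I would encode ``no two far-forward $a$'s at distance $x_1+1$'' as a recurrence for $\sigma$. Writing $\nu=x_1+1$ and $\mathrm{carry}(r)=1$ if $r+\nu\ge e$ and $0$ otherwise, a phase-$\sigma(r)$ position shifted by $\nu$ lands in class $(r+\nu)\bmod e$ at phase $\sigma(r)+\mathrm{carry}(r)\pmod 2$; avoiding an $a$ there forces $\sigma\big((r+\nu)\bmod e\big)\equiv \sigma(r)+\mathrm{carry}(r)+1\pmod 2$ for every $r$. Summing this identity around one orbit of $r\mapsto(r+\nu)\bmod e$ on $\mathbb{Z}/e$ — an orbit of length $e/d$ carrying exactly $\nu/d$ wraps, where $d=\gcd(x_1+1,e)$ — the left side telescopes to $0$, giving the necessary condition $e/d+(x_1+1)/d\equiv 0\pmod 2$. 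But $2^s\|x_1+1$ and $2^t\|e$ give $2^{\min(s,t)}\|d$, so $e/d$ is odd exactly when $t=\min(s,t)$ and $(x_1+1)/d$ is odd exactly when $s=\min(s,t)$; when $s\ne t$ exactly one of these holds, the sum is odd, and the condition fails. This contradiction shows no avoiding $w$ exists, so $X_2$, and hence $Y_2'$, is unavoidable.

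I expect the main obstacle to be establishing the far-forward phase structure of the second paragraph — proving that each residue class modulo $e$ is eventually all $a$ in exactly one phase and ruling out the two-phase case — since once $\sigma$ is in hand the orbit-sum parity obstruction of the last paragraph is a short $2$-adic computation that cleanly separates the cases $s<t$ and $s>t$ from $s=t$.
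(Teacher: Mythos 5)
Your proof is correct, and it takes a genuinely different route from the paper's. The paper first invokes the avoidability-preserving operations of \cite{BSBrKaPaWe} (factoring, prefix-suffix, hole truncation, expansion) to replace $X_2$ by the reduced set $Y=\{a{\d}^{x_1}a, b{\d}^yb, b{\d}^yc, c{\d}^yb, c{\d}^yc, a{\d}^ya{\d}^yb, a{\d}^ya{\d}^yc, b{\d}^ya{\d}^yb, c{\d}^ya{\d}^yc\}$, argues that a word avoiding $Y$ must be $4(y+1)$-periodic with its $(y+1)$-lattice cycling through $a,b,a,c$, deduces that $w(p(x_1+1)+q(y+1))=a$ exactly when $p$ and $q$ have equal parity, and reaches a contradiction by evaluating $w$ at $\alpha(x_1+1)+\beta(y+1)=2\,\mathrm{lcm}(x_1+1,y+1)$, where $s\ne t$ forces $\alpha$ and $\beta$ to have opposite parities. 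You instead work with $X_2$ directly: the step-$(m-1)$ transition rules, the ``no factor $bcc$ in any step-$e$ progression'' translation of $b{\d}^{y_1}c{\d}^{y_2}c$, and forward propagation of $a$'s at distance $2e$ produce an eventual phase function $\sigma:\Z/e\to\Z/2$, and your orbit-parity sum gives the obstruction $e/d+(x_1+1)/d\equiv 0\pmod 2$, which fails exactly when $s\ne t$. The $2$-adic finish is the same in both proofs (your $e/d$ and $(x_1+1)/d$ are the paper's $\alpha$ and $\beta$), but your structural analysis replaces the reduction machinery and the paper's anchored periodic skeleton with a self-contained ``eventual'' argument; and citing the remark following Eq.~(\ref{x2form}) for the $Y_2'$ part is legitimate, since the paper records exactly that implication there (and in fact re-proves it at the end of its own proof).

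One step you must add: the normalization $x_1\le x_2$ (hence $x_1+1<e$). Swapping $x_1$ and $x_2$ preserves avoidability by the paper's remark, but your hypothesis $2^s\|x_1+1$, $s\ne t$, is not symmetric in $x_1,x_2$, so you need to check that the swapped set still satisfies it. It does: since $x_2+1=2e-(x_1+1)$ and $2^{t+1}\|2e$, the power of $2$ maximally dividing $x_2+1$ is $2^s$ if $s<t$, is $2^{t+1}$ if $s>t+1$, and is at least $2^{t+2}$ if $s=t+1$; in every case with $s\ne t$ it is not $2^t$, as required (note the exponent can change under the swap, so this is not automatic). Alternatively, you can drop the normalization entirely: for $e<\nu<2e$ your argument runs verbatim with $\mathrm{carry}(r)=\lfloor(r+\nu)/e\rfloor\in\{1,2\}$, since the total carry around an orbit is still $\nu/d$, giving the same parity obstruction.
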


\begin{proof}
Suppose $y_1=y_2$ and $s\not=t$.  From now on, we refer to $y_1=y_2$ just as $y$.  By performing the operations of factoring, prefix-suffix, hole truncation, and expansion on $X_2$ from \cite{BSBrKaPaWe}, we obtain the set $$Y=\{a{\d}^{x_1}a, b{\d}^yb, b{\d}^yc, c{\d}^yb, c{\d}^yc, a{\d}^ya{\d}^yb, a{\d}^ya{\d}^yc, b{\d}^ya{\d}^yb, c{\d}^ya{\d}^yc\},$$ which has the same avoidability as $X_2$.

Assume for contradiction that $Y$ is avoidable. This implies there exists an infinite word $w$ that avoids $Y$. It is clear that $w$ cannot contain only $a$'s since $w$ must avoid $a{\d}^{x_1}a$.  Thus, $w$ must contain a $b$ or a $c$.  Without loss of generality let us assume that $w$ contains a $b$ since the argument if $w$ contains a $c$ is identical.  Now without loss of generality, assume $w(y+1)=b$.  Since $w$ avoids $b{\d}^yb$ and $b{\d}^yc$, this means $w(2(y+1))=a$.  Since $w$ avoids $b{\d}^ya{\d}^yb$, $w(3(y+1))\not=b$. If $w(3(y+1))=a$, $w(4(y+1))=a$ since $w$ avoids $a{\d}^ya{\d}^yb$ and $a{\d}^ya{\d}^yc$. But this means that $w(5(y+1))=a$ and so on.  Thus inductively, if $w(3(y+1))=a$, then  $w(p(y+1))=a$ for all $p \geq 2$. If $w(3(y+1))=c$, then $w(4(y+1))=a$ because $w$ must avoid $c{\d}^yb$ and $c{\d}^yc$. Since $w(4(y+1))=a$ and $w$ avoids $c{\d}^ya{\d}^yc$, then either $w$ can degenerate into a repeating string of $a$'s as before, or $w(5(y+1))=b$ and the sequence repeats.  Thus it is easy to see that $w$ must be made up of two possible strings of letters: 
\begin{center}
\item$a\underbrace{}_{y}a\underbrace{}_{y}a\underbrace{}_{y}a\underbrace{}_{y}a\underbrace{}_{y}a,$\\
\item$a\underbrace{}_{y}b\underbrace{}_{y}a\underbrace{}_{y}c\underbrace{}_{y}a\underbrace{}_{y}b.$
\end{center}
Thus $w$ must be $4(y+1)$-periodic. Since the period of $w$ must avoid $a{\d}^{x_1}a$, the period of $w$ cannot contain all $a$'s.  Thus the second string must occur in the period of $w$.

Without loss of generality assume $w(0)=a$, $w(y+1)=b$, $w(2(y+1))=a$, and $w(3(y+1))=c$.  This implies for $k\ge 1$ that $w(k(y+1)=a$ if $k$ is even and $w(k(y+1))\in \{b,c\}$ if $k$ is odd.  

Since $w(0)=a$ and $w$ avoids $a{\d}^{x_1}a$, $w(x_1+1)\in \{b,c\}$.  This means that $w(x_1+y+1)=a$.  Now assume $w(n(x_1+1) + n(y+1))=a$ and consider $w((n+1)(x_1+1) + (n+1)(y+1))$.  Since $w(n(x_1+1) + n(y+1))=a$, $w((n+1)(x_1+1) + n(y+1))\in \{b,c\}$ because $w$ avoids $a{\d}^{x_1}a$.  This means that $w((n+1)(x_1+1) + (n+1)(y+1))=a$.  So by induction,  $w(n(x_1+1) + n(y+1))=a$ for all $n\in \mathbb{N}$. 

Now consider $w(p(x_1+1) + q(y+1))$ for $p,q\in \mathbb{N}$ with one of $p,q$ even and the other odd. We know $p\pm r =q$ for some odd $r\in \mathbb{N}$. Thus, $w(p(x_1+1) + q(y+1))=w(p(x_1+1) + p(y+1) \pm r(y+1))\in \{b,c\}$.

Similarly, if we consider $w(p(x_1+1) + q(y+1))$ for $p,q\in \mathbb{N}$ with both of $p,q$ even or both of $p,q$ odd, $p\pm r =q$ for some even $r\in \mathbb{N}$.  Thus, $w(p(x_1+1) + q(y+1))=w(p(x_1+1) + p(y+1) \pm r(y+1))=a$.

Now, let $l$ be the least common multiple of $x_1+1$ and $y+1$.  Since $s\not=t$ the power of two that maximally divides $l$ is the same as the power of two that maximally divides one of $x_1+1$ and $y+1$ and is greater than the power of two that maximally divides the other.  Thus $l$ is even and $l=\alpha (x_1+1)$ and $l=\beta(y+1)$ where one of $\alpha,\beta$ is odd and the other is even.  This implies $w(\alpha(x_1+1)+\beta(y+1))\in \{b,c\}$.  However, $w(\alpha(x_1+1)+\beta(y+1))=w(2l)=w(2\beta(y+1))=a$, which is a contradiction.

Thus, $Y$ is unavoidable and so is $X_2$. The set $Y_2'$ is then unavoidable. To see this, assume for contradiction that there exists an infinite word $w$ that avoids $Y_2'$.  Since $X_2$ is unavoidable, $w$ must meet an element of $X_2$.  This means $w$ meets $b{\d}^{y_1}c{\d}^{y_2}c$.  Suppose $w(i)=b$,$w(i+y_1+1)=c$, and $w(i+y_1+1+y_2+1)=c$.  Since $w$ avoids $a{\d}^{m-2}c$ and $c{\d}^{m-2}c$, this means $w(i+y_1+1-(m-1))=w(i-(y_2+1))=b$.  Thus, $w(i-(y_2+1))=b$, $w(i)=b$, and $w(i+y_1+1)=c$. This contradicts the fact that $w$ avoids $b{\d}^{y_2}b{\d}^{y_1}c$.
\end{proof}

\section{Minimum number of holes in uniform unavoidable sets}

We now consider the minimum number of holes in an $m$-uniform unavoidable set of size $k+{k\choose2}$ over $A_k$.  To do this, our results from Section~3 prove useful. As discussed in Section~3, there is an unavoidable $m$-uniform set of minimal size for any total order on the alphabet and these sets are equivalent up to renamings of letters. So we define the basic $m$-uniform unavoidable set of minimal size over $A_k$ as $X_0 = {T_0}\cup {T'_0}$, where $T_0 = \{a_i {\d}^{m-2} a_i \mid 1 \leq i \leq k\}$ and $T'_0 = \{a_i {\d}^{m-2} a_j \mid 1 \leq i < j \leq k\}$.    

\begin{proposition}\label{bottomfour}
 Let $X_2=(X_0\setminus\{a_{i_1}{\d}^{m-2} a_{i_2}, a_{i_3} {\d}^{m-2} a_{i_4}\})\cup\{x,y\}$ where the integers $i_1,i_2,i_3,i_4$ are all distinct and where $x\uparrow a_{i_1}{\d}^{m-2}a_{i_2}$ and $y\uparrow a_{i_3}{\d}^{m-2}a_{i_4}$.  If $x$ and $y$ both have at least one defined interior position, then $X_2$ is avoidable. 
 \end{proposition}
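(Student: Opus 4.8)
The plan is to first reduce to the case where each of $x$ and $y$ has exactly one defined interior position, and then produce an explicit avoiding word. For the reduction, let $\tilde{x}$ and $\tilde{y}$ be weakenings of $x$ and $y$ that each retain a single inserted letter. Then $x\succ\tilde{x}$ and $y\succ\tilde{y}$, so $X_2$ is a strengthening of $(X_0\setminus\{a_{i_1}\d^{m-2}a_{i_2},a_{i_3}\d^{m-2}a_{i_4}\})\cup\{\tilde{x},\tilde{y}\}$; since strengthenings of avoidable sets are avoidable, it suffices to avoid the latter. So I may write $x=a_{i_1}\d^{x_1}\mu\d^{x_2}a_{i_2}$ and $y=a_{i_3}\d^{y_1}\nu\d^{y_2}a_{i_4}$ with $x_1+x_2=y_1+y_2=m-3$ and single inserted letters $\mu,\nu$. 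Put $d=m-1$, $e=x_1+1$ and $f=y_1+1$, so $e,f\in\{1,\dots,d-1\}$.

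Next I would reformulate avoidance of the \emph{unchanged} words of $X_2$, namely $T_0$ together with $T'_0\setminus\{a_{i_1}\d^{m-2}a_{i_2},a_{i_3}\d^{m-2}a_{i_4}\}$. Each such word has all-hole interior, so $w$ meets $a_i\d^{m-2}a_j$ exactly when some $t$ has $w(t)=a_i$ and $w(t+d)=a_j$. Hence $w$ avoids every unchanged word iff for all $t$ the ordered pair $(w(t),w(t+d))$ avoids all pairs $(a_i,a_i)$ and all pairs $(a_i,a_j)$ with $i<j$ except the two removed pairs; equivalently, the only permitted distance-$d$ pairs are $(a_{i_1},a_{i_2})$, $(a_{i_3},a_{i_4})$, and every ``decreasing'' pair $(a_i,a_j)$ with $i>j$. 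I then split on whether $\mu\in\{a_{i_1},a_{i_2}\}$ and $\nu\in\{a_{i_3},a_{i_4}\}$. If $\mu\notin\{a_{i_1},a_{i_2}\}$, the word $(a_{i_1}^{\,d}a_{i_2}^{\,d})^{\mathbb{Z}}$ uses only $a_{i_1},a_{i_2}$, so its distance-$d$ pairs are $(a_{i_1},a_{i_2})$ and the decreasing pair $(a_{i_2},a_{i_1})$, both permitted; it contains no other letter, so it meets neither $y$ nor $x$ (the latter needs an interior $\mu$). Symmetrically $(a_{i_3}^{\,d}a_{i_4}^{\,d})^{\mathbb{Z}}$ settles the case $\nu\notin\{a_{i_3},a_{i_4}\}$. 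This leaves the main case $\mu\in\{a_{i_1},a_{i_2}\}$ and $\nu\in\{a_{i_3},a_{i_4}\}$.

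For the main case I would seek a $2d$-periodic word on the four endpoint letters satisfying $w(t+d)=\sigma(w(t))$, where $\sigma$ is the fixed-point-free involution swapping $a_{i_1}\leftrightarrow a_{i_2}$ and $a_{i_3}\leftrightarrow a_{i_4}$. Such a $w$ is specified freely by its values $s_0,\dots,s_{d-1}$ on one block, and its distance-$d$ pairs are exactly $(a_{i_1},a_{i_2}),(a_{i_2},a_{i_1}),(a_{i_3},a_{i_4}),(a_{i_4},a_{i_3})$, all permitted, so it automatically avoids the unchanged words. Since $w(t)=a_{i_1}$ already forces $w(t+d)=a_{i_2}$, avoiding $x$ reduces to the single demand that $w(t)=a_{i_1}$ implies $w(t+e)\neq\mu$, and avoiding $y$ to $w(t)=a_{i_3}\Rightarrow w(t+f)\neq\nu$. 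I color the residues mod $d$ into a set $P$ (carrying $a_{i_1}/a_{i_2}$) and $Q$ (carrying $a_{i_3}/a_{i_4}$). Because $\mu$ stays in $P$'s pair, the $x$-demand is automatic unless both residues $t,t+e$ lie in $P$; likewise the $y$-demand is automatic unless both $t,t+f$ lie in $Q$. Along a maximal $+e$-run inside $P$ the demand forbids exactly one of the two available letters at each step, so it is met by deterministic propagation, provided the run is a path rather than a full cycle; the same applies to $+f$-runs inside $Q$.

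The crux, which I expect to be the main obstacle, is choosing the coloring so that no $+e$-orbit of $\mathbb{Z}/d$ lies entirely in $P$ and no $+f$-orbit lies entirely in $Q$, so that both constraint systems stay acyclic \emph{simultaneously}. The key observation that makes this clean is that the $+e$-orbits are precisely the residue classes modulo $g_e=\gcd(e,d)$ and the $+f$-orbits the classes modulo $g_f=\gcd(f,d)$, and that every orbit has size $d/g_e,\,d/g_f\geq 2$, whence $g_e,g_f\leq d/2$ and therefore $g_e+g_f\leq d$. Coloring the arc $\{0,\dots,g_e-1\}$ as $Q$ and the complementary arc $\{g_e,\dots,d-1\}$ as $P$ then succeeds: the $Q$-arc consists of $g_e$ consecutive residues, hence meets every class mod $g_e$, while the $P$-arc has $d-g_e\geq g_f$ consecutive residues, hence meets every class mod $g_f$. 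Propagating the forced letters along the resulting paths produces a $w$ avoiding $X_2$, completing the proof.
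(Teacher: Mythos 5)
Your proposal is correct, but its core construction genuinely differs from the paper's. The two arguments share the same skeleton: reduce by weakening (strengthenings of avoidable sets are avoidable) to the case where $x$ and $y$ each keep a single interior letter; dispose of the case where that letter differs from both endpoints by a word using only endpoint letters; then build a $2(m-1)$-periodic avoiding word satisfying $w(t+m-1)=\sigma(w(t))$, where $\sigma$ swaps $a_{i_1}\leftrightarrow a_{i_2}$ and $a_{i_3}\leftrightarrow a_{i_4}$ --- note that the paper's two explicit words $(v_pu_q\overline{v_p}\,\overline{u_q})^{\Z}$ and $(a^pc^qb^pd^q)^{\Z}$ both have exactly this property. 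Where you diverge is in how the values on one half-period are chosen. The paper normalizes the inserted letters to the right endpoints ($x=a{\d}^{x_1}b{\d}^{x_2}b$, $y=c{\d}^{y_1}d{\d}^{y_2}d$), splits into eight cases according to the relative order of $x_1,x_2,y_1,y_2$, and checks that one of its two explicit patterns works in each case by choosing $p,q$ with $p+q=m-1$ subject to inequalities such as $q>x_2,\ p>y_2$ or $p\le x_2+1,\ q\le y_2+1$. You instead recast avoidance of $x$ and $y$, under the ansatz, as a constraint-propagation problem on $\mathbb{Z}/(m-1)$, and show the arc coloring $Q=\{0,\ldots,g_e-1\}$, $P=\{g_e,\ldots,d-1\}$ breaks every $+e$-cycle in $P$ and every $+f$-cycle in $Q$ simultaneously because proper divisors of $d$ satisfy $g_e,g_f\le d/2$. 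Your route buys uniformity --- no eight-case analysis, no normalization of $\mu,\nu$ --- and isolates the structural reason avoidance always succeeds; the paper's route buys completely explicit, easily verified avoiding words. One step you should write out in a final version: the claim that along a $+e$-run in $P$ the demand ``forbids exactly one of the two available letters at each step'' must be checked against both halves of the $2d$-period, since positions in the same residue class mod $d$ carry $\sigma$-swapped letters in the two halves; the two resulting implications (one per half) do combine to forbid exactly one value of $s_{r+e}$ given $s_r$, so propagation is well defined, but this is precisely where the involution ansatz does real work and it should not be left implicit.
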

 \begin{proof} 
 If we fill in $a_{i_1}{\d}^{m-2} a_{i_2}$ or $a_{i_3} {\d}^{m-2} a_{i_4}$ with letters different from their endpoints, we know that $X_2$ is avoidable by an infinite word over a ternary alphabet.  Thus, we must fill in  $a_{i_1}{\d}^{m-2} a_{i_2}$ and $a_{i_3} {\d}^{m-2} a_{i_4}$ with letters that are the same as their respective endpoints.  For ease of notation, we let $a_{i_1}=a, a_{i_2}=b, a_{i_3}=c, a_{i_4}=d$. Without loss of generality, assume $x=a{\d}^{x_1}b{\d}^{x_2}b$ and $y=c{\d}^{y_1}d{\d}^{y_2}d$.  Filling in more holes in $x$ and $y$ is just a strengthening of $X_2$.  Furthermore, filling in $a{\d}^{m-2}b$ with an $a$ instead of a $b$ or $c{\d}^{m-2}d$ with a $c$ instead of a $d$ yield an equivalent proof. 
We thus have eight cases:
 \begin{eqnarray}
 &x_1\le y_1\le y_2\le x_2;\label{1}&
 %\end{equation}
 %\begin{equation}
\\ 
&x_1\le y_2\le y_1\le x_2;\label{2}&
% \end{eqnarray}
% \begin{equation}
\\
& y_1\le x_1\le x_2\le y_2;\label{3}&
% \end{equation}
% \begin{equation}
\\
& y_2\le x_1\le x_2\le y_1;\label{4}&
% \end{equation}
% \begin{equation}
\\
& x_2\le y_1\le y_2\le x_1;\label{5}&
% \end{equation}
% \begin{equation}
\\
& x_2\le y_2\le y_1\le x_1;\label{6}&
% \end{equation}
% \begin{equation}
\\
& y_1\le x_2\le x_1\le y_2;\label{7}&
% \end{equation}
% \begin{equation}
\\
& y_2\le x_2\le x_1\le y_1.\label{8}&
 \end{eqnarray}
 
  In any infinite word $w$ that avoids $X_2$, if $w(i)=a$, $w(i+m-1)=b$ and $w(i+2(m-1))=a$ and similarly if $w(i)=c$, $w(i+m-1)=d$ and $w(i+2(m-1))=c$. So let $\overline{a}=b$, $\overline{b}=a$, $\overline{c}=d$, and $\overline{d}=c$.  Furthermore, given a one-sided infinite word $v$, let $v_i$ denote the prefix of $v$ of length $i$. Now, let $v=(a^{x_2+1}b^{x_2+1})^{\mathbb{N}}$ and $u=(c^{y_2+1}d^{y_2+1})^{\mathbb{N}}$.  Define the infinite word $w=(v_{p}u_{q}\overline{v_p}\overline{u_q})^{\mathbb{Z}}$ where $p,q>0$ and $p+q=m-1$.  The word $w$ avoids $X_2$ as long as $q>x_2$ and $p>y_2$.  This is because in $w$, $m-1$ spaces after every $a$ is a $b$ and $m-1$ spaces after every $b$ is an $a$ and similarly for $c$ and $d$.  Furthermore, as long as $q>x_2$ and $p>y_2$, if $w(i)=b$, then $w(i-(m-1))=a$ and if $w(i)=d$, then $w(i-(m-1))=c$.  Since $p+q=x_1+x_2+2=y_1+y_2+2=m-1$, $w$ avoids $X_2$ in Cases (\ref{4}), (\ref{5}), (\ref{6}), and (\ref{8}).
  
Let us now define the infinite word $w'=(a^pc^qb^pd^q)^{\mathbb{Z}}$ for some $p,q>0$ such that $p+q=m-1$.  We claim that $w'$ avoids $X_2$ as long as $p\le x_2+1$ and $q\le y_2+1$.  If $w'(i)=a$, then $w'(i+m-1)=b$, if $w'(i)=b$, then $w'(i+m-1)=a$, and similarly for $c$ and $d$.  Furthermore, as long as $p\le x_2+1$, $w'$ avoids $a{\d}^{x_1}b{\d}^{x_2}b$ and as long as $q\le y_2+1$, $w'$ avoids $c{\d}^{y_1}d{\d}^{y_2}d$.  Since $p+q=x_1+x_2+2=y_1+y_2+2=m-1$, $w'$ avoids $X_2$ in Cases (\ref{1}), (\ref{2}), (\ref{3}), and (\ref{7}).  

We have thus found infinite words that avoid $X_2$ for all eight cases and so $X_2$ is avoidable.
\end{proof}

\begin{proposition}\label{bottomendpoints}
Let $X_1=(X_0\setminus\{a_i{\d}^{m-2}a_{i+p}\}) \cup \{x\}$ where $k \ge i+p\ge i+2$, $x\uparrow a_i{\d}^{m-2}a_{i+p}$, and $x$ has at least one defined interior position.  Then $X_1$ is avoidable.
\end{proposition}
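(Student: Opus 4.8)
The plan is to produce, in every case, an explicit periodic word that avoids $X_1$. Write $a=a_i$ and $c=a_{i+p}$ for the endpoints of the removed word. Since $p\ge 2$ and $i+p\le k$, the index $i+1$ satisfies $i<i+1<i+p\le k$, so $b=a_{i+1}$ is a genuine third letter with $a<b<c$ in the alphabet order; this intermediate letter is what the hypothesis $i+p\ge i+2$ buys us, and it is the heart of the argument. I would split on whether $x$ carries, at some interior position, a letter outside $\{a,c\}$.

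In the first case, $x$ has an interior defined position holding a letter $\ell\notin\{a,c\}$. Here I claim the two-letter word $w_0=(a^{m-1}c^{m-1})^{\Z}$ avoids $X_1$. The point is that $w_0$ has period $2(m-1)$ and that for every $t$ the ordered pair $(w_0(t),w_0(t+m-1))$ is either $(a,c)$ or $(c,a)$; in particular $w_0(t)\ne w_0(t+m-1)$ always. This at once shows that $w_0$ avoids every word $a_j{\d}^{m-2}a_j$ of $T_0$, while the only word of $T'_0$ that $w_0$ can meet is $a{\d}^{m-2}c$ (the condition $j<l$ forces the pair to be $(a,c)$), and that word has been deleted from $X_1$. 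Finally, every length-$m$ factor of $w_0$ uses only $a$ and $c$, so it cannot match $x$, which insists on the letter $\ell$ at an interior position. Thus $w_0$ avoids $X_1$.

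In the remaining case every interior defined letter of $x$ lies in $\{a,c\}$, and here $w_0$ may well meet $x$, so a different word is needed. I would reduce to the ternary alphabet $\{a,b,c\}=\{a_i,a_{i+1},a_{i+p}\}$, where $a{\d}^{m-2}c$ is exactly the extreme word to which the consequence of Proposition~\ref{algo1prop} applies: $x\uparrow a{\d}^{m-2}c$ has an interior defined position, so the ternary set $\{a{\d}^{m-2}a,\,b{\d}^{m-2}b,\,c{\d}^{m-2}c,\,a{\d}^{m-2}b,\,b{\d}^{m-2}c\}\cup\{x\}$ is avoidable, say by a word $w$ over $\{a,b,c\}$.

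It then remains to lift $w$ to the full alphabet. Since $w$ uses only $a_i,a_{i+1},a_{i+p}$, no word of $X_1$ having a defined position outside these three letters can be met by $w$; and the words of $X_1$ built solely from these three letters are precisely $a_i{\d}^{m-2}a_i$, $a_{i+1}{\d}^{m-2}a_{i+1}$, $a_{i+p}{\d}^{m-2}a_{i+p}$, $a_i{\d}^{m-2}a_{i+1}$, $a_{i+1}{\d}^{m-2}a_{i+p}$, and $x$, which are exactly the members of the ternary set above (here one uses $p\ge 2$, so that $a_{i+1}\ne a_{i+p}$, and that $a_i{\d}^{m-2}a_{i+p}$ has been replaced by $x$). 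Hence $w$ avoids all of $X_1$. The step requiring the most care is this lifting, together with the check that the three chosen letters reproduce the ternary $X_0$ structure exactly; the availability of the intermediate letter $a_{i+1}$, i.e.\ the hypothesis $i+p\ge i+2$, is precisely what makes Proposition~\ref{algo1prop} applicable and is where the proof would break down if the endpoints were adjacent.
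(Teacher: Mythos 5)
Your proof is correct and takes essentially the same route as the paper: restrict attention to the sub-alphabet $\{a_i,a_{i+1},a_{i+p}\}$ (which exists precisely because $i+p\ge i+2$), observe that the members of $X_1$ defined over these three letters reproduce exactly the ternary configuration $(X_0\setminus\{a{\d}^{m-2}c\})\cup\{x\}$, and invoke Proposition~\ref{algo1prop} to obtain the avoiding word, which then lifts to $A_k$. Your separate handling of the case where $x$ carries an interior letter outside $\{a_i,a_{i+p}\}$, via $(a_i^{m-1}a_{i+p}^{m-1})^{\Z}$, is a slightly more careful account of a case the paper passes over by assuming $x$ contains only $a$'s and $c$'s, but it is a refinement of the same argument rather than a different one.
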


\begin{proof}
Suppose an infinite word $w$ avoids $X_1$ and contains only the letters $a_i$, $a_{i+1}$, and $a_{i+p}$. If $w(j)=a_i$, then $w(j+m-1)=a_{i+p}$ since $w$ must avoid $a_i{\d}^{m-2}a_i$ and $a_i{\d}^{m-2}a_{i+1}$.  If $w(j)=a_{i+1}$, then $w(j+m-1)=a_i$ since $w$ must avoid $a_{i+1}{\d}^{m-2}a_{i+1}$ and $a_{i+1}{\d}^{m-2}a_{i+p}$.  Finally, if $w(j)=a_{i+p}$, then $w(j+m-1)=a_i$ or $w(j+m-1)=a_{i+1}$ since $w$ must avoid $a_{i+p}{\d}^{m-2}a_{i+p}$. Therefore, the conditions on $a_{i},a_{i+1}$, and $a_{i+p}$ are identical to the conditions on the letters $a,b$, and $c$ when we considered the avoidability over $\{a,b,c\}$ of $\{a{\d}^{m-2}a,b{\d}^{m-2}b,c{\d}^{m-2}c,a{\d}^{m-2}b, b{\d}^{m-2}c,x\}$, where $x\uparrow a{\d}^{m-2}c$ and $x$ contains only $a$'s and $c$'s. Thus, the proof that we can generate such an avoiding word is identical to the proof of Proposition~\ref{algo1prop}.
\end{proof}

To prove our main result, we show that $X_0$ becomes avoidable once we fill in more than $m-1$ holes if $m$ is even and $m$ holes if $m$ is odd. 

\begin{theorem}
\label{mainresult}
 For $m\ge 4$, if Conjecture 1 is true, then the maximum number of holes we can fill into an $m$-uniform unavoidable set of size $k+{k\choose2}$ over $A_k$ is $m-1$ if $m$ is even and $m$ if $m$ is odd.  In other words,  $H^k_{m,k+{k\choose2}}=(k+{k\choose2})(m-2)-(m-1)$ if $m$ is even, and $H^k_{m,k+{k\choose2}}=(k+{k\choose2})(m-2)-m$ if $m$ is odd.
 \end{theorem}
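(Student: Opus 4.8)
The theorem is an \emph{if-and-only-if} in disguise: it asserts an exact threshold on the number of holes that can be filled in $X_0$ while preserving unavoidability, so I must establish both an \emph{achievability} direction (there \emph{is} a strengthening of $X_0$ that fills $m-1$ holes when $m$ is even, or $m$ holes when $m$ is odd, and remains unavoidable) and an \emph{impossibility} direction (\emph{every} strengthening filling more than that many holes is avoidable). The plan is to dispose of the achievability direction quickly and then spend the bulk of the argument on impossibility, which is where all the Section-3 propositions are designed to be used.

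\textbf{Achievability.} For the lower bound on what is fillable, I would exhibit an explicit unavoidable strengthening. The natural candidate is to fill holes only in the $T_0$-words of the form $a_i{\d}^{m-2}a_i$ using the same endpoint letter $a_i$: by Proposition~\ref{topsamedistinct}(1), replacing a single $a_i{\d}^{m-2}a_i$ by $a_i{\d}^{x_1}a_i\cdots a_i{\d}^{x_q}a_i$ keeps the set unavoidable for \emph{any} $q$, so a single $T_0$-word alone can absorb arbitrarily many same-letter fills. To realize the exact count $m-1$ (even) or $m$ (odd) I would instead draw on Theorem~\ref{lastyear} and its binary construction: filling $b$'s into $a{\d}^{m-2}b$ in the pattern of Theorem~\ref{lastyearswitcharound}/Theorem~\ref{2bottom} achieves the parity-dependent maximum on the relevant pair of words, and the remaining $T_0$-words and the untouched $T'_0$-words are carried along without obstructing unavoidability. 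The key point is simply to count that the chosen fill realizes exactly $m-1$ or $m$ filled holes.

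\textbf{Impossibility (the main work).} For the upper bound I must show that filling \emph{more} than the stated number of holes forces avoidability, and the strategy is a case analysis on \emph{where} the extra holes go, reducing each case to a Section-3 result. First, Proposition~\ref{algo1prop} (and its $k$-ary avatar, Proposition~\ref{bottomendpoints}) kills any fill with an interior defined position in a ``spanning'' word $a_i{\d}^{m-2}a_{i+p}$ with $p\ge 2$, so no useful holes can be placed there. Proposition~\ref{bottomfour} kills any configuration where two words with disjoint endpoint-letters each receive an interior fill, so the filled words must share endpoint structure. Propositions~\ref{prop3} and \ref{three} bound fills inside $T_0$, Proposition~\ref{3bottom} caps fills in a single $T'_0$-word at two interior positions, and the two-word $T'_0$ analysis (Eq.~(\ref{conjform}), Propositions~\ref{avoidingword}--\ref{switcharound}) handles cross-fills between $a{\d}^{m-2}b$-type words. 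The role of \textbf{Conjecture~1} is precisely to close the one remaining sub-case, $y_1\le x_2\le x_1\le y_2$, which the unconditional propositions leave open; this is exactly why the theorem is stated conditionally. I would assemble these into a global counting argument: any unavoidable strengthening of $X_0$ must, after removing forbidden placements, confine its filled holes to configurations already characterized in Section~3, and for each the maximal hole-count compatible with unavoidability is governed by the $2$-adic valuations $2^s\|m-1$ versus $2^r\|(x_i+1)$. Summing the per-word maxima and invoking the parity of $m$ (via $2^s\|m-1$, where $s=0$ iff $m$ is even) yields the threshold $m-1$ or $m$.

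\textbf{Expected obstacle.} The hardest part is the bookkeeping that shows the \emph{per-word} maxima from Theorems~\ref{2bottom} and \ref{lastyear} and Propositions~\ref{prop3}, \ref{harder} do not \emph{superadditively} combine across words: a priori one might fill a few holes in several different words and beat the single-word bound. The crux is therefore to prove that once two or more words are simultaneously strengthened in a way that could increase the total fill count, one of Propositions~\ref{bottomfour}, \ref{three}, or \ref{switcharound} (together with Conjecture~1 in the critical ordering) forces avoidability, so the global maximum is attained by concentrating all fills in a configuration whose count is exactly the Theorem~\ref{lastyear} value. Translating that maximum fill-count into the stated hole-count $H^k_{m,k+{k\choose2}}$ is then just the identity (total holes in $X_0$) $=(k+\binom{k}{2})(m-2)$ minus the maximum number filled, which gives the two displayed formulas.
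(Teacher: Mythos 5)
Your skeleton matches the paper's up to the confinement step: like the paper, you use Propositions~\ref{topsamedistinct} and \ref{three} to restrict $T_0$-fills to at most two words, Proposition~\ref{3bottom} to cap a $T'_0$-word at two interior fills, and Propositions~\ref{bottomfour}, \ref{bottomendpoints}, and \ref{switcharound} together with Conjecture~1 to restrict to a single filled $T'_0$-word of the form $a_i{\d}^{m-2}a_{i+1}$. The genuine gap comes after that. You assert that the surviving configurations are ``already characterized in Section~3'' and that any attempt to spread fills across several words is killed by Proposition~\ref{bottomfour}, \ref{three}, or \ref{switcharound}. That is false precisely for the configurations that survive the confinement: two filled $T_0$-words together with one filled $T'_0$-word. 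None of the propositions you cite applies there --- Proposition~\ref{bottomfour} requires two $T'_0$-words with four pairwise distinct endpoint letters, Proposition~\ref{three} requires three filled $T_0$-words, and Proposition~\ref{switcharound} requires two filled $T'_0$-words. The paper splits the surviving configurations into three shapes $Q$, $R$, $S$ according to how many of the two $T_0$-words share an endpoint with the $T'_0$-word, and for $Q$ and $S$ it must build \emph{new} avoiding words inside the proof of the theorem (for instance $(a_j^{n-1}a_i^{m-n})^{\mathbb{Z}}$ when the unfilled holes are non-consecutive, and $(a_j^{n-2}a_ia_ja_i^{m-n-1})^{\mathbb{Z}}$ when they are consecutive) to show that filling $m-2$ holes across the two $T_0$-words plus one hole in the $T'_0$-word already forces avoidability. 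Nothing in your plan produces these constructions, and your fallback --- ``summing per-word maxima governed by $2$-adic valuations'' --- cannot work, because the maximum is not additive across words: each $T_0$-word alone can absorb all $m-2$ of its holes by Proposition~\ref{topsamedistinct}(1), yet the combined configurations $Q$ and $S$ cap out strictly below $m-1$. This interaction is exactly the ``superadditivity'' obstacle you flag as the hardest part, but you leave it unresolved rather than resolving it.

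The achievability direction also needs repair. Proposition~\ref{topsamedistinct}(1) fills holes in only one $T_0$-word, hence at most $m-2<m-1$ of them, and your appeal to Theorem~\ref{lastyear} omits the step that makes it legitimate: to transfer the binary optimum to the $k$-letter set one must show, as the paper does in case $R$, that any word avoiding the filled set is forced \emph{by the untouched words of $X_0$} to use only the two letters $a_i,a_{i+1}$; such a word would then avoid the binary subset, contradicting Theorem~\ref{lastyear}, so the filled $k$-ary set is unavoidable and realizes exactly $m-1$ (respectively $m$) filled holes. Saying the untouched words are ``carried along without obstructing unavoidability'' gets this logic backwards: those words are not passengers, they are what forces avoiding words to be binary, and the same reduction simultaneously supplies the matching upper bound $m-1$ (even $m$) or $m$ (odd $m$) within configuration $R$.
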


\begin{proof}
When we fill in holes in $T_0$, say we fill in a hole in $a_i{\d}^{m-2}a_i$, the letter we fill in must be $a_i$ or else the infinite word $a_{i}^{\mathbb{Z}}$ avoids $X_0$ (see Proposition~\ref{topsamedistinct}). Additionally, filling in holes in more than two words in $T_0$ makes $X_0$ avoidable.   This is because by Proposition \ref{three}, if we fill in holes in three words in $T_0$, there exists an infinite word $w$  that avoids $X_0$ and that contains three distinct letters. Since $w$ does not contain any of the letters that make up the other elements of $X_0$, $w$ avoids all of the elements of $X_0$ and thus $X_0$ is avoidable. Thus we can fill holes into at most two of the words in $T_0$.

Using Proposition~\ref{switcharound} we prove that if Conjecture~1 is true, then filling in holes in two $T'_0$ words that have an endpoint in common makes $X_0$ avoidable. To prove this, it is enough to consider the 3-letter alphabet $\{a,b,c\}$. Let $Z_2=(X_0\setminus \{a{\d}^{m-2}b, b{\d}^{m-2}c\}) \cup \{x,y\}$ where $x\uparrow a{\d}^{m-2}b$, $y\uparrow b{\d}^{m-2}c$, and $x$ and $y$ each have at least one defined interior position.  We show that if Conjecture~1 is true, then $Z_2$ is avoidable. Indeed, we know that if the defined interior letter in either $x$ or $y$ is different from the endpoints of its respective word, then $Z_2$ is avoidable.  Thus, 
\begin{center}
$\begin{array}{ccccc}
X_2 & = & (X_0\setminus \{a{\d}^{m-2}b, b{\d}^{m-2}c\}) &\cup& \{a{\d}^{x_1}b{\d}^{x_2}b, b{\d}^{y_1}b{\d}^{y_2}c\}, \\
X_2' &=& (X_0\setminus \{a{\d}^{m-2}b, b{\d}^{m-2}c\}) &\cup& \{a{\d}^{x_2}a{\d}^{x_1}b, b{\d}^{y_1}b{\d}^{y_2}c\}, \\
Y_2&=&(X_0\setminus \{a{\d}^{m-2}b, b{\d}^{m-2}c\}) &\cup& \{a{\d}^{y_1}b{\d}^{y_2}b, b{\d}^{x_1}c{\d}^{x_2}c\}, \\ 
Y_2'&=&(X_0\setminus \{a{\d}^{m-2}b, b{\d}^{m-2}c\}) &\cup& \{a{\d}^{y_2}a{\d}^{y_1}b, b{\d}^{x_1}c{\d}^{x_2}c\}
\end{array}$ 
\end{center}
represent the only remaining cases to consider. If Conjecture~1 is true, then $X_2$ is avoidable for all $x_1,x_2,y_1,y_2>0$.  However if $X_2$ is avoidable for all $x_1,x_2,y_1,y_2>0$, this implies $X_2'$ is avoidable for all $x_1,x_2,y_1,y_2>0$, which then implies $Y_2'$ is avoidable for all $x_1,x_2,y_1,y_2>0$, which implies $Y_2$ is avoidable for all $x_1,x_2,y_1,y_2>0$.  

Using Proposition \ref{bottomfour}, filling in holes in two $T'_0$ words whose endpoints are all distinct also makes $X_0$ avoidable. Thus we can fill in holes in at most one word in $T'_0$.  Furthermore, we know that the letter we fill in must be the same as one of the endpoints.  From Proposition \ref{bottomendpoints}, the word we fill in must be of the form $a_i{\d}^{m-2}a_{i+1}$ and from Proposition~\ref{3bottom}, we cannot fill in more than two holes in any word in $T'_0$.

Thus if we want to preserve the unavoidability of $X_0$, we can fill in holes in at most two of the $T_0$ words and one of the $T'_0$ words.  Therefore, filling in holes in $X_0$ is equivalent to filling in holes in subsets of $X_0$ of size three, where each subset contains two words from $T_0$ and one word from $T'_0$.  So given a word $u$ in $T'_0$, either none of the two $T_0$ words share endpoints with $u$,  both of the two $T_0$ words share endpoints with $u$, or one of the two $T_0$ words shares an endpoint with $u$. Without loss of generality, these subsets are of three possible forms: 
\begin{center}
$\begin{array}{rcl}
Q&=&\{a_i{\d}^{m-2}a_i, a_j{\d}^{m-2}a_j, a_l{\d}^{m-2}a_{l+1}\},\\
R&=&\{a_i{\d}^{m-2}a_i, a_{i+1}{\d}^{m-2}a_{i+1}, a_i{\d}^{m-2}a_{i+1}\},\\
S&=&\{ a_i{\d}^{m-2}a_i, a_j{\d}^{m-2}a_j, a_i{\d}^{m-2}a_{i+1}\}.
\end{array}$
\end{center} 

We first consider $Q$. Let $Z=(X_0\setminus Q) \cup\{y,z,a_l{\d}^{x_1}d{\d}^{x_2}a_{l+1}\}$ where $y\uparrow a_i{\d}^{m-2}a_i$, $z\uparrow a_j{\d}^{m-2}a_j$, and $d\in\{a_l,a_{l+1}\}$.  We now show that if $h(y)+h(z)=m-2$ then $Z$ is avoidable. So suppose $h(y)+h(z)=m-2$. If $h(y)=0$, then $(a_j^{m-2}a_l a_j^{m-2}a_{l+1})^{\mathbb{Z}}$ avoids $Z$, and similarly, if $h(z)=0$, then $(a_i^{m-2}a_l a_i^{m-2}a_{l+1})^{\mathbb{Z}}$ avoids $Z$. Thus, suppose $h(y), h(z)\ge 1$.  Let $h(y)=n-2$ and $h(z)=m-n$. If the $m-n$ holes in $z$ are not consecutive, then the $(m-1)$-periodic word $(a_j^{n-1}a_i^{m-n})^{\mathbb{Z}}$ avoids $Z$, while if the $m-n$ holes in $z$ are consecutive, then the $(m-1)$-periodic word $(a_j^{n-2}a_ia_ja_i^{m-n-1})^{\mathbb{Z}}$ avoids $Z$. Filling in a second hole in $a_l{\d}^{m-2}a_{l+1}$ for a total of $m$ holes filled is just a strengthening of $Z$ and thus is also avoidable.  Thus, filling in more than $m-1$ holes in $Q$ makes $X_0$ avoidable.

We now consider $S$.  Let $Y=(X_0\setminus S) \cup \{x,y,z\}$ where $x\uparrow a_i{\d}^{m-2}a_i$, $y\uparrow a_j{\d}^{m-2}a_j$, and $z\uparrow a_i{\d}^{m-2}a_{i+1}$.  We show that filling in more than $m-1$ holes in $S$ makes $Y$ avoidable (and thus $X_0$ avoidable). As discussed above, we can assume that $x$ contains only the letter ${a_{i}}$ and $y$ contains only the letter ${a_{j}}$. If $h(y)=0$, then filling in any of the holes in $S$ is equivalent to filling in holes in $R$, which we do below.  Thus, we assume $h(y)\ge 1$.   

Let $Y'=\{x, y, {a_{i}}{\d}^{ x_1}{a_{i}}{\d}^{x_2}{a_{i+1}},{a_{i+1}}{\d}^{m-2}{a_{i+1}},{a_{j}}{\d}^{m-2}{a_{i}},{a_{j}}{\d}^{m-2}{a_{i+1}}\}$. We now show that if $h(x)+h(y)=m-2$, then $Y'$ is avoidable. If $h(x)=0$, then $({a_{j}}^{m-2}{a_{i}}{a_{j}}^{m-2}{a_{i+1}})^{\mathbb{Z}}$ avoids $Y'$. Thus, assume $h(x)\ge 1$.  Let $h(x)=m-n$ and $h(y)=n-2$.  First, suppose the $m-n$ holes in $x$ do not appear in a contiguous block.  Then the $(m-1)$-periodic word $w=({a_{i}}^{n-1}{a_{j}}^{m-n})^{\mathbb{Z}}$ avoids $Y'$.  Since $w$ is $(m-1)$-periodic, it does not meet ${a_{j}}{\d}^{m-2}{a_{i}}$.  Since $w$ does not contain any ${a_{i+1}}$'s it avoids ${a_{i+1}}{\d}^{m-2}{a_{i+1}}$, ${a_{j}}{\d}^{m-2}{a_{i+1}}$, and ${a_{i}}{\d}^{ x_1}{a_{i}}{\d}^{x_2}{a_{i+1}}$. Let $u$ be an $m$-length factor of $w$ such that $u(0)={a_{j}}$.  We know that $u$ contains $n-1$ consecutive occurrences of ${a_{i}}$.  Since $h(y)=n-2$, there is at least one instance where $u$ has an ${a_{i}}$ in a position where $y$ has an ${a_{j}}$.  Thus, $w$ does not meet $y$.  Similarly, let $v$ be an $m$-length factor of $w$ such that $v(0)={a_{i}}$.  This means $v$ contains a contiguous block of $m-n$ ${a_{j}}$'s.  However,  $v\not\uparrow x$ since the holes in $x$ do not form a contiguous block. Now, suppose the $m-n$ holes in $x$ appear in a contiguous block. Then the $(m-1)$-periodic word $w'=({a_{i}}^{n-2}{a_{j}}{a_{i}}{a_{j}}^{m-n-1})^{\mathbb{Z}}$ avoids $Y'$.  It avoids ${a_{j}}{\d}^{m-2}{a_{i}}$, ${a_{j}}{\d}^{m-2}{a_{i+1}}$, ${a_{i}}{\d}^{ x_1}{a_{i}}{\d}^{x_2}{a_{i+1}}$, ${a_{i+1}}{\d}^{m-2}{a_{i+1}}$,  and $y$ for the same reasons that $w$ does.  However, since the $m-n$ holes in $x$ appear in a contiguous block, and there are $m-n$ ${a_{j}}$'s in $w'$ that are not situated in a contiguous block, $w'$ avoids $x$.
Thus we have shown that filling in $m-2$ holes in $T_0$ and filling in a hole with ${a_{i}}$ in ${a_{i}}{\d}^{m-2}{a_{i+1}}$ makes $Y'$ avoidable. If we fill in a second hole with ${a_{i}}$ in ${a_{i}}{\d}^{m-2}{a_{i+1}}$, for a total of $m$ holes filled, this is just a strengthening of the previous case and thus is also avoidable. Furthermore, by Theorem~\ref{lastyearswitcharound} substituting ${a_{i+1}}$'s for the ${a_{i}}$'s would yield the same avoidability.

We finally consider $R$.  Suppose an infinite word $w$ avoids $X=(X_0\setminus R) \cup \{x,y,z\}$ where $x\uparrow a_i{\d}^{m-2}a_i$, $y\uparrow a_{i+1}{\d}^{m-2}a_{i+1}$, and $z\uparrow a_i{\d}^{m-2}a_{i+1}$. Since we want to show that we can fill in $m-1$ holes, suppose at least two of $x,y,z$ have some defined interior positions. We prove that $w$ must be over the binary alphabet $\{a_i, a_{i+1}\}$ by considering two cases. First, suppose $a_i=a_1$ (the proof is similar if $a_{i+1}=a_k$). If $w(m-1)=a_k$, then no letter works for $w(0)$ since $w$ must avoid $a_j{\d}^{m-2}a_k$ for all $j\in \{1,\ldots,k\}$; thus, $w$ does not contain any $a_k$'s.  Similarly if $w(m-1)=a_{k-1}$, then no letter works for $w(0)$ since $w$ must avoid $a_j{\d}^{m-2}a_{k-1}$, for all $j\in \{1,\ldots,k-1\}$, and $w$ does not contain any $a_k$'s; thus, $w$ cannot contain any $a_{k-1}$'s.  We can continue eliminating potential letters from $w$ until we are left with only $a_1$ and $a_2$.  If $w(m-1)=a_2$, then $w(0) \in \{a_1, a_2\}$ depending on which of $x,y,z$ have defined interior positions.  Similarly,  if $w(m-1)=a_1$, then $w(0) \in \{a_1, a_2\}$.  Thus, $w$ is over $\{a_1, a_2\}$. Now, suppose $a_i \not = a_1$ and $a_{i+1} \not = a_k$. If $w(m-1)=a_k$, then as above we can show that $w$ cannot contain any of $a_{i+2},\ldots,a_k$, and if $w(0)=a_1$, that $w$ cannot contain any of $a_1,\ldots,a_{i-1}$.  If $w(0)=a_i$, then $w(m-1) \in \{a_i, a_{i+1}\}$ depending on which of $x,y,z$ have defined interior positions.  Similarly,  if $w(0)=a_{i+1}$, then $w(m-1) \in \{a_i, a_{i+1}\}$. 

We have shown that any infinite word that avoids $X$ must be over the alphabet $\{a_i,a_{i+1}\}$.  Thus, by Theorem~\ref{lastyear}, the maximum number of holes we can fill in $X$ while maintaining the unavoidability property is $m-1$  if $m$ is even and $m$ if $m$ is odd.  
\end{proof}

\section{Conclusion}

In this paper, we have considered $m$-uniform unavoidable sets of partial words over an arbitrary alphabet $A_k=\{a_1, \ldots, a_k\}$. We have formulated a conjecture, Conjecture~1, that states that the sets defined by Eq.~(\ref{conjform}) 
are avoidable when $y_1\le x_2\le x_1\le y_2$ and $a, b, c$ are distinct letters. If Conjecture~1 is true, for $m\geq 4$, we have exhibited a formula that calculates the maximum number of holes we can fill in any $m$-uniform unavoidable set of partial words over $A_k$, while maintaining the unavoidability property.

We believe that Conjecture~1 is true and have tested it for all $m$-uniform sets defined by Eq.~(\ref{conjform}) up to $m=100$ that satisfy $y_1\le x_2 \le x_1\le y_2$.  We have found that these sets are all avoidable.  In fact, all of the sets we tested have an avoiding word with period less than $2m$.  Of the $41,650$ such sets, only $4$ were found to require avoiding words that did not match any of our patterns. Furthermore, only $77$ of the roughly $42$ million sets for $m\leq1000$ are not covered by our patterns. However, we are doubtful that a small number of similar patterns could be shown to cover the remaining cases.

\section*{Acknowledgements}

This material is based upon work supported by the National Science Foundation under Grant Nos. DMS--0754154 and DMS--1060775. The Department of Defense is gratefully acknowledged. We thank Andrew Lohr as well as the referees of preliminary versions of this paper for their very valuable comments and suggestions. A research assignment from the University of North Carolina at Greensboro for the second author is also gratefully acknowledged. Some of this assignment was spent at the IRIF: Institut de Recherche en Informatique Fondamentale of Universit\'{e} Paris-Diderot--Paris 7, France.

\bibliographystyle{eptcs}

\providecommand{\urlalt}[2]{\href{#1}{#2}} 
\providecommand{\doi}[1]{doi:\urlalt{http://dx.doi.org/#1}{#1}}

\end{document}